\documentclass{CSML}

\def\dOi{12(4:12)2016}
\lmcsheading%
{\dOi}
{1--38}
{}
{}
{Dec.~15, 2015}
{Dec.~31, 2016}
{}

\ACMCCS{[{\bf Theory of computation}]: Models of
  Computation---Concurrency---Process Calculi; Semantics and
  reasoning---Program semantics} 

\subjclass{Probabilistic computation,
  Process calculi, Algebraic language theory, Operational semantics,
  Program specifications}

\usepackage{hyperref}
\usepackage{ifdraft}
\usepackage{comment}
\usepackage{txfonts}
\usepackage{latexsym}
\allowdisplaybreaks



\newtheorem{example}[thm]{Example}
\newtheorem{notation}[thm]{Notation}
\newtheorem{remark}[thm]{Remark}

\newcommand{\SOSrule}[2]{\frac{\displaystyle #1}{\displaystyle #2}}

\newcommand{\D}{\textsf{D}}
\newcommand{\T}{\textsf{T}}
\newcommand{\N}{\mathbb{N}} 

\newcommand{\Rgez}{\mathbb{R}_{\ge 0}}

\newcommand{\copyOp}{\textsf{cp}}

\newcommand{\trans}[1][]{\xrightarrow{\, {#1} \, }}
\newcommand{\ntrans}[1][]{\mathrel{{\trans[#1]}\makebox[0em][r]{$\not$\hspace{2ex}}}{\!}}

\newcommand{\openDTerms}{\mathbb{DT}(\Sigma)}
\newcommand{\closedDTerms}{\textsf{DT}(\Sigma)}
\newcommand{\rank}{\mathop{\sf r}}

\newcommand{\openST}{\mathbb{T}}
\newcommand{\openTerms}{\openSTerms \cup \openDTerms}
\newcommand{\openSTerms}{\openST(\Sigma)}
\newcommand{\closedSTerms}{\T(\Sigma)}

\newcommand{\closedTerms}{\T(\Sigma)}
\newcommand{\VarTerm}{\mathop{\textit{Var}}}

\newcommand{\prem}[1]{\textrm{prem}(#1)}
\newcommand{\conc}[1]{\textrm{conc}(#1)}

\newcommand{\SVar}{\mathcal{V}\!_s}
\newcommand{\DVar}{\mathcal{V}\!_d}
\newcommand{\Var}{\mathcal{V}}

\newcommand{\parop}[1]{\mathop{||_{#1}}}
\newcommand{\Act}{A}
\newcommand{\tick}{{\surd}}

\DeclareMathOperator{\der}{der}

\newcommand{\topd}{\ensuremath{\mathbf{1}}}
\newcommand{\botd}{\ensuremath{\mathbf{0}}}
\newcommand{\bisimd}{\ensuremath{\mathbf{d}}}
\newcommand{\bisimddisc}{\ensuremath{\bisimd}}

\DeclareMathOperator{\Kantorovich}{\mathbf{K}}
\DeclareMathOperator{\Hausdorff}{\mathbf{H}}
\DeclareMathOperator{\Bisimulation}{\mathbf{B}}

\newcommand{\dext}{\sqsubseteq}
\newcommand{\nonrecPASig}{\Sigma_{\text{PA}}}
\newcommand{\nonrecPARules}{R_{\text{PA}}}
\newcommand{\nonrecPASpec}{P_{\text{PA}}}
\newcommand{\recPASig}{\Sigma_{\text{PA}^\circlearrowleft}}

\newcommand{\recPASpec}{P_{\text{PA}^\circlearrowleft}}

\newcommand{\procBRP}{\mathit{BRP}}
\newcommand{\procRC}{\mathit{RC}}
\newcommand{\procTV}{\mathit{TV}}
\newcommand{\procCH}{\mathit{CH}}
\newcommand{\procCHprime}{\mathit{CH'}}
\newcommand{\procCHH}{\mathit{CH}_2}

\newcommand{\actAck}{\mathit{ack}}
\newcommand{\actLost}{\mathit{lost}}
\newcommand{\actResOk}{\mathit{res(OK)}}
\newcommand{\actResNok}{\mathit{res(NOK)}}

\begin{document}

\title[Compositional bisimulation metric reasoning with Probabilistic Process Calculi]{Compositional bisimulation metric reasoning with \\ Probabilistic Process Calculi\rsuper*}

\author[D.~Gebler]{Daniel Gebler\rsuper a}
\address{{\lsuper a}VU University Amsterdam (NL)}
\email{e.d.gebler@vu.nl} 

\author[K.~G.~Larsen]{Kim G. Larsen\rsuper b}
\address{{\lsuper b}Aalborg University (DK)}	
\email{kgl@cs.aau.dk}  
\thanks{This research is partially supported by the European FET projects SENSATION and CASSTING and the Sino-Danish Center IDEA4CPS.}	

\author[S.~Tini]{Simone Tini\rsuper c}	
\address{{\lsuper c}University of Insubria (IT)}	
\email{simone.tini@uninsubria.it} 

\keywords{
probabilistic process algebra, 
bisimulation metric semantics,
compositional reasoning,
uniform continuity}

\titlecomment{{\lsuper*}A preliminary version of this paper appeared as~\cite{GLT14}}

\begin{abstract}
We study which standard operators of probabilistic process calculi allow for compositional reasoning with respect to bisimulation metric semantics. We argue that uniform continuity (generalizing the earlier proposed property of non-expansiveness) captures the essential nature of compositional reasoning and allows now also to reason compositionally about recursive processes. We characterize the distance between probabilistic processes composed by standard process algebra operators. Combining these results, we demonstrate how compositional reasoning about systems specified by continuous process algebra operators allows for metric assume-guarantee like performance validation.
\end{abstract}

\maketitle

\section{Introduction} \label{sec:introduction}

Probabilistic process algebras, such as probabilistic CCS~\cite{JYL01,Bar04,DD07}, CSP~\cite{JYL01,Bar04,DGHMZ07b,DL12} and ACP~\cite{And99,And02}, are languages that are employed to describe probabilistic concurrent communicating systems, or probabilistic processes for short. 
Nondeterministic probabilistic transition systems~\cite{Seg95a} combine labeled transition systems~\cite{Kel76} and discrete time Markov chains~\cite{Ste94,HJ94}. They allow us to model separately the reactive system behavior, nondeterministic choices and probabilistic choices.

Behavioral semantics provide formal notions to compare systems. Behavioral equivalences are behavioral semantics that allow us to determine the observational equivalence of systems by abstracting from behavioral details that may be not relevant in a given application context. In essence, behavioral equivalences equate processes that are indistinguishable to any external observer. The most prominent example is bisimulation equivalence~\cite{LS91,SL95,Seg95a}, which provides a well-established theory of the behavior of probabilistic nondeterministic transition systems. 

Recently it became clear that the notion of behavioral equivalence is too strict in the context of probabilistic models. The probability values in those models originate either from observations (statistical sampling) or from requirements (probabilistic specification). Behavioral equivalences such as bisimulation equivalence are binary notions that can only answer the question if two systems behave precisely the same way or not. However, a tiny variation of the probabilities, which may be due to a measurement error or limitations how precise a specified probabilistic choice can be realized in a concrete system, will make these systems behaviorally inequivalent without any further information. 
In practice, many systems are approximately correct. This leads immediately to the question of what is an appropriate notion to measure the quality of the approximation.
The most prominent notion is behavioral metric semantics~\cite{DGJP04,BW05,DCPP06} which provides a behavioral distance that characterizes how far the behavior of two systems is apart.
Bisimulation metrics are the quantitative analogue to bisimulation equivalences and assign to each pair of processes a distance which measures the proximity of their quantitative properties. 
The distances form a pseudometric\footnote{A bisimulation metric is in fact a pseudometric. For convenience we use the term bisimulation metric instead of bisimulation pseudometric.} with bisimilar processes at distance $0$. 

In order to specify and verify systems in a compositional manner, it is necessary that the behavioral semantics is compatible with all operators of the language that describe these systems. For behavioral equivalence semantics there is common agreement that compositional reasoning requires that the considered behavioral equivalence is a congruence with respect to all language operators.
For example, consider a term $f(s_1,s_2)$ which describes a system consisting of subcomponents $s_1$ and $s_2$ that are composed by the binary operator $f$. When replacing $s_1$ with a behaviorally equivalent $s_1'$,  and $s_2$ with a behaviorally equivalent $s_2'$, congruence of the operator $f$ guarantees that the composed system $f(s_1,s_2)$ is behaviorally equivalent to the resulting replacement system $f(s_1',s_2')$.
This implies that equivalent systems are inter-substitutable: Whenever a system $s$ in a language context $C[s]$ is replaced by an equivalent system $s'$, the obtained context $C[s']$ is equivalent to $C[s]$.
The congruence property is important since it is usually much easier to model and study (a set of) small systems and then combine them together rather than to work with a large monolithic system.

However, for behavioral metric semantics there is no satisfactory understanding of which property an operator should satisfy in order to facilitate compositional reasoning. 
Intuitively, what is needed is a formalization of the idea that systems close to each other should be approximately inter-substitutable: Whenever a system $s$ in a language context $C[s]$ is replaced by a close system $s'$, the obtained context $C[s']$ should be close to $C[s]$. 
In other words, there should be some relation between the behavioral distance between $s$ and $s'$ and the behavioral distance between $C[s]$ and $C[s']$.
This ensures that any limited change in the behavior of a subcomponent $s$ implies a smooth and limited change in the behavior of the composed system $C[s]$ (absence of chaotic behavior when system components and parameters are modified in a controlled manner).
Earlier proposals such as non-expansiveness~\cite{DGJP04} and non-extensiveness~\cite{BBLM13b} are only partially satisfactory for non-recursive operators and even worse, they do not allow at all to reason compositionally over recursive processes.
More fundamentally, those proposals are kind of `ad hoc' and do not capture systematically the essential nature of compositional metric reasoning. 

In this paper we consider uniform continuity as a property that generalizes non-extensiveness and non-expansiveness and captures the essential nature of compositional reasoning w.r.t.\ behavioral metric semantics. 
A uniformly continuous binary process operator $f$ ensures that for any non-zero bisimulation distance $\epsilon$ (understood as the admissible tolerance from the operational behavior of the composed process $f(s_1,s_2)$) there are non-zero bisimulation distances $\delta_1$ and $\delta_2$ (understood as the admissible tolerances from the operational behavior of the processes $s_1$ and $s_2$) such that the distance between the composed processes $f(s_1,s_2)$ and $f(s_1',s_2')$ is at most $\epsilon$ whenever the component $s_1'$ (resp.\ $s_2'$) is in distance of at most $\delta_1$ from $s_1$ (resp.\ at most $\delta_2$ from $s_2$). 
Uniform continuity ensures that a small variance in the behavior of the parts leads to a bounded small variance in the behavior of the composed processes.
Since uniformly continuous operators preserve the convergence of sequences, this allows us to approximate composed systems by approximating its subsystems. In summary, uniform continuity allows us to investigate the behavior of systems by disassembling them into their components, analyze at the component level, and then derive properties of the composed system. 
We consider the uniform notion of continuity (technically, the $\delta_i$ depend only on $\epsilon$ and are independent of the concrete systems $s_i$) because we aim at universal compositionality guarantees.
As important notion of uniform continuity we consider Lipschitz continuity which ensures that the ratio between the distance of composed processes and the distance between its parts is bounded.

Our main contributions are as follows:
\begin{enumerate}
	\item We develop for many non-recursive and recursive process operators used in various probabilistic process algebras tight upper bounds on the distance between processes combined by those operators (Sections~\ref{sec:distance_nonrec_procs} and~\ref{sec:distance_rec_procs}). 
	\item We show that non-recursive process operators, esp.\ (nondeterministic and probabilistic variants of) sequential, alternative and parallel composition, allow for compositional reasoning w.r.t.\ the compositionality criteria of non-expansiveness and hence also w.r.t.\ both Lipschitz and uniform continuity (Section~\ref{sec:nonrec_procs}). 
	\item We show that recursive process operators, e.g.\ (nondeterministic and probabilistic variants of) Kleene-star iteration and $\pi$-calculus bang replication, allow for compositional reasoning w.r.t.\ the compositionality criterion of Lipschitz continuity and hence also w.r.t.\ uniform continuity, but not w.r.t.\ non-expansiveness and non-extensiveness (Section~\ref{sec:rec_procs}). 
                \item We discuss the copy operator proposed in~\cite{BIM95,FvGdW12} to specify the fork operation of operating systems as an example of operator allowing for compositional reasoning w.r.t.\ the compositionality criterion of uniform  continuity, but not w.r.t.\ Lipschitz continuity.
	\item We demonstrate the practical relevance of our methods by reasoning compositionally over a network protocol built from uniformly continuous operators. In detail, we show how to derive performance guarantees for the entire system from performance assumptions about individual components. In reverse, we show also how to derive performance requirements on individual components from performance requirements of the complete system (Section~\ref{sec:application}).
\end{enumerate}

\section{Preliminaries}\label{sec:preliminaries}

\subsection{Probabilistic Transition Systems}
We consider transition systems with process terms as states and labeled transitions taking states to distributions over states. 
Process terms are inductively defined by process combinators. 

\begin{defi}[Signature]
A \emph{signature} is a structure $\Sigma = (F, \rank)$, where 
\begin{enumerate}[]
	\item $F$ is a countable set of \emph{operators}, and
	\item $\rank \colon F \to \N$ is a \emph{rank function}.
\end{enumerate}
\end{defi}
The rank function gives by $\rank(f)$ the arity of operator $f$.  
We call operators with arity $0$ \emph{constants}.
If the rank of $f$ is clear from the context we will use the symbol $n$ for $\rank(f)$. 
We may write $f\in\Sigma$ as shorthand for $\Sigma=(F,\rank)$ with $f\in F$.

Terms are defined by structural recursion over the signature.
We assume an infinite set of \emph{state variables} $\SVar$ disjoint from $F$.

\begin{defi}[State terms] \label{def:state_terms}
The set of \emph{state terms} over a signature $\Sigma$ and a set $V \subseteq \SVar$ of state variables, notation $\T(\Sigma, V)$, is the least set satisfying: 
\begin{itemize}
	\item $V \subseteq \T(\Sigma, V)$, and
	\item $f(t_1, \ldots, t_n) \in \T(\Sigma, V)$ whenever $f \in \Sigma$ and $t_1, \ldots, t_n \in \T(\Sigma, V)$.
\end{itemize}
\end{defi}
We write $c$ for $c()$ if $c$ is a constant.
The set of \emph{closed state terms} $\T(\Sigma, \emptyset)$ is abbreviated as $\closedSTerms$. 
The set of \emph{open state terms} $\T(\Sigma, \SVar)$ is abbreviated as $\openSTerms$. 
We may refer to operators in $\Sigma$ as \emph{process combinators}, to state variables in $\SVar$ as \emph{process variables}, and to closed state terms in $\closedSTerms$ as \emph{processes}.

A probability distribution over the set of closed state terms $\closedSTerms$ is a mapping $\pi \colon \closedSTerms \to [0,1]$ with $\sum_{t \in \closedTerms} \pi(t) = 1$ that assigns to each closed term $t \in \closedSTerms$ its respective probability $\pi(t)$. The probability mass of a set of closed terms $T \subseteq \closedSTerms$ in some probability distribution $\pi$ is  given by $\pi(T)=\sum_{t\in T}\pi(t)$.
We denote by $\Delta(\closedSTerms)$ the set of all probability distributions over $\closedSTerms$. 
We let $\pi,\pi'$ range over $\Delta(\closedSTerms)$.

\begin{notation}[Notations for probability distributions]
We denote by $\delta(t)$ with $t \in \closedSTerms$ the \emph{Dirac distribution} defined by $(\delta(t))(t)=1$ and $(\delta(t))(t')=0$ for all $t' \in \closedTerms$ with $t \neq t'$. 
The convex combination $\sum_{i \in I} p_i \pi_i$ of a family $\{\pi_i\}_{i \in I}$ of probability distributions $\pi_i \in \Delta(\closedSTerms)$ with $p_i \in (0,1]$ and $\sum_{i \in I} p_i = 1$ is defined by $(\sum_{i \in I} p_i \pi_i)(t) = \sum_{i \in I} (p_i \pi_i(t))$ for all terms $t \in \closedTerms$. 
The expression $f(\pi_1,\dots,\pi_n)$ with $f \in \Sigma$ and $\pi_i \in \Delta(\closedSTerms)$ denotes the product distribution of $\pi_1,\ldots,\pi_n$ defined by $(f(\pi_1,\dots,\pi_n))(f(t_1,\ldots,t_n)) = \prod_{i=1}^n \pi_i(t_i)$ and $(f(\pi_1,\dots,\pi_n))(t) = 0$ for all terms $t \in \closedTerms$ not in the form $t = f(t_1,\dots,t_n)$. 
For binary operators $f$ we may use the infix notation and write $\pi_1 \, f \, \pi_2$ for $f(\pi_1,\pi_2)$.
\end{notation}

Next, we introduce a language to describe probability distributions.
We assume an infinite set of \emph{distribution variables} $\DVar$ and let $\mu,\nu$ range over $\DVar$. 
We denote by $\Var$ the set of state and distribution variables $\Var = \SVar \cup \DVar$ and let $\zeta,\zeta'$ range over $\Var$.

\begin{defi}[Distribution terms] \label{def:distribution_term}
The set of \emph{distribution terms} over 
a signature $\Sigma$, 
a set of state variables $V_s \subseteq \SVar$ and 
a set of distribution variables $V_d \subseteq \DVar$, 
notation $\D\T(\Sigma, V_s, V_d)$, 
is the least set satisfying: 
\begin{enumerate}
	\item \label{def:DT:var}
		$V_d \subseteq \D\T(\Sigma, V_s, V_d)$, 
	\item \label{def:DT:dirac}
		$\{\delta(t) \mid t \in \T(\Sigma, V_s)\} \subseteq \D\T(\Sigma, V_s, V_d)$, 
	\item \label{def:DT:sum} 
		${\textstyle \sum_{i\in I} p_i \theta_i \in \D\T(\Sigma, V_s, V_d)}$ whenever $\theta_i \in \D\T(\Sigma, V_s, V_d)$ and $p_i \in (0,1]$ with $\sum_{i\in I} p_i = 1$, and
	\item \label{def:DT:prod} 
		$f(\theta_1,\ldots,\theta_n) \in \D\T(\Sigma, V_s, V_d)$ whenever $f \in \Sigma$ and $\theta_1,\ldots,\theta_n \in \D\T(\Sigma, V_s, V_d)$.
\end{enumerate}
\end{defi}
Distribution terms have the following meaning. A \emph{distribution variable} $\mu \in \DVar$ is a variable that takes values from $\Delta(\closedSTerms)$. An \emph{instantiable Dirac distribution} $\delta(t)$ is an expression that takes as value the Dirac distribution $\delta(t')$ when state variables in $t$ are substituted such that $t$ becomes the closed term $t'$. 
Case~\ref{def:DT:sum} allows us to construct convex combinations of distributions. 
Case~\ref{def:DT:prod} lifts structural recursion from state terms to distribution terms.

The set of \emph{closed distribution terms} $\D\T(\Sigma, \emptyset, \emptyset)$ is abbreviated as $\closedDTerms$.
The set of \emph{open distribution terms} $\D\T(\Sigma, \SVar, \DVar)$ is abbreviated as $\openDTerms$.
We write $\theta_1 \oplus_p \theta_2$ for $\sum_{i=1}^2 p_i \theta_i$ with $p_1 = p$ and $p_2=1-p$.
Furthermore, for binary operators $f$ we may use the infix notaion and write $\theta_1\,f\,\theta_2$ for $f(\theta_1,\theta_2)$.

\begin{defi}[Substitution]
A \emph{substitution} is a mapping $\sigma \colon \Var \to \openTerms$ such that $\sigma(x) \in \openSTerms$, if $x\in \SVar$, and $\sigma(\mu) \in \openDTerms$, if $\mu\in \DVar$. A substitution $\sigma$ extends to a mapping from state terms to state terms by $\sigma(f(t_1,\ldots,t_n)) = f(\sigma(t_1),\ldots,\sigma(t_n))$. A substitution $\sigma$ extends to a mapping from distribution terms to distribution terms by 
\begin{enumerate}[label=(\roman*)]
\item $\sigma(\delta(t))=\delta(\sigma(t))$, 
\item $\sigma(\sum_{i\in I} p_i \theta_i) = \sum_{i\in I} p_i \sigma(\theta_i)$, and
\item $\sigma(f(\theta_1,\ldots,\theta_n)) = f(\sigma(\theta_1),\ldots,\sigma(\theta_n))$. 
\end{enumerate}
\end{defi}\medskip

\noindent A substitution $\sigma$ is \emph{closed} if $\sigma(x) \in \closedSTerms$ for all $x \in \SVar$ and $\sigma(\mu) \in \closedDTerms$ for all $\mu \in \DVar$. 
Notice that closed distribution terms denote distributions in $\Delta(\closedSTerms)$.

Probabilistic nondeterministic labelled transition systems~\cite{Seg95a}, PTSs for short, extend labelled transition systems by allowing for probabilistic choices in the transitions.
As state space we will take the set of all closed terms $\closedSTerms$.

\begin{defi}[PTS, \cite{Seg95a}] \label{def:pts}
A \emph{probabilistic nondeterministic labeled transition system} (\emph{PTS}) over the signature $\Sigma$ is given by a triple $(\closedSTerms,\Act,{\trans})$, where:
\begin{itemize}
	\item $\closedSTerms$ is the set of all closed terms over $\Sigma$,
	\item $\Act$ is a countable set of \emph{actions}, and 
	\item ${\trans} \subseteq {\closedSTerms \times \Act \times \Delta(\closedSTerms)}$ is a \emph{transition relation}. 
\end{itemize}
\end{defi}
We call ${(t,a,\pi)} \in {\trans}$ a \emph{transition} from state $t$ to distribution $\pi$ labelled by action $a$.
We write $t \trans[a] \pi$ for ${(t,a,\pi)} \in {\trans}$.
Moreover, we write 
$t \trans[a]$ if there exists some distribution $\pi \in \Delta(\closedSTerms)$ with $t \trans[a] \pi$, and
$t \ntrans[a]$ if there is no distribution $\pi \in \Delta(\closedSTerms)$ with $t \trans[a] \pi$. 
For a closed term $t \in \closedTerms$ and an action $a \in \Act$, let $\mathit{der}(t,a) = \{\pi \in \Delta(\closedSTerms) \mid t \trans[a] \pi \}$ denote the set of all distributions reachable from $t$ by performing an $a$-labeled transition. We call $\mathit{der}(t,a)$ also the \emph{$a$-derivatives} of $t$.

We say that a PTS is \emph{image-finite} if $\der(t,a)$ is finite for each closed term $t$ and action $a$.
In the rest of the paper we assume to deal with image finite PTSs.

\subsection{Bisimulation metric}

Bisimulation metric\footnote{A bisimulation metric is in fact a pseudometric. In line with the literature we use the term bisimulation metric instead of bisimulation pseudometric.}~\cite{DGJP04,BW05,DCPP06} provides a robust semantics for PTSs.
It is the quantitative analogue to bisimulation equivalence and assigns to each pair of states a distance which measures the proximity of their quantitative properties.
The distances form a pseudometric where bisimilar processes are at distance $0$. 

\begin{defi}[Pseudometric over $\closedTerms$]
A function $d \colon \closedSTerms \times \closedSTerms \to [0,1]$ is a \emph{1-bounded pseudometric} if
\begin{itemize}	
	\item $d(t,t)= 0$ for all $t \in \closedSTerms$,
	\item $d(t,t') = d(t',t)$ for all $t,t' \in \closedSTerms$ (symmetry), and
	\item $d(t,t') \le d(t,t'') + d(t'',t')$ for all $t,t',t'' \in \closedSTerms$ (triangle inequality).
\end{itemize}
\end{defi}
We will define later bisimulation metrics as $1$-bounded pseudometrics that measure how much two states disagree on their reactive behavior and their probabilistic choices. 
Note that a pseudometric $d$ permits that $d(t,t')=0$ even if $t$ and $t'$ are different terms (in contrast to a metric $d$). This will allow us to assign distance $0$ to different bisimilar states.
We will provide two (equivalent) characterizations of bisimulation metrics in terms of a coinductive definition pattern and in terms of fixed points.

Both characterizations require the following lattice structure.
Let ${([0,1]^{\closedSTerms \times \closedSTerms},\sqsubseteq)}$ be the complete lattice of functions 
$d \colon \closedSTerms \times \closedSTerms \to [0,1]$
ordered by $d_1 \sqsubseteq d_2$ iff $d_1(t, t') \le d_2(t, t')$ for all $t, t' \in \closedSTerms$. 
Then for each $D \subseteq [0,1]^{\closedSTerms \times \closedSTerms}$ the supremum and infinimum are 
$\sup(D)(t,t') = \sup_{d \in \D}d(t,t')$ and 
$\inf(D)(t,t') = \inf_{d \in \D}d(t,t')$ for all $t, t' \in \closedSTerms$.
The bottom element is the constant zero function $\botd$ given by $\botd(t,t')=0$, and 
the top element is the constant one function $\topd$ given by $\topd(t,t')=1$, for all $t,t' \in \closedSTerms$.

\subsubsection{Metrical lifting}

Bisimulation metric is characterized using the quantitative analogous of the bisimulation game, meaning that two states $t,t' \in \closedSTerms$ at some given distance can mimic each other's transitions and evolve to distributions that are at distance not greater than the distance between the source states. 
Technically, we need a notion that lifts pseudometrics from states to distributions (to capture probabilistic choices).

A $1$-bounded pseudometric on terms $\closedSTerms$ is lifted to a $1$-bounded pseudometric on distributions $\Delta(\closedSTerms)$ by means of the Kantorovich pseudometric~\cite{Den09}. This lifting is the quantitative analogous of the lifting of bisimulation equivalence relations on terms to bisimulation equivalence relations on distributions~\cite{BW01b}.

A \emph{matching}
for a pair of distributions $(\pi,\pi') \in \Delta(\closedSTerms) \times \Delta(\closedSTerms)$ is a distribution over the product state space $\omega \in \Delta(\closedSTerms \times \closedSTerms)$ with left marginal $\pi$, i.e.\ $\sum_{t'\in \closedSTerms} \omega(t,t')=\pi(t)$ for all $t \in \closedSTerms$, and right marginal $\pi'$, i.e.\ $\sum_{t\in \closedSTerms} \omega(t,t')=\pi'(t')$ for all $t' \in \closedSTerms$. 
Let $\Omega(\pi,\pi')$ denote the set of all matchings for $(\pi,\pi')$. 
Intuitively, a matching $\omega \in \Omega(\pi,\pi')$ may be understood as a transportation schedule that describes the shipment of probability mass from $\pi$ to $\pi'$. Historically this motivation dates back to the Monge-Kantorovich optimal transport problem~\cite{Vil08}.

\begin{defi}[Kantorovich lifting] \label{def:KantorovichLifting}
Let $d\colon \closedSTerms \times \closedSTerms \to [0,1]$ be a $1$-bounded pseudometric. The \emph{Kantorovich lifting} of $d$ is a $1$-bounded pseudometric $\Kantorovich(d)\colon \Delta(\closedSTerms) \times \Delta(\closedSTerms) \to [0,1]$ defined by
\[
	\Kantorovich(d)(\pi,\pi') = \min_{\omega \in \Omega(\pi,\pi')} \sum_{t,t'\in \closedSTerms}d(t,t') \cdot \omega(t,t')
\]
for all $\pi,\pi' \in \Delta(\closedSTerms)$.
We call $\Kantorovich(d)$ the \emph{Kantorovich pseudometric} of $d$.
\end{defi}

In order to capture nondeterministic choices, we need to lift pseudometrics on distributions to pseudometrics on sets of distributions.

\begin{defi}[Hausdorff lifting] \label{def:HausdorffLifting}
Let $\hat{d}\colon \Delta(\closedSTerms) \times \Delta(\closedSTerms) \to [0,1]$ be a $1$-bounded pseudometric. The \emph{Hausdorff lifting} of $\hat{d}$ is a $1$-bounded pseudometric $\Hausdorff(\hat{d})\colon P(\Delta(\closedSTerms)) \times P(\Delta(\closedSTerms)) \to [0,1]$ defined by
\[
        \Hausdorff(\hat{d})(\Pi_1,\Pi_2) = \max \left\{ 
\sup_{\pi_1 \in \Pi_1}\inf_{\pi_2 \in \Pi_2} \hat{d}(\pi_1,\pi_2), 
\sup_{\pi_2\in \Pi_2}\inf_{\pi_1\in \Pi_1} \hat{d}(\pi_2,\pi_1) \right\}
\]
for all $\Pi_1,\Pi_2 \subseteq \Delta(\closedSTerms)$, with $\inf \emptyset = 1$, and $\sup \emptyset = 0$.
We call $\Hausdorff(\hat{d})$ the \emph{Hausdorff pseudometric} of $\hat{d}$.
\end{defi}

\subsubsection{Coinductive characterization}

A $1$-bounded pseudometric is a bisimulation metric if for all pairs of terms $t$ and $t'$ each transition of $t$ can be mimicked by a transition of $t'$ with the same label and the distance between the accessible distributions does not exceed the distance between $t$ and $t'$. By means of a \emph{discount factor} $\lambda \in (0,1]$, we allow to specify how much the behavioral distance of future transitions is taken into account~\cite{AHM03,DGJP04}. The discount factor $\lambda=1$ expresses no discount, meaning that the differences in the behavior between $t$ and $t'$ are considered irrespective of after how many steps they can be observed.

\begin{defi}[Bisimulation metric~\cite{DGJP04}] \label{def:bisim_metric}
A $1$-bounded pseudometric $d \colon \closedTerms \times \closedTerms \to [0,1]$ 
is a \emph{$\lambda$-bisimulation metric} with $\lambda \in (0,1]$ if for all terms $t,t'\in \closedSTerms$ with $d(t,t')<1$, if $t \trans[a] \pi$ then there exists a transition $t' \trans[a] \pi'$ for a distribution $\pi' \in \Delta(\closedTerms)$ such that  $\lambda \cdot \Kantorovich(d)(\pi,\pi') \le d(t,t')$.
\end{defi}
We refer to $\lambda \cdot \Kantorovich(d)(\pi,\pi') \le d(t,t')$ as the bisimulation transfer condition.
We call the smallest (w.r.t.\ $\sqsubseteq$) $\lambda$-bisimulation metric~\emph{$\lambda$-bisimilarity metric} \cite{DCPP06} and denote it by the symbol $\bisimddisc$.
We mean by \emph{$\lambda$-bisimulation distance} between $t$ and $t'$ the distance $\bisimddisc(t,t')$. If $\lambda$ is clear from the context, we may refer by 
bisimulation metric, bisimilarity metric and bisimulation distance to 
$\lambda$-bisimulation metric, $\lambda$-bisimilarity metric and $\lambda$-bisimulation distance.
Moreover, we may call the $1$-bisimilarity metric also non-discounting bisimilarity metric.
Bisimilarity equivalence is the kernel of the $\lambda$-bisimilarity metric \cite{DGJP04}, namely  $\bisimddisc(t,t') = 0$ iff $t$ and $t'$ are bisimilar.

\begin{example} \label{ex:coinductive_bisim}
Assume a PTS with transitions ${\trans}={\{s \trans[a] \pi_s, t \trans[a] \pi_t\}}$ whereby 
$\pi_s=0.5\delta(s)+0.5\delta(0)$ and 
$\pi_t=(0.5+\epsilon)\delta(s)+(0.5-\epsilon)\delta(0)$ for some arbitrary $\epsilon \in [0,0.5]$.
Furthermore, assume a $1$-bounded pseudometric $d$ with $d(s,s)=d(0,0)=0$ and $d(s,0)=d(0,s)=1$. 
We have $\Kantorovich(d)(\pi_s,\pi_t)=\epsilon$, by the matching $\omega \in \Omega(\pi_s,\pi_t)$ defined by $\omega(s,s) = 0.5$, $\omega(0,s) = \epsilon$ and $\omega(0,0) = 0.5 -\epsilon$.
Then, $d$ is a bisimulation metric if it satisfies the bisimulation transfer condition $d(s,t) \ge \lambda\Kantorovich(d)(\pi_s,\pi_t) = \lambda \epsilon$. Moreover, the bisimilarity metric assigns the distance $\bisimddisc(t,s)=\lambda\epsilon$.
\end{example}

\subsubsection{Fixed point characterization}

We provide now an alternative characterization of bisimulation metric in terms of prefixed points of an appropriate monotone bisimulation functional~\cite{DCPP06}. Bisimilarity metric is then the least fixed point of this functional.
Moreover, the fixed point approach allows us also to express up-to-$k$ bisimulation metrics which measure the bisimulation distance for only the first $k$ transition steps. 

\begin{defi}[Bisimulation metric functional] \label{def:metric_bisim_functional}
Let $\Bisimulation \colon [0,1]^{\closedSTerms \times \closedSTerms} \to [0,1]^{\closedSTerms \times \closedSTerms}$ be the function defined by
\[
	\Bisimulation(d)(t,t') = \sup_{a\in A} \left\{ \Hausdorff(\lambda \cdot \Kantorovich(d))(\mathit{der}(t,a), \mathit{der}(t',a)) \right\}
\]
for $d \colon \closedSTerms \times \closedSTerms \to [0,1]$ and $t,t' \in \closedSTerms$, with
$(\lambda \cdot \Kantorovich(d))(\pi,\pi')=\lambda \cdot \Kantorovich(d)(\pi,\pi')$.
\end{defi}
It is easy to show that $\Bisimulation$ is a monotone function on $([0,1]^{\closedSTerms \times \closedSTerms},\sqsubseteq)$.
The following Proposition characterizes bisimulation metrics as prefixed points of $\Bisimulation$.

\begin{prop}[\cite{DCPP06}] \label{prop:coinductive_vs_fixpoint_def_bisim_metric}
Let $d \colon \closedSTerms \times \closedSTerms \to [0,1]$ be a $1$-bounded pseudometric. Then $\Bisimulation(d) \sqsubseteq d$ iff $d$ is a bisimulation metric.
\end{prop}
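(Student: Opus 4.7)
The plan is to prove the two implications of the biconditional separately, in each case carefully reconciling the prefixed-point inequality $\Bisimulation(d)(t,t') \le d(t,t')$ with the pointwise transfer condition from Definition~\ref{def:bisim_metric}, while being attentive to the cases $d(t,t')=1$, the convention $\inf\emptyset = 1$, and image-finiteness.

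For the direction ($\Rightarrow$), assume $\Bisimulation(d)\sqsubseteq d$. Fix $t,t'$ with $d(t,t')<1$ and a transition $t\trans[a]\pi$, so $\pi\in\der(t,a)$. Unfolding the definition of $\Bisimulation$ and $\Hausdorff$, I get
$$\sup_{\pi_1\in\der(t,a)}\inf_{\pi_2\in\der(t',a)}\lambda\cdot\Kantorovich(d)(\pi_1,\pi_2)\;\le\;\Bisimulation(d)(t,t')\;\le\; d(t,t')\;<\;1.$$
Instantiating the outer supremum at $\pi$ gives $\inf_{\pi_2\in\der(t',a)}\lambda\cdot\Kantorovich(d)(\pi,\pi_2)\le d(t,t')<1$. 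If $\der(t',a)$ were empty, this infimum would equal $1$ by convention, contradicting the strict inequality; therefore $\der(t',a)\neq\emptyset$, and since we work with image-finite PTSs, the infimum is attained at some $\pi'\in\der(t',a)$ with $\lambda\cdot\Kantorovich(d)(\pi,\pi')\le d(t,t')$. This is precisely the bisimulation transfer condition.

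For the direction ($\Leftarrow$), assume $d$ is a bisimulation metric and fix arbitrary $t,t'$; I need $\Bisimulation(d)(t,t')\le d(t,t')$. If $d(t,t')=1$ there is nothing to show, since $\Bisimulation(d)$ is $1$-bounded. Otherwise $d(t,t')<1$, so the transfer condition applies both to $(t,t')$ and, using symmetry of $d$, to $(t',t)$. Fix an action $a$. For each $\pi\in\der(t,a)$ the transfer condition yields $\pi'\in\der(t',a)$ with $\lambda\cdot\Kantorovich(d)(\pi,\pi')\le d(t,t')$, which gives $\sup_{\pi\in\der(t,a)}\inf_{\pi'\in\der(t',a)}\lambda\cdot\Kantorovich(d)(\pi,\pi')\le d(t,t')$. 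Swapping the roles of $t$ and $t'$ and using symmetry of $\Kantorovich(d)$ bounds the other sup-inf in the same way. Consequently $\Hausdorff(\lambda\cdot\Kantorovich(d))(\der(t,a),\der(t',a))\le d(t,t')$ for every $a$; taking the supremum over $a$ yields $\Bisimulation(d)(t,t')\le d(t,t')$.

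The small subtlety worth flagging is the interaction between $d(t,t')<1$, the convention $\inf\emptyset=1$, and the emptiness of derivative sets: it is exactly this interaction that forces $\der(t,a)$ and $\der(t',a)$ to be simultaneously empty or simultaneously non-empty when $d(t,t')<1$, and that makes image-finiteness deliver an attaining $\pi'$ in the forward direction. Apart from this, the argument is essentially a direct unfolding of the definitions of $\Bisimulation$, $\Hausdorff$, and $\Kantorovich$.
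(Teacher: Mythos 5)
Your proof is correct: both directions are a sound unfolding of the definitions of $\Bisimulation$, $\Hausdorff$ and $\Kantorovich$, and you rightly flag the two points where care is needed — the $\inf\emptyset=1$ convention forcing $\der(t',a)\neq\emptyset$ when $d(t,t')<1$, and image-finiteness guaranteeing that the infimum is attained so that a witnessing $\pi'$ exists. The paper itself imports this proposition from~\cite{DCPP06} without proof, and your argument is exactly the standard one that reference would supply, so there is nothing to contrast.
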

Proposition~\ref{prop:coinductive_vs_fixpoint_def_bisim_metric} provides the fixed point characterization of bisimulation metrics and shows that it coincides with the coinductive characterization of Definition~\ref{def:bisim_metric}.
Since $\Bisimulation$ is a monotone function on the complete lattice $([0,1]^{\closedSTerms \times \closedSTerms},\sqsubseteq)$, we can characterize the bisimilarity metric as least fixed point of $\Bisimulation$.

\begin{prop}[\cite{DCPP06}] \label{prop:bisim_metric_lfp_D}
The bisimilarity metric $\bisimd$ is the least fixed point of $\Bisimulation$.
\end{prop}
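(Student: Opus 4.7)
The plan is to apply the Knaster--Tarski theorem to $\Bisimulation$ on the complete lattice $([0,1]^{\closedSTerms\times\closedSTerms},\sqsubseteq)$ and then identify the resulting least fixed point with the least bisimulation metric. Since $\Bisimulation$ is monotone (remarked immediately after Definition~\ref{def:metric_bisim_functional}), Knaster--Tarski gives a least fixed point
\[
d_\star \;=\; \inf\{\, d\in[0,1]^{\closedSTerms\times\closedSTerms} \mid \Bisimulation(d)\sqsubseteq d \,\}.
\]
The goal is to show $d_\star=\bisimd$.

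First I would establish the easy direction. By Proposition~\ref{prop:coinductive_vs_fixpoint_def_bisim_metric}, every bisimulation metric $d$ is a prefixed point of $\Bisimulation$, so $d_\star\sqsubseteq d$ by the defining property of $\inf$; in particular $d_\star\sqsubseteq\bisimd$. The converse direction, $\bisimd\sqsubseteq d_\star$, reduces via Proposition~\ref{prop:coinductive_vs_fixpoint_def_bisim_metric} to showing that $d_\star$ is itself a bisimulation metric. The condition $\Bisimulation(d_\star)\sqsubseteq d_\star$ is immediate since $d_\star$ is a fixed point, so the remaining content is to prove that $d_\star$ is a $1$-bounded pseudometric.

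For this, the main step is to verify that the operator $\Bisimulation$ preserves the property of being a $1$-bounded pseudometric. Reflexivity and symmetry are straightforward: if $d(t,t)=0$ then the Dirac matching $\omega(t,t)=\pi(t)$ witnesses $\Kantorovich(d)(\pi,\pi)=0$, and symmetry of $d$ transfers to $\Kantorovich(d)$ and then to $\Hausdorff(\lambda\Kantorovich(d))$. The triangle inequality for $\Kantorovich(d)$ is obtained by composing optimal matchings (gluing lemma for couplings), the triangle inequality for $\Hausdorff$ is standard, and the supremum over actions preserves all three pseudometric axioms. Hence $\Bisimulation$ restricts to a monotone endofunction on the complete sublattice $P$ of $1$-bounded pseudometrics (which is closed under arbitrary $\sup$ and $\inf$ computed pointwise). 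By Knaster--Tarski applied to $P$, $\Bisimulation$ has a least fixed point $d_P\in P$, and by the above argument $d_P=\bisimd$. Since $d_P$ is also a fixed point of $\Bisimulation$ on the whole lattice, $d_\star\sqsubseteq d_P=\bisimd$; combined with $d_P=\bisimd\sqsupseteq d_\star$ (a prefixed point in $P$ is in particular a prefixed point in the ambient lattice), we conclude $d_\star=\bisimd$.

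The step I expect to require the most care is showing that $\Bisimulation$ preserves the triangle inequality, because it involves chaining a coupling attaining $\Kantorovich(d)(\pi_1,\pi_2)$ with one attaining $\Kantorovich(d)(\pi_2,\pi_3)$ into a coupling of $(\pi_1,\pi_3)$; this is the gluing/disintegration argument that underlies the Kantorovich--Rubinstein duality, and it is the one nontrivial metric-theoretic input of the proof. Everything else is routine lattice-theoretic bookkeeping once this closure property is in place.
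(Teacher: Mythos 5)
The paper offers no proof of this proposition---it is imported from~\cite{DCPP06} and the surrounding text only records the two ingredients (monotonicity of $\Bisimulation$ on the complete lattice $([0,1]^{\closedSTerms\times\closedSTerms},\sqsubseteq)$ and Proposition~\ref{prop:coinductive_vs_fixpoint_def_bisim_metric}), so your attempt can only be measured against the standard argument. Your overall plan is the right one, and the direction $d_\star\sqsubseteq\bisimd$ is fine, but there is a genuine gap in the converse direction, in two places. First, the set $P$ of $1$-bounded pseudometrics is \emph{not} closed under pointwise infima: the triangle inequality fails for pointwise $\inf$ (take $d_1,d_2$ on three points with $d_1(a,b)$ and $d_2(b,c)$ small but $d_1(b,c),d_2(a,b),d_1(a,c),d_2(a,c)$ large). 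So $P$ is a complete lattice in its own right (meets are the largest pseudometric below the family, not the pointwise $\inf$) but not a complete sublattice, and the least fixed point $d_P$ of $\Bisimulation$ restricted to $P$ is a priori only an upper bound for $d_\star$. Second, and more seriously, your closing paragraph derives $d_\star=d_P$ from two inequalities that are both $d_\star\sqsubseteq d_P$: ``$d_\star\sqsubseteq d_P=\bisimd$'' and ``$d_P=\bisimd\sqsupseteq d_\star$'' are the same statement. The substantive direction $d_P\sqsubseteq d_\star$---equivalently $\bisimd\sqsubseteq d_\star$, equivalently the claim that $d_\star$ is a pseudometric, which you correctly identified as ``the remaining content''---is never actually proved, so the argument is circular exactly where it matters.

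The repair is to abandon the Knaster--Tarski $\inf$-characterization for this step and compute $d_\star$ by iteration from below: $\botd$ is a pseudometric, $\Bisimulation$ preserves pseudometrics (your verification of this, via the diagonal coupling, symmetry, and the gluing argument for the triangle inequality of $\Kantorovich(d)$, is the correct nontrivial input), and pointwise suprema of \emph{increasing chains} of pseudometrics are again pseudometrics. Hence every (transfinite) iterate $\Bisimulation^\alpha(\botd)$ is a pseudometric, and since the least fixed point of a monotone map on a complete lattice is reached by this iteration, $d_\star$ is a pseudometric; under the paper's image-finiteness assumptions $\omega$ steps suffice and $d_\star=\lim_k\bisimddisc_k$. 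With that, Proposition~\ref{prop:coinductive_vs_fixpoint_def_bisim_metric} makes $d_\star$ a bisimulation metric, giving $\bisimd\sqsubseteq d_\star$ and closing the proof.
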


Moreover, the fixed point approach allows us to define a notion of bisimulation distance that considers only the first $k$ trasnsition steps. 

\begin{defi}[Up-to-$k$ bisimilarity metric] \label{def:bisim_metric_uptok}
We define the \emph{up-to-$k$ bisimilarity metric} $\bisimddisc_k$ for $k \in \N$ by $\bisimddisc_k = \Bisimulation^k(\botd)$.
\end{defi}
We call $\bisimddisc_k(s,t)$ the up-to-$k$ bisimulation distance between $s$ and $t$.

If the PTS is image-finite and, moreover, for each transition $t \trans[a] \pi$ we have that the support of $\pi$ is finite, then $\Bisimulation$ is monotone and continuous, which ensures that the closure ordinal of $\Bisimulation$ is $\omega$~\cite{vB12}-Section 3. As a consequence, up-to-$k$ bisimulation distances converge to the bisimulation distances when $k \to \infty$, which opens the door to show properties of the bisimulation metric by using a simple inductive argument \cite{vB12}. 
\begin{prop}[\cite{vB12}]
Assume an image-finite PTS s.t.\ for each transition $t \trans[a] \pi$ we have that the distribution $\pi$ has finite support. Then $\bisimddisc = \lim_{k \to \infty} \bisimddisc_k$.
\end{prop}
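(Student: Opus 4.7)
The plan is to apply Kleene's fixed point theorem. By Proposition~\ref{prop:bisim_metric_lfp_D}, $\bisimddisc$ is the least fixed point of the monotone operator $\Bisimulation$ on the complete lattice $([0,1]^{\closedSTerms \times \closedSTerms}, \sqsubseteq)$. Since $\bisimddisc_k = \Bisimulation^k(\botd)$ and $\botd \sqsubseteq \Bisimulation(\botd)$, monotonicity of $\Bisimulation$ makes $(\bisimddisc_k)_{k \in \N}$ a non-decreasing chain, so its pointwise limit exists and equals its pointwise supremum. It therefore suffices to show that $\Bisimulation$ is $\omega$-continuous, i.e.\ preserves suprema of non-decreasing $\omega$-chains: Kleene then yields $\bisimddisc = \sup_k \Bisimulation^k(\botd) = \sup_k \bisimddisc_k = \lim_{k \to \infty} \bisimddisc_k$.

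To establish $\omega$-continuity, fix a non-decreasing chain $d_0 \sqsubseteq d_1 \sqsubseteq \ldots$ in $[0,1]^{\closedSTerms \times \closedSTerms}$ with pointwise supremum $d^\star$, and a pair of terms $t, t'$. Unfolding Definition~\ref{def:metric_bisim_functional}, three commutations with $\sup_k$ are needed: the outer $\sup_{a \in A}$, the Hausdorff lifting, and the Kantorovich lifting. The outer supremum commutes with $\sup_k$ for free, since $\sup_a \sup_k = \sup_k \sup_a$ holds generally. The Hausdorff lifting commutes with monotone suprema because image-finiteness forces the sets $\der(t,a)$ and $\der(t',a)$ in Definition~\ref{def:HausdorffLifting} to be finite, reducing the outer $\max$ and the inner $\inf$'s to $\max$/$\min$ over finite sets; one then picks a witness achieving each extremum for $d^\star$ and compares term-by-term with the $d_k$'s to obtain continuity.

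The crux, and the main obstacle I anticipate, is $\omega$-continuity of the Kantorovich lifting for distributions with finite support: for such $\pi,\pi'$ I must show $\Kantorovich(d^\star)(\pi,\pi') = \sup_k \Kantorovich(d_k)(\pi,\pi')$. The direction $\sqsupseteq$ follows from monotonicity of $\Kantorovich$. For $\sqsubseteq$, finite support makes the matching set $\Omega(\pi,\pi')$ a compact convex polytope in $\mathbb{R}^{|\mathrm{supp}(\pi)| \cdot |\mathrm{supp}(\pi')|}$, and $\omega \mapsto \sum_{t,t'} d^\star(t,t') \omega(t,t')$ is linear, hence the minimum defining $\Kantorovich(d^\star)(\pi,\pi')$ is attained at some $\omega^\star$. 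Since $\omega^\star$ is a valid matching for every $d_k$ as well, $\Kantorovich(d_k)(\pi,\pi') \le \sum_{t,t'} d_k(t,t') \omega^\star(t,t')$, and the right-hand side is a \emph{finite} sum of non-decreasing numerical sequences, hence converges to $\sum_{t,t'} d^\star(t,t') \omega^\star(t,t') = \Kantorovich(d^\star)(\pi,\pi')$. The reliance on finite support is essential: without it one has neither the compactness argument for the attained minimum nor the interchange of $\sup_k$ with the sum. Combining the three commutations gives $\Bisimulation(d^\star)(t,t') = \sup_k \Bisimulation(d_k)(t,t')$ pointwise, completing the proof via Kleene as above.
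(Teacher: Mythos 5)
Your overall strategy --- Kleene's fixed point theorem for the monotone functional $\Bisimulation$, with image-finiteness taming the Hausdorff lifting and finite support taming the Kantorovich lifting --- is exactly the route the paper points to (it gives no proof of its own, deferring to~\cite{vB12} with the remark that $\Bisimulation$ is continuous and hence has closure ordinal $\omega$), and the reduction of everything to $\omega$-continuity of the two liftings is the right decomposition. However, the step you yourself identify as the crux contains a directional error. Taking $\omega^\star$ optimal for $d^\star$ and observing that it is feasible for every $d_k$ gives $\Kantorovich(d_k)(\pi,\pi')\le\sum_{t,t'}d_k(t,t')\,\omega^\star(t,t')$, and the right-hand side does converge to $\Kantorovich(d^\star)(\pi,\pi')$; but from ``$A_k\le B_k$ and $B_k\to B$'' one can only conclude $\sup_k A_k\le B$, i.e.\ $\sup_k\Kantorovich(d_k)(\pi,\pi')\le\Kantorovich(d^\star)(\pi,\pi')$ --- the direction you already had from monotonicity. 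Nothing in the argument establishes the hard inequality $\Kantorovich(d^\star)(\pi,\pi')\le\sup_k\Kantorovich(d_k)(\pi,\pi')$.

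The repair still uses your finite-dimensional polytope, but differently. Either note that, the objective being linear, the minimum over $\Omega(\pi,\pi')$ is attained at one of the finitely many vertices $V$ of the transportation polytope (a set independent of $d$), so that $\Kantorovich(d)(\pi,\pi')=\min_{\omega\in V}\sum_{t,t'}d(t,t')\,\omega(t,t')$ is a minimum of finitely many non-decreasing sequences; a minimum over a \emph{finite} index set commutes with a supremum of non-decreasing sequences (by pigeonhole some $\omega\in V$ is optimal for infinitely many $k$, or argue by contradiction choosing a single $K$ beyond which every vertex value exceeds a putative bound). Or, take optimal matchings $\omega_k$ for each $d_k$, extract a convergent subsequence $\omega_{k_j}\to\omega^{\dagger}$ by compactness, and for each fixed $m$ bound $\Kantorovich(d_{k_j})(\pi,\pi')\ge\sum_{t,t'}d_m(t,t')\,\omega_{k_j}(t,t')\to\sum_{t,t'}d_m(t,t')\,\omega^{\dagger}(t,t')$ before letting $m\to\infty$. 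Be aware that the same issue infects your Hausdorff step: ``picking the witness achieving each extremum for $d^\star$'' works for the outer maxima but not for the inner infima, where you again need the finite-min/monotone-sup exchange just described, with image-finiteness supplying the finiteness.
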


\subsubsection{Properties of bisimulation metrics}

We give now an important property of bisimulation metrics that will be essential for the argumentation later in the technical sections.

The bisimulation distance between states $t$ and $t'$ measures the difference of the reactive behavior of $t$ and $t'$ (i.e.\ which actions can or cannot be performed) along their evolution. 
An important distinction is if two states can perform the same initial actions. 
In this case, the behavioral distance is given by the bisimulation game on the derivatives.
Otherwise, the two states get the maximal distance of $1$ assigned since there is a transition by one of these states that cannot be mimicked by the other state.

We say that states $t$ and $t'$ \emph{do not totally disagree} if $\bisimddisc(t,t') < 1$. If states do not totally disagree, then they agree on which actions they can perform immediately. 

\begin{prop} \label{prop:immidiate_reactive_behavior_vars_distance_less_one}
Let $d \colon \closedSTerms \times \closedSTerms \to [0,1]$ be a $1$-bounded pseudometric. Then
\begin{enumerate}
	\item \label{prop:immidiate_reactive_behavior_vars_distance_less_one:bisim_functional}
		$\Bisimulation(d)(t,t') < 1$ implies $t \trans[a] \ \Leftrightarrow\  t '\trans[a]$ for all $a \in \Act$,
	\item \label{prop:immidiate_reactive_behavior_vars_distance_less_one:bisim_metric}
		$d(t,t')<1$ implies $t \trans[a] \ \Leftrightarrow\  t '\trans[a]$ for all $a \in \Act$, if $d$ is a bisimulation metric.
\end{enumerate}
\end{prop}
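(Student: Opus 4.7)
\medskip
\noindent
\textbf{Proof plan.}
The plan is to handle (1) directly from the definitions of $\Bisimulation$, the Hausdorff lifting and the empty-set conventions, and then to derive (2) as a simple corollary via Proposition~\ref{prop:coinductive_vs_fixpoint_def_bisim_metric}.

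For part (1), I argue by contraposition. Suppose the biconditional fails for some $a \in \Act$, say $t \trans[a]$ but $t' \ntrans[a]$ (the symmetric case is analogous). Then $\der(t,a) \neq \emptyset$ while $\der(t',a) = \emptyset$. Pick any $\pi_1 \in \der(t,a)$; by the convention $\inf\emptyset = 1$ used in Definition~\ref{def:HausdorffLifting}, we have $\inf_{\pi_2 \in \der(t',a)} (\lambda \cdot \Kantorovich(d))(\pi_1,\pi_2) = 1$, and therefore $\sup_{\pi_1 \in \der(t,a)} \inf_{\pi_2 \in \der(t',a)} (\lambda \cdot \Kantorovich(d))(\pi_1,\pi_2) = 1$. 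Hence the Hausdorff distance $\Hausdorff(\lambda \cdot \Kantorovich(d))(\der(t,a),\der(t',a)) = 1$, and by monotonicity of $\sup$ over actions in the definition of $\Bisimulation$, this forces $\Bisimulation(d)(t,t') = 1$, contradicting the hypothesis $\Bisimulation(d)(t,t') < 1$.

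For part (2), I assume $d$ is a bisimulation metric with $d(t,t') < 1$. By Proposition~\ref{prop:coinductive_vs_fixpoint_def_bisim_metric} we have $\Bisimulation(d) \sqsubseteq d$, hence $\Bisimulation(d)(t,t') \le d(t,t') < 1$, and applying part (1) yields $t \trans[a] \Leftrightarrow t' \trans[a]$ for every $a \in \Act$.

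The main subtlety here is not a difficult calculation but the careful use of the empty-set conventions $\inf\emptyset = 1$ and $\sup\emptyset = 0$ appearing in Definition~\ref{def:HausdorffLifting}, since it is exactly those conventions that allow a missing $a$-transition on one side to force maximal Hausdorff distance regardless of the discount factor $\lambda \in (0,1]$; the rest of the argument is essentially bookkeeping.
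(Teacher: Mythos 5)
Your proof is correct and follows essentially the same route as the paper: part (1) rests on the observation that a mismatch in $a$-derivatives forces the Hausdorff distance to $1$ via the convention $\inf\emptyset = 1$ (you state it contrapositively, the paper as a direct chain of implications), and part (2) is the identical reduction through Proposition~\ref{prop:coinductive_vs_fixpoint_def_bisim_metric}.
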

\begin{proof}
We start with Proposition~\ref{prop:immidiate_reactive_behavior_vars_distance_less_one}.\ref{prop:immidiate_reactive_behavior_vars_distance_less_one:bisim_functional} and reason as follows.
\begin{align*}
&  \Bisimulation(d)(t,t') < 1 \\
\Leftrightarrow\ \ & \forall a \in \Act. \Hausdorff(\lambda \cdot \Kantorovich(d))(\mathit{der}(t,a), \mathit{der}(t',a)) < 1 \\
\Rightarrow\ \ & \forall a \in \Act.((\mathit{der}(t,a) = \emptyset = \mathit{der}(t',a)) \,\vee\, (\mathit{der}(t,a) \neq \emptyset \neq \mathit{der}(t',a))) \\
\Leftrightarrow\ \ & \forall a \in \Act.(t \trans[a] \ \Leftrightarrow\ t' \trans[a]).
\end{align*}
Now we show Proposition~\ref{prop:immidiate_reactive_behavior_vars_distance_less_one}.\ref{prop:immidiate_reactive_behavior_vars_distance_less_one:bisim_metric}. By Proposition~\ref{prop:coinductive_vs_fixpoint_def_bisim_metric} we get that $d(t,t')<1$ implies $\Bisimulation(d)(t,t') < 1$. The thesis follows now from Proposition~\ref{prop:immidiate_reactive_behavior_vars_distance_less_one}.\ref{prop:immidiate_reactive_behavior_vars_distance_less_one:bisim_functional}.
\end{proof}
Moreover, if $\lambda<1$ the implications in both cases also hold in the other direction.

\begin{remark} \label{rem:agreementDisagreementActionDistance}
The bisimulation distance $\bisimddisc(t,t')$ between terms $t$ and $t'$ is in $[0,\lambda] \cup \{1\}$.
If $\lambda \in (0,1)$, then: 
\begin{enumerate}
\item
$\bisimddisc(t,t') = 1$ iff $t$ can perform an action which $t'$ cannot (or vice versa), i.e.\ $\der(t,a) \neq\emptyset$ and $\der(t',a)=\emptyset$ for some action $a \in \Act$;
\item
$\bisimddisc(t,t') = 0$ iff $t$ and $t'$ have the same reactive behavior (are bisimilar); and
\item
$\bisimddisc(t,t') \in (0,\lambda]$ iff $t$ and $t'$ have the same set of initial moves, i.e.\ $\der(t,a) = \der(t',a)$, and have different reactive behavior after performing the same initial actions.
\end{enumerate}
Notice that in the first case the discount $\lambda$ does not apply since the different behaviors are observed immediately. 
If $\lambda = 1$ then the first and last case collapse, i.e.\
$\bisimddisc(t,t') = 0$ iff $t$ and $t'$ have the same reactive behavior (are bisimilar), and
$\bisimddisc(t,t') \in (0,1]$ iff $t$ and $t'$ have different reactive behavior. 
\end{remark}

\subsubsection{Properties of the Kantorovich lifting}

The Kantorovich pseudometric satisfies important properties that will be essential to prove our technical results.
In detail, 
the Kantorovich lifting functional is monotone, 
the Dirac operator is an isometric embedding of the metric space of states into the metric space of distributions,
and probabilistic choice distributes over the Kantorovich lifting.

\begin{prop}[\cite{Pan09}] \label{prop:kantorovich_lifting}
Let $d$ and $d'$ be any $1$-bounded pseudometrics. Then
\begin{enumerate}
	\item \label{prop:kantorovich_lifting:monotonicity} 
		$\Kantorovich(d) \sqsubseteq \Kantorovich(d')$ if $d \sqsubseteq d'$;
	\item \label{prop:kantorovich_lifting:dirac}
		$\Kantorovich(d)(\delta(t),\delta(t')) = d(t,t')$ for all $t,t'\in \closedSTerms$;
	\item \label{prop:kantorovich_lifting:splitting}
		$\Kantorovich(d)(\sum_{i \in I} p_i \pi_i, \sum_{i \in I} p_i \pi'_i) \le \sum_{i \in I} p_i \cdot \Kantorovich(d)(\pi_i,\pi'_i)$ for all $\pi_i, \pi'_i \in\Delta(\closedSTerms)$ and $p_i\in[0,1]$ with $\sum_{i \in I} p_i=1$.
\end{enumerate}
\end{prop}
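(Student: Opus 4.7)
My plan is to prove each of the three clauses directly from the definition of the Kantorovich lifting, using in each case a suitable choice of matching. The common strategy is: to upper-bound $\Kantorovich(e)(\pi,\pi')$, exhibit a specific $\omega \in \Omega(\pi,\pi')$ and evaluate the objective $\sum_{t,t'} e(t,t')\omega(t,t')$; to lower-bound it, exploit the minimality over $\Omega(\pi,\pi')$.

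For clause \ref{prop:kantorovich_lifting:monotonicity}, I would fix $\pi,\pi'\in\Delta(\closedSTerms)$ and let $\omega^\star \in \Omega(\pi,\pi')$ attain the minimum in $\Kantorovich(d')(\pi,\pi')$ (which exists by the compactness argument standard for image-finite supports; alternatively, one may pick an $\varepsilon$-optimal $\omega^\star$ and let $\varepsilon \to 0$). Since $d(t,t') \le d'(t,t')$ pointwise and $\omega^\star(t,t') \ge 0$, termwise comparison gives
\[
\Kantorovich(d)(\pi,\pi') \;\le\; \sum_{t,t'\in\closedSTerms} d(t,t')\,\omega^\star(t,t') \;\le\; \sum_{t,t'\in\closedSTerms} d'(t,t')\,\omega^\star(t,t') \;=\; \Kantorovich(d')(\pi,\pi'),
\]
where the first inequality uses that $\omega^\star$ is \emph{some} matching (not necessarily optimal) for $\Kantorovich(d)$.

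For clause \ref{prop:kantorovich_lifting:dirac}, the set $\Omega(\delta(t),\delta(t'))$ is a singleton: the marginal constraints force $\omega(t,t') = 1$ and $\omega(s,s')=0$ for every other pair $(s,s')$. Plugging this unique matching into the definition evaluates the sum to $d(t,t') \cdot 1 = d(t,t')$.

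For clause \ref{prop:kantorovich_lifting:splitting}, for each $i\in I$ choose an optimal matching $\omega_i \in \Omega(\pi_i,\pi'_i)$ realizing $\Kantorovich(d)(\pi_i,\pi'_i)$, and set $\omega \;=\; \sum_{i\in I} p_i\,\omega_i$. A direct computation of the marginals shows $\omega \in \Omega(\sum_i p_i\pi_i,\sum_i p_i\pi'_i)$: indeed,
\[
\sum_{t'} \omega(t,t') \;=\; \sum_{i\in I} p_i \sum_{t'}\omega_i(t,t') \;=\; \sum_{i\in I} p_i\,\pi_i(t),
\]
and symmetrically for the right marginal. Since $\omega$ is only one among many matchings, the minimization yields
\[
\Kantorovich(d)\Bigl(\sum_{i\in I} p_i\pi_i,\, \sum_{i\in I} p_i\pi'_i\Bigr) \;\le\; \sum_{t,t'} d(t,t')\,\omega(t,t') \;=\; \sum_{i\in I} p_i\,\Kantorovich(d)(\pi_i,\pi'_i),
\]
which is the desired inequality. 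The only mildly subtle point, and the step I would take most care over, is the existence of an optimal matching in clauses \ref{prop:kantorovich_lifting:monotonicity} and \ref{prop:kantorovich_lifting:splitting}; for the image-finite (hence effectively countably supported) distributions considered in the paper this is standard, but if one wants to avoid any compactness argument one can replace "optimal" by "$\varepsilon$-optimal" throughout and let $\varepsilon \to 0$ at the end.
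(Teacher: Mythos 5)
Your proof is correct, and all three arguments are the standard ones: reuse of the $d'$-optimal matching for monotonicity, the observation that $\Omega(\delta(t),\delta(t'))$ is the singleton Dirac at $(t,t')$, and the convex combination of per-index optimal matchings for splitting (with the interchange of sums justified by non-negativity). Note that the paper itself offers no proof of this proposition --- it is imported from the cited reference --- so there is nothing to compare against; your handling of the attainment of the minimum via $\varepsilon$-optimal matchings is a reasonable way to discharge the only delicate point in the paper's definition.
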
\smallskip

\noindent Now we will show a very important new result stating that the Kantorovich lifting preserves concave moduli of continuity of language operators. In other words, moduli of continuity of language operators distribute over probabilistic choices.

\begin{thm}\label{thm:Kantorovich_lifting}
Let $d \colon \closedSTerms \times \closedSTerms \to [0,1]$ be any $1$-bounded pseudometric. Assume an $n$-ary operator $f \in \Sigma$ and a concave\footnote{A function $z \colon [0,1]^n \to [0,1]$ is called concave if, for any $x_1,\ldots,x_n,y_1,\ldots,y_n \in [0,1]$ and any $\lambda \in [0,1]$,
$z((1-\lambda)x_1+\lambda y_1,\ldots,(1-\lambda)x_n+\lambda y_n) \ge (1-\lambda)z(x_1,\ldots,x_n)+\lambda z(y_1,\ldots,y_n)$.} function $z \colon [0,1]^n \to [0,1]$ with
\[ 	
	d(f(t_1,\ldots,t_n),f(t_1',\ldots,t_n'))
		\le
	z(d(t_1,t_1'),\ldots,d(t_n,t_n'))	
\]
for all terms $t_1,t_1',\ldots,t_n,t_n' \in \closedSTerms$. Then we have
\[
	\Kantorovich(d)(f(\pi_1,\ldots,\pi_{n}),f(\pi'_1,\ldots,\pi'_{n}))
		\le
	z(\Kantorovich(d)(\pi_1,\pi'_1),\ldots,\Kantorovich(d)(\pi_n,\pi'_n))
\]
for all probability distributions $\pi_1,\pi'_1,\ldots,\pi_n,\pi'_n \in \Delta(\closedSTerms)$.
\end{thm}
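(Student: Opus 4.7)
The strategy is to build an explicit matching for $(f(\pi_1,\ldots,\pi_n), f(\pi'_1,\ldots,\pi'_n))$ out of optimal matchings for each coordinate, bound its transport cost pointwise via the hypothesis on $f$, and then discharge the resulting average-of-$z$ expression using Jensen's inequality for the concave $z$.

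\textbf{Step 1: Build a product matching.} For each $i\in\{1,\ldots,n\}$ choose an optimal matching $\omega_i\in\Omega(\pi_i,\pi'_i)$, so that
\[
  \Kantorovich(d)(\pi_i,\pi'_i) \;=\; \sum_{t_i,t'_i\in\closedSTerms} d(t_i,t'_i)\cdot\omega_i(t_i,t'_i).
\]
Define $\omega\in\Delta(\closedSTerms\times\closedSTerms)$ by
\[
  \omega\bigl(f(t_1,\ldots,t_n),\,f(t'_1,\ldots,t'_n)\bigr)
  \;=\; \prod_{i=1}^{n}\omega_i(t_i,t'_i),
\]
and $\omega(u,u')=0$ whenever $u$ or $u'$ is not of the shape $f(\vec t)$. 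A direct computation using the definition of the product distribution $f(\pi_1,\ldots,\pi_n)$ and the marginal conditions on each $\omega_i$ shows that $\omega\in\Omega(f(\pi_1,\ldots,\pi_n),f(\pi'_1,\ldots,\pi'_n))$.

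\textbf{Step 2: Pointwise upper bound.} Since the Kantorovich pseudometric is defined as a minimum over matchings, this particular choice gives
\[
  \Kantorovich(d)\bigl(f(\pi_1,\ldots,\pi_n),f(\pi'_1,\ldots,\pi'_n)\bigr)
  \;\le\;
  \sum_{\vec t,\vec t'} d\bigl(f(\vec t),f(\vec t')\bigr)\prod_{i=1}^n\omega_i(t_i,t'_i).
\]
The hypothesis on $f$ then allows us to replace the integrand by $z(d(t_1,t'_1),\ldots,d(t_n,t'_n))$, yielding
\[
  \Kantorovich(d)\bigl(f(\pi_1,\ldots,\pi_n),f(\pi'_1,\ldots,\pi'_n)\bigr)
  \;\le\;
  \sum_{\vec t,\vec t'} z\bigl(d(t_1,t'_1),\ldots,d(t_n,t'_n)\bigr)\prod_{i=1}^n\omega_i(t_i,t'_i).
\]

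\textbf{Step 3: Apply Jensen's inequality.} The product $\prod_i\omega_i$ is a probability distribution on $(\closedSTerms\times\closedSTerms)^n$ whose $i$-th marginal is $\omega_i$. Consider the random vector $(X_1,\ldots,X_n)$ with $X_i=d(t_i,t'_i)$ drawn from this product distribution. Since $z\colon[0,1]^n\to[0,1]$ is (jointly) concave, the multivariate Jensen inequality gives
\[
  \mathbb{E}[z(X_1,\ldots,X_n)] \;\le\; z\bigl(\mathbb{E}[X_1],\ldots,\mathbb{E}[X_n]\bigr).
\]
Here $\mathbb{E}[X_i]=\sum_{t_i,t'_i}d(t_i,t'_i)\,\omega_i(t_i,t'_i)=\Kantorovich(d)(\pi_i,\pi'_i)$ by optimality of $\omega_i$, which delivers exactly the desired inequality.

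\textbf{Main obstacle.} The technical heart is the invocation of Jensen's inequality in the multivariate form: it is crucial that $z$ is assumed concave \emph{jointly} in all $n$ arguments (as stated in the footnote), because the induced joint law on $(X_1,\ldots,X_n)$ is in general not a product of its marginals with independent coordinates of the $d(t_i,t'_i)$ values---rather it arises by pushing forward the product of the $\omega_i$ along the functions $(t_i,t'_i)\mapsto d(t_i,t'_i)$. Joint concavity handles this uniformly. A secondary bookkeeping point is the correctness of the marginals of $\omega$, which relies on the definition of the product distribution $f(\pi_1,\ldots,\pi_n)$ assigning mass $\prod_i\pi_i(t_i)$ to $f(\vec t)$ and zero elsewhere; infinite-support distributions pose no issue because all sums involved are absolutely convergent in $[0,1]$.
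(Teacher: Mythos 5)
Your proposal is correct and follows essentially the same route as the paper's own proof: construct the product matching $\omega$ from coordinatewise optimal matchings $\omega_i$, verify its marginals, bound the transport cost using the hypothesis on $f$, and close with the multivariate Jensen inequality for the jointly concave $z$. The only difference is presentational — you phrase the Jensen step in probabilistic language with an explicit random vector, whereas the paper manipulates the sums directly — but the argument is the same.
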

\begin{proof}
We assume $\omega_i \in \Omega(\pi_i,\pi'_i)$ to be an optimal matching such that
$\Kantorovich(d)(\pi_i,\pi'_i) = \sum_{t,t' \in \closedSTerms} d(t,t') \cdot \omega_i(t,t')$, i.e.\ a matching between $\pi_i$ and $\pi_i'$ which yields the Kantorovich distance $\Kantorovich(d)(\pi_i,\pi'_i)$. We define a new distribution over the product space 
$\omega \in \Delta(\closedSTerms \times \closedSTerms)$ by
\[
	\omega(f(t_1,\ldots,t_n),f(t_1',\ldots,t_n')) = \prod_{i=1}^n \omega_i(t_i,t'_i)
\]
for all $t_1,t'_1,\ldots,t_n,t'_n \in \closedTerms$.
First, we show that $\omega$ is a joint probability distribution with left marginal $f(\pi_1,\dots,\pi_{n})$ and right marginal $f(\pi'_1,\dots,\pi'_{n})$.
The left marginal is
\begin{align*}
  & \sum_{t' \in \closedSTerms} \omega(f(t_1,\ldots,t_n),t') \\
= & \sum_{t_1',\ldots,t_n' \in \closedSTerms} \omega(f(t_1,\ldots,t_n),f(t_1',\ldots,t_n')) \\
= & \sum_{t_1',\ldots,t_n' \in \closedSTerms} \prod_{i=1}^{n} \omega_i(t_i,t'_i) \\
= & \prod_{i=1}^{n} \sum_{t'_i \in \closedSTerms} \omega_i(t_i,t'_i) \\
= & \prod_{i=1}^{n} \pi_i(t_i) \\
= & f(\pi_1,\ldots,\pi_{n})(f(t_1,\ldots,t_n))
\end{align*}
with $\sum_{t_1',\ldots,t_n' \in \closedSTerms} \prod_{i=1}^{n} \omega_i(t_i,t'_i) 
= \prod_{i=1}^{n} \sum_{t'_i \in \closedSTerms} \omega_i(t_i,t'_i)$ by induction over $n$ with induction step
\begin{align*}
	& \sum_{t_1',\ldots,t_{n+1}' \in \closedSTerms} \prod_{i=1}^{n+1} \omega_i(t_i,t'_i) \\
= & \sum_{t_1',\ldots,t_{n}' \in \closedSTerms} \sum_{t_{n+1}' \in \closedSTerms} 
	\omega_{n+1}(t_{n+1},t_{n+1}') \prod_{i=1}^{n} \omega_i(t_i,t'_i) \\
= & \sum_{t_{n+1}' \in \closedSTerms} \omega_{n+1}(t_{n+1},t_{n+1}') \sum_{t_1',\ldots,t_{n}' \in \closedSTerms} \prod_{i=1}^{n} \omega_i(t_i,t'_i) \\
= & \sum_{t_{n+1}' \in \closedSTerms} \omega_{n+1}(t_{n+1},t_{n+1}') \prod_{i=1}^{n} \sum_{t_i' \in \closedSTerms} \omega_i(t_i,t'_i) \\
= &\prod_{i=1}^{n+1} \sum_{t'_i \in \closedSTerms} \omega_i(t_i,t'_i).
\end{align*}
The right marginal is computed analogously. Hence, $\omega \in \Omega(f(\pi_1,\dots,\pi_{n}), f(\pi'_1,\dots,\pi'_{n}))$, i.e.\ $\omega$ is a matching for distributions $f(\pi_1,\dots,\pi_{n})$ and $f(\pi'_1,\dots,\pi'_{n})$.

The proof obligation can be derived now by
\begin{align*}
 & \ \Kantorovich(d)(f(\pi_1,\dots,\pi_{n}),f(\pi'_1,\dots,\pi'_{n})) \\
\le &\  \sum_{{t_1,\ldots,t_n \atop t_1',\ldots,t_n'} \in \closedSTerms} d(f(t_1,\ldots,t_n),f(t_1',\ldots,t_n')) \cdot \omega(f(t_1,\ldots,t_n),f(t_1',\ldots,t_n')) \\
= &\ \sum_{{t_1,\ldots,t_n \atop t_1',\ldots,t_n'} \in \closedSTerms}  d(f(t_1,\dots,t_{n}),f(t'_1,\dots,t'_{n})) \cdot  \prod_{i=1}^n \omega_i(t_i,t'_i) \\
\le &\ \sum_{{t_1,\ldots,t_n \atop t_1',\ldots,t_n'} \in \closedSTerms} z(d(t_1,t_1'),\ldots,d(t_n,t_n')) 
\cdot  \prod_{i=1}^n \omega_i(t_i,t'_i) \\ 
\le & \ z\left(
	\sum_{{t_1,\ldots,t_n \atop t_1',\ldots,t_n'} \in \closedSTerms} (d(t_1,t_1'),\ldots,d(t_n,t_n')) \cdot  \prod_{i=1}^n \omega_i(t_i,t'_i)
           \right) \\
= & \ z\left( 
	\sum_{{t_1,\ldots,t_n \atop t_1',\ldots,t_n'} \in \closedSTerms} 
		\left(d(t_1,t_1')\cdot  \prod_{i=1}^n \omega_i(t_i,t'_i),\ldots,d(t_n,t_n')\cdot  \prod_{i=1}^n \omega_i(t_i,t'_i)\right) 
    \right) \\
= & \ z\left( \left(
		\sum_{{t_1,\ldots,t_n \atop t_1',\ldots,t_n'}\in \closedSTerms} d(t_1,t_1')\cdot  \prod_{i=1}^n \omega_i(t_i,t'_i),\ldots,\sum_{{t_1,\ldots,t_n \atop t_1',\ldots,t_n'} \in \closedSTerms} d(t_n,t_n') \cdot \prod_{i=1}^n \omega_i(t_i,t'_i) 
    \right) \right) \\
= & \ z \left( \left( \sum_{t_1, t'_1 \in \closedSTerms}  d(t_1,t'_1) \omega_1(t_1,t'_1),  \ldots , \sum_{t_n,t'_n \in \closedSTerms} d(t_n,t_n') \omega_n(t_n,t'_n) \right) \right)
\\
= &\ z(\Kantorovich(d)(\pi_1,\pi'_1),\ldots,\Kantorovich(d)(\pi_n,\pi'_n))
\end{align*}
whereby the reasoning steps are derived as follows:
step 1 from the fact that $\omega$ is a matching for distributions $f(\pi_1,\dots,\pi_{n})$ and $f(\pi'_1,\dots,\pi'_{n})$, 
step 2 by the definition of $\omega$, 
step 3 by the assumption $d(f(t_1,\ldots,t_n),f(t_1',\ldots,t_n')) \le z(d(t_1,t_1'),\ldots,d(t_n,t_n'))$, 
step 4 by using Jensen's inequality for the concave function $z$, 
step 7 by $\sum_{{t_1,\ldots,t_n \atop t_1',\ldots,t_n'}\in \closedSTerms} d(t_1,t_1')\cdot  \prod_{i=1}^n \omega_i(t_i,t'_i) = \sum_{t_1, t'_1 \in \closedSTerms}  d(t_1,t'_1) \omega_1(t_1,t'_1)$, and
step 8 by the definition of $\Kantorovich$.
\end{proof}

\subsection{PGSOS Specifications}

We will specify the operational semantics of operators by SOS rules in the probabilistic GSOS format~\cite{Bar04,LGD12,DGL15}. 
The probabilistic GSOS format, PGSOS format for short, is the quantitative generalization of the classical nondeterministic GSOS format~\cite{BIM95}.
It is more general than earlier formats \cite{LT05,LT09} which consider transitions of the form $t \trans[a,q] t'$ modeling that term $t$ reaches through action $a$ the term $t'$ with probability $q$.
The probabilistic GSOS format allows us to specify probabilistic nondeterministic process algebras, such as probabilistic CCS~\cite{JYL01,Bar04,DD07}, probabilistic CSP~\cite{JYL01,Bar04,DGHMZ07b,DL12} and probabilistic ACP~\cite{And99,And02}.

\begin{defi}[PGSOS rule, \cite{Bar04,LGD12}]\label{def:pgsos}
A \emph{PGSOS rule} $r$ has the form:
\[
	\SOSrule{\{ x_i \trans[a_{i,k}] \mu_{i,k} \mid i \in I, k \in K_i \} \qquad
			 \{ x_i \ntrans[b_{i,l}] \mid i \in I, l \in L_i \}}
			{ f(x_1,\ldots,x_n) \trans[a] \theta}
\]
with $f \in \Sigma$ an operator with rank $n$,
$I = \{1,\dots,n\}$ indices for the arguments of $f$,
$K_i,L_i$ finite index sets, 
$a_{i,k},b_{i,l}, a \in \Act$ actions,
$x_i \in \SVar$ state variables, 
$\mu_{i,k} \in \DVar$ distribution variables, and 
$\theta\in\openDTerms$ a distribution term. Furthermore, the following constraints need to be satisfied:
\begin{enumerate}
	\item \label{item:pgsos:cond_muik_different}
		all $\mu_{i,k}$ for $i \in I, k \in K_i$ are pairwise different; 
	\item \label{item:pgsos:cond_xi_different} 
		all $x_1,\ldots,x_n$ are pairwise different;
	\item \label{item:target_no_free_vars} 
		$\VarTerm(\theta) \subseteq \{\mu_{i,k} \mid i\in I, k \in K_i \} \cup \{ x_1 \ldots,x_n\}$.
\end{enumerate}
\end{defi}
The PGSOS constraints~\ref{item:pgsos:cond_muik_different}--\ref{item:target_no_free_vars} are precisely the constraints of the nondeterministic GSOS format~\cite{BIM95} where the variables in the right-hand side of the literals are replaced by distribution variables.

\begin{notation}[Notations for rules]
Let $r$ be a PGSOS rule.
The expressions $x_i \trans[a_{i,k}] \mu_{i,k}$, $x_i \ntrans[b_{i,l}]$ and $f(x_1,\dots,x_n) \trans[a] \theta$ are called, resp., \emph{positive premises}, \emph{negative premises} and \emph{conclusion}. 
The set of all premises is denoted by $\prem{r}$ and the conclusion by $\conc{r}$.
The term $f(x_1,\dots,x_n)$ is called the \emph{source},
the variables $x_1,\dots,x_n$ are called \emph{source variables},
and
the distribution term $\theta$ is called the \emph{target}.

Given a set of rules $R$ 
we denote by $R_f$ the rules specifying operator $f$, i.e.\ all rules of $R$ with source $f(x_1,\dots,x_n)$, 
and  by $R_{f,a}$ the rules specifying an $a$-labelled transition for operator $f$, i.e.\ all rules of $R_f$ with a conclusion that is $a$-labelled.
\end{notation}

\begin{defi}[PTSS]
A \emph{probabilistic transition system specification} (PTSS) in PGSOS format is a triple $P = (\Sigma, \Act, R)$, where 
\begin{itemize}
	\item $\Sigma$ is a signature, 
	\item $\Act$ is a countable set of actions, 
	\item $R$ is a countable set of PGSOS rules, and 
	\item $R_{f,a}$ is finite for all $f \in \Sigma$ and $a \in \Act$.
\end{itemize}
\end{defi}
The last property ensures that the supported model (Defintion~\ref{def:supported_model}) is image-finite such that the fixed point characterization of bisimulation metrics coincides with the coinductive characterization (Proposition~\ref{prop:bisim_metric_lfp_D}).

The operational semantics of terms is given by inductively applying the respective PGSOS rules. 
Then, a supported model of a PTSS describes the operational semantics of all terms.
In other words, a supported model of a PGSOS specification $P$ is a PTS $M$ with transition relation $\trans$ such that $\trans$ contains all and only those transitions for which the rules of $P$ offer a justification.

\begin{defi}[Supported transition]
Let $P = (\Sigma, \Act, R)$ be a PTSS and $r \in R$ be a rule. 
Given a PTS $M=(\closedSTerms,\Act,{\trans})$ and a closed substitution $\sigma$, we say that the $\sigma$-instance of $r$ is \emph{satisfied} in $M$ and allows to derive $t \trans[a] \pi$, formally $M \models_r^\sigma t \trans[a] \pi$, if
\begin{itemize}
	\item $\sigma(x_i) \trans[a_{i,k}] \sigma(\mu_{i,k}) \in {\trans}$ for all $x_i \trans[a_{i,k}] \mu_{i,k} \in \prem{r}$,
	\item $\sigma(x_i) \trans[b_{i,l}] \pi \not\in {\trans}$ for any $\pi \in \Delta(\closedSTerms)$, for all $x_i \ntrans[b_{i,l}] \in \prem{r}$, and
	\item $t \trans[a] \pi \in {\trans}$ for $t \trans[a] \pi = \sigma(\conc{r})$.
\end{itemize}
We call a transition $t \trans[a] \pi$ in $M$ \emph{supported} by $P$, notation $M \models_P t \trans[a] \pi$, if there is some $r \in R$ and a closed substitution $\sigma$ such that $M \models_r^\sigma t \trans[a] \pi$.
\end{defi}

The supported transitions of a PTSS $P$ form the supported model of $P$.

\begin{defi}[Supported model] \label{def:supported_model}
Let $P = (\Sigma, \Act, R)$ be a PTSS. A PTS $M=(\closedSTerms,\Act,{\trans})$ is a \emph{supported model} if 
\[
	t \trans[a] \pi \text{ iff } M \models_P t \trans[a] \pi
\]
for all ${t \trans[a] \pi} \in {\trans}$.
\end{defi}
Each PTSS in PGSOS format has a supported model which is moreover unique~\cite{BIM95,Bar04}.
We call the single supported PTS of a PTSS $P$ also the \emph{induced model} of $P$.

Intuitively, a term $f(t_1,\ldots,t_n)$ represents the composition of terms $t_1,\ldots,t_n$ by  operator $f$. A rule $r$ specifies some transition $f(t_1,\ldots,t_n) \trans[a] \pi$ that represents the evolution of the composed term $f(t_1,\ldots,t_n)$ by action $a$ to the distribution $\pi$.

\begin{defi}[Disjoint extension~\cite{ABV94}] \label{def:disjoint_extension}
Let $P_1 = (\Sigma_1, \Act, R_1)$ and $P_2 = (\Sigma_2, \Act, R_2)$ be two PGSOS PTSSs. $P_2$ is a \emph{disjoint extension} of $P_1$, notation $P_1 \dext P_2$, iff $\Sigma_1 \subseteq \Sigma_2$, $R_1 \subseteq R_2$ and $R_2$ introduces no new rule for any operator in $\Sigma_1$. 
\end{defi}

\section{Non-recursive processes} \label{sec:nonrec_procs}

\begin{table}[!t]
\begin{gather*}
	\SOSrule{}
	 		{\varepsilon \trans[\tick] \delta(0)}
	\qquad
	\SOSrule{}
	 		{a.\bigoplus_{i=1}^n[p_i]x_i \trans[a] \sum_{i=1}^n p_i \delta(x_i)}
\\[1ex]
	\SOSrule{x \trans[a] \mu \quad\ a\neq\tick}
			{x; y \trans[a] \mu;\delta(y)}
	\qquad
	\SOSrule{x\trans[\tick]\mu \quad y\trans[a]\nu}
			{{x; y}\trans[a]\nu}
	\qquad
	\SOSrule{x \trans[a] \mu}
			{x + y \trans[a] \mu}
	\qquad
	\SOSrule{y \trans[a] \nu}
			{x + y \trans[a] \nu}
\\[1ex]
	\SOSrule{x \trans[a] \mu \quad y \trans[a] \nu \quad a \neq \tick}
			{x \mid y \trans[a] \mu \mid \nu}
	\qquad
                \SOSrule{x \trans[\tick] \mu \quad y \trans[\tick] \nu}
			{x \mid y \trans[\tick] \delta(0)}
\\[1ex]
	\SOSrule{x\trans[a]\mu \quad\ a\neq\tick}
			{x \mid\mid\mid y \trans[a] \mu \mid\mid\mid \delta(y)}
	\qquad
	\SOSrule{y\trans[a]\nu \quad\ a\neq\tick}
			{x \mid\mid\mid y \trans[a] \delta(x) \mid\mid\mid \nu}
	\qquad
	\SOSrule{x\trans[\tick]\mu \quad y\trans[\tick]\nu}
			{x \mid\mid\mid y \trans[\tick] \delta(0)}
\\[1ex]
	\SOSrule{x\trans[a]\mu \quad y\trans[a]\nu \quad a\in{B\setminus\{\tick\}}}
			{{x\,\parop{B}\,y}\trans[a] \mu\,\parop{B}\,\nu}
	\qquad
                 \SOSrule{x\trans[\tick]\mu \quad y\trans[\tick]\nu}
			{{x\,\parop{B}\,y}\trans[\tick] \delta(0)}
\\[1ex]
                \SOSrule{x\trans[a]\mu \quad a\notin B\cup\{\tick\}}
			{{x\,\parop{B}\,y}\trans[a] \mu\,\parop{B}\,\delta(y)}
                \qquad
                \SOSrule{y\trans[a]\nu \quad a\notin B\cup\{\tick\}}
			{{x\,\parop{B}\,y}\trans[a] \delta(x)\,\parop{B}\,\nu}
\end{gather*}
\caption{Standard non-recursive process combinators}
\label{tab:prob_algebra_nonprob_process_combinators}
\end{table}

We start by discussing compositional reasoning over probabilistic processes that are composed by non-recursive process combinators. First we introduce the most common non-recursive process combinators, then study the distance between processes composed by these combinators, and conclude by analyzing their compositionality properties. Our study of compositionality properties generalizes earlier results of~\cite{DGJP04,DCPP06} which considered only a small set of process combinators and only the compositionality property of non-expansiveness. The development of tight bounds on the distance between composed processes (necessary for effective metric assume-guarantee performance validation) is novel.

\subsection{Non-recursive process combinators} \label{sec:nonrec_probabilistic_process_algebras}

We introduce now a probabilistic process algebra that comprises many of the probabilistic process combinators from CCS~\cite{JYL01,Bar04,DD07} and CSP~\cite{JYL01,Bar04,DGHMZ07b,DL12}. 
Assume a set of actions $\Act$, with $\tick \in \Act$ denoting the successful termination action.
Let $\nonrecPASig$ be the signature with the following operators:
\begin{itemize}
	\item constants $0$ (stop process)
                 and $\varepsilon$ (skip process);
	\item a family of $n$-ary probabilistic prefix operators $a.([p_1]\_ \oplus \ldots \oplus [p_n]\_ )$ with 
		$a \in \Act$, 
		$n \ge 1$, 
		$p_1,\ldots,p_n \in (0,1]$ and 
		$\sum_{i=1}^n p_i = 1$;
\item binary operators 
\begin{itemize}
\item	    $\_ \,; \_$ (sequential composition), 
\item		$\_ + \_$ (alternative composition),
\item		$\_ +_p \_$ (probabilistic alternative composition), with $p \in (0,1)$,
\item	    $\_ \mid \_$ (synchronous parallel composition), 
\item	    $\_ \mid\mid\mid \_$ (asynchronous parallel composition), 
\item	    $\_ \mid\mid\mid_p \_$ (probabilistic parallel composition), with $p \in (0,1)$, and
\item	    $\_ \parallel_B \_$ for each for each $B \subseteq \Act$ (CSP-like parallel composition).
\end{itemize}
\end{itemize}
The PTSS $\nonrecPASpec = (\nonrecPASig, \Act, \nonrecPARules)$ is given by the set of PGSOS rules $\nonrecPARules$ in Table~\ref{tab:prob_algebra_nonprob_process_combinators} and 
Table~\ref{tab:prob_algebra_prob_process_combinators}.

The probabilistic prefix operator expresses that the process $a.([p_1]t_1 \oplus \ldots \oplus [p_n]t_n)$ can perform action $a$ and evolves to process $t_i$ with probability $p_i$. Sometimes we write $a.\bigoplus_{i=1}^n[p_i]t_i$ for $a.([p_1]t_1\oplus \ldots \oplus [p_n]t_n)$ and $a.t$ for $a.([1]t)$ (deterministic prefix operator). 
The sequential composition and the alternative composition are as usual. The synchronous parallel composition $t \mid t'$ describes the simultaneous evolution of processes $t$ and $t'$, while the asynchronous parallel composition $t \mid\mid\mid t'$ describes the interleaving of $t$ and $t'$ where both processes can progress by alternating at any rate the execution of their actions.  
The CSP-like parallel composition $t \parallel_B t'$ describes multi-party synchronization where $t$ and $t'$ synchronize on actions in $B$ and evolve independently for all other actions.

The probabilistic variants of the alternative composition and the asynchronous parallel composition replace the nondeterministic choice of their non-probabilistic variant by a probabilistic choice.
The probabilistic alternative composition $t +_p t'$ evolves to the probabilistic choice between a distribution reached by $t$ (with probability $p$) and a distribution reached by $t'$ (with probability $1-p$) for actions which can be performed by both processes. For actions that can be performed by either only $t$ or only $t'$, the probabilistic alternative composition $t +_p t'$ behaves just like the nondeterministic alternative composition $t + t'$. 
Similarly, the probabilistic parallel composition $t \mid\mid\mid_p t'$ evolves to a probabilistic choice (with respectively the probability $p$ and $1-p$) between the two nondeterministic choices of the nondeterministic parallel composition $t \mid\mid\mid t'$ for actions which can be performed by both $t$ and $t'$.
For actions that can be performed by either only $t$ or only $t'$, the probabilistic parallel composition $t \mid\mid\mid_p t'$ behaves just like the nondeterministic parallel composition $t \mid\mid\mid t'$. 

\begin{table}[!t]
\begin{gather*}
	\SOSrule{x \trans[a] \mu \quad y \ntrans[a]}
			{x +_p y \trans[a] \mu}
	\qquad
	\SOSrule{x \ntrans[a] \quad y \trans[a] \nu}
			{x +_p y \trans[a] \nu}
	\qquad
	\SOSrule{x \trans[a] \mu \quad y \trans[a] \nu}
			{x +_p y \trans[a] \mu \oplus_p \nu}
\\[2ex]
	\SOSrule{x\trans[a]\mu \quad y\ntrans[a] \quad\ a\neq \tick}
	 		{x \mid\mid\mid_p y \trans[a] \mu \mid\mid\mid_p \delta(y)}
	\qquad
	\SOSrule{x\ntrans[a] \quad y\trans[a]\nu \quad\ a\neq\tick}
			{x \mid\mid\mid_p y \trans[a] \delta(x) \mid\mid\mid_p \nu}
\\[2ex]
	\SOSrule{x\trans[a] \mu \quad y\trans[a]\nu \quad\ a\neq\tick}
			{x \mid\mid\mid_p y \trans[a] \mu \mid\mid\mid_p \delta(y) \oplus_p \delta(x) \mid\mid\mid_p \nu}
	\qquad
	\SOSrule{x\trans[\tick] \mu \quad y\trans[\tick]\nu}
			{x \mid\mid\mid_p y \trans[\tick] \delta(0)}
\end{gather*}
\caption{Standard non-recursive probabilistic process combinators}
\label{tab:prob_algebra_prob_process_combinators}
\end{table}

\subsection{Distance between processes combined by non-recursive process combinators} \label{sec:distance_nonrec_procs}

We develop now tight bounds on the distance between processes combined by the non-recursive process combinators presented in Table~\ref{tab:prob_algebra_nonprob_process_combinators} and 
Table~\ref{tab:prob_algebra_prob_process_combinators}. This will allow us to derive the compositionality properties of those operators. 
As we will discuss two different compositionality properties for non-recursive process combinators (non-extensiveness, Definition~\ref{def:non_extensiveness}, and non-expansiveness, Definition~\ref{def:non_expansiveness}), we split in this section the discussion on the distance bounds accordingly. 
We use disjoint extensions of the specification of the process combinators in order to reason over the composition of arbitrary processes.

We will express the bound on the distance between composed processes $f(s_1,\dots,s_n)$ and $f(t_1,\dots,t_n)$ in terms of the distance between their respective components $s_i$ and $t_i$. Intuitively, given a probabilistic process $f(s_1,\dots,s_n)$ we provide a bound on the distance to the respective probabilistic process $f(t_1,\dots,t_n)$ where each component $s_i$ is replaced by the component $t_i$. 

We start with those process combinators that satisfy the later discussed compositionality property of non-extensiveness (Definition~\ref{def:non_extensiveness}). 

\begin{prop} \label{prop:distance_nonextensively_composed_procs}	
Let $P=(\Sigma,\Act,R)$ be any PTSS with $\nonrecPASpec \dext P$. For all terms
$s_i,t_i \in \closedTerms$ it holds:
\begin{enumerate}[label=\({\alph*}]
	\item \label{prop:distance_nonextensively_composed_procs:action_prefix}
	$\bisimddisc(a.\bigoplus_{i=1}^n[p_i]s_i,a.\bigoplus_{i=1}^n[p_i]t_i) \le \lambda \cdot \sum_{i=1}^n p_i \bisimddisc(s_i,t_i)$;
	\item \label{prop:distance_nonextensively_composed_procs:nondet_alt_comp}
	$\bisimddisc(s_1 + s_2, t_1 + t_2) \le \max(\bisimddisc(s_1,t_1),\bisimddisc(s_2,t_2))$;
	\item \label{prop:distance_nonextensively_composed_procs:prob_alt_comp}
	$\bisimddisc(s_1 +_p s_2, t_1 +_p t_2) \le \max(\bisimddisc(s_1,t_1),\bisimddisc(s_2,t_2))$.
\end{enumerate}
\end{prop}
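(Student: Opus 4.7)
The plan is to exploit the fact that $\bisimddisc$ is a fixed point of $\Bisimulation$ (Proposition~\ref{prop:bisim_metric_lfp_D}), which reduces each claimed bound to estimating a supremum of Hausdorff-lifted Kantorovich distances between $a$-derivative sets that can be read directly from the PGSOS rules in Tables~\ref{tab:prob_algebra_nonprob_process_combinators}--\ref{tab:prob_algebra_prob_process_combinators}.

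For (a), the prefix operator has a single rule, so $\der(a.\bigoplus_i[p_i]s_i,a')=\emptyset$ for $a'\neq a$ and $\der(a.\bigoplus_i[p_i]s_i,a)=\{\sum_i p_i\delta(s_i)\}$ (and analogously for the $t_i$). Hence $\bisimddisc(a.\bigoplus_i[p_i]s_i,a.\bigoplus_i[p_i]t_i)$ equals $\lambda\,\Kantorovich(\bisimddisc)(\sum_i p_i\delta(s_i),\sum_i p_i\delta(t_i))$, and the splitting property (Proposition~\ref{prop:kantorovich_lifting}.\ref{prop:kantorovich_lifting:splitting}) followed by the Dirac isometry (Proposition~\ref{prop:kantorovich_lifting}.\ref{prop:kantorovich_lifting:dirac}) immediately yields the claimed bound $\lambda\sum_i p_i\bisimddisc(s_i,t_i)$.

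For (b), the rules for $+$ give $\der(s_1+s_2,a)=\der(s_1,a)\cup\der(s_2,a)$ and similarly on the right. I would then invoke the elementary fact, directly from Definition~\ref{def:HausdorffLifting}, that $\Hausdorff(\hat d)(A_1\cup A_2, B_1\cup B_2)\le \max_i \Hausdorff(\hat d)(A_i,B_i)$ for any pseudometric $\hat d$, and combine it with the identity $\bisimddisc(s_i,t_i)=\sup_a \Hausdorff(\lambda\,\Kantorovich(\bisimddisc))(\der(s_i,a),\der(t_i,a))$ (again via Proposition~\ref{prop:bisim_metric_lfp_D}) to conclude.

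Part (c) is the main obstacle because of the three $+_p$-rules with negative premises. I would first discharge the trivial case $\max_i\bisimddisc(s_i,t_i)=1$. Otherwise, Proposition~\ref{prop:immidiate_reactive_behavior_vars_distance_less_one}.\ref{prop:immidiate_reactive_behavior_vars_distance_less_one:bisim_metric} guarantees that $s_i$ and $t_i$ agree on their enabled actions, so for every action $a$ exactly the same one of the three $+_p$-rules fires on both sides of the comparison. When only a single component can perform $a$, the situation reduces to a sub-case of (b). The interesting case is when both components can perform $a$: given any transition $s_1+_p s_2\trans[a]\mu\oplus_p\nu$ with $\mu\in\der(s_1,a)$ and $\nu\in\der(s_2,a)$, use the bisimulation transfer condition on the pairs $(s_i,t_i)$ to pick matching $t_1\trans[a]\mu'$ and $t_2\trans[a]\nu'$, which combine into $t_1+_p t_2\trans[a]\mu'\oplus_p\nu'$; a further application of Proposition~\ref{prop:kantorovich_lifting}.\ref{prop:kantorovich_lifting:splitting} to the convex combination delivers $\lambda\,\Kantorovich(\bisimddisc)(\mu\oplus_p\nu,\mu'\oplus_p\nu')\le p\,\bisimddisc(s_1,t_1)+(1-p)\,\bisimddisc(s_2,t_2)\le \max_i\bisimddisc(s_i,t_i)$, and the symmetric direction of the Hausdorff comparison closes the case.
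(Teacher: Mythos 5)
Your proposal is correct and follows essentially the same route as the paper's proof: part (a) is the identical Kantorovich splitting/Dirac computation, and part (c) matches the paper's rule-by-rule case analysis, including the appeal to Proposition~\ref{prop:immidiate_reactive_behavior_vars_distance_less_one}.\ref{prop:immidiate_reactive_behavior_vars_distance_less_one:bisim_metric} to transfer the negative premises and the splitting property for the convex combination. For part (b) you package the paper's ``match each transition by a transition derived from the same rule'' argument as a general Hausdorff union inequality applied to $\der(s_1+s_2,a)=\der(s_1,a)\cup\der(s_2,a)$, which is a harmless (and slightly cleaner) reformulation of the same underlying matching argument.
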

\proof
First we consider the probabilistic prefix operator (Proposition~\ref{prop:distance_nonextensively_composed_procs}.\ref{prop:distance_nonextensively_composed_procs:action_prefix}).
The only transitions from $a.\bigoplus_{i=1}^n[p_i]s_i$ and $a.\bigoplus_{i=1}^n[p_i]t_i$ are 
$a.\bigoplus_{i=1}^n[p_i]s_i \trans[a] \sum_{i=1}^{n} p_i \delta(s_i)$ and 
$a.\bigoplus_{i=1}^n[p_i]t_i \trans[a] \sum_{i=1}^{n} p_i \delta(t_i)$. 
Hence we need to show that $\lambda \cdot \Kantorovich(\bisimddisc) (\sum_{i=1}^{n} p_i \delta(s_i),\sum_{i=1}^{n} p_i \delta(t_i)) \le \lambda \cdot \sum_{i=1}^n  p_i \bisimddisc(s_i,t_i)$.
This property can be derived by Proposition~\ref{prop:kantorovich_lifting} as follows:
\begin{align*}
	& \Kantorovich(\bisimddisc) \left(\sum_{i=1}^{n} p_i \delta(s_i),\sum_{i=1}^{n} p_i \delta(t_i)\right) \\
\le & \sum_{i=1}^n p_i \Kantorovich(\bisimddisc)(\delta(s_i),\delta(t_i)) & \text{(Proposition~\ref{prop:kantorovich_lifting}.\ref{prop:kantorovich_lifting:splitting})} \\
= & \sum_{i=1}^n p_i \bisimddisc(s_i,t_i) & \text{(Proposition~\ref{prop:kantorovich_lifting}.\ref{prop:kantorovich_lifting:dirac})} \\
\end{align*}

We proceed with the alternative composition operator (Proposition~\ref{prop:distance_nonextensively_composed_procs}.\ref{prop:distance_nonextensively_composed_procs:nondet_alt_comp}).
If either $\bisimddisc(s_1,t_1)=1$ or $\bisimddisc(s_2,t_2)=1$ then the statement is trivial since $\bisimddisc$ is a $1$-bounded pseudometric. Hence, we assume $\bisimddisc(s_1,t_1)<1$ and $\bisimddisc(s_2,t_2)<1$. 
We consider now the two different rules specifying the alternative composition operator and show that in each case whenever $s_1 + s_2 \trans[a] \pi$ is derivable by some of the rules then there is a transition $t_1 + t_2 \trans[a] \pi'$ derivable by the same rule s.t. $\lambda \cdot \Kantorovich(\bisimddisc)(\pi,\pi') \le  \max(\bisimddisc(s_1,t_1),\bisimddisc(s_2,t_2))$.
\begin{enumerate}
	\item Assume that $s_1 + s_2 \trans[a] \pi$ is derived from $s_1 \trans[a] \pi$. Since $\bisimddisc(s_1,t_1)<1$  and $\bisimd$ satisfies the transfer condition of the bisimulation metrics, there exists a transition $t_1 \trans[a] \pi'$ for a distribution $\pi'$ with $\lambda \cdot \Kantorovich(\bisimddisc)(\pi,\pi') \le \bisimddisc(s_1,t_1) \le \max(\bisimddisc(s_1,t_1),\bisimddisc(s_2,t_2))$. Finally, from $t_1 \trans[a] \pi'$  we derive $t_1 + t_2 \trans[a] \pi'$.
	\item Assume that $s_1 + s_2 \trans[a] \pi$ is derived from $s_2 \trans[a] \pi$. The argument is  the same of the previous case.
\end{enumerate}

\noindent We conclude with the probabilistic alternative composition operator (Proposition~\ref{prop:distance_nonextensively_composed_procs}.\ref{prop:distance_nonextensively_composed_procs:prob_alt_comp}).
If either $\bisimddisc(s_1,t_1)=1$ or $\bisimddisc(s_2,t_2)=1$ then the statement is trivial since $\bisimddisc$ is a $1$-bounded pseudometric. Hence, we assume $\bisimddisc(s_1,t_1)<1$ and $\bisimddisc(s_2,t_2)<1$.  We consider now the three different rules specifying the probabilistic alternative composition operator and show that in each case whenever $s_1 + s_2 \trans[a] \pi$ is derivable by some of the rules then there is a transition $t_1 + t_2 \trans[a] \pi'$ derivable by the same rule s.t. $\lambda \cdot \Kantorovich(\bisimddisc)(\pi,\pi') \le \max(\bisimddisc(s_1,t_1),\bisimddisc(s_2,t_2))$.
\begin{enumerate}
	\item 
Assume that $s_1 +_p s_2 \trans[a] \pi$ is derived from $s_1 \trans[a] \pi$ and $s_2 \ntrans[a]$. Since $\bisimddisc(s_1,t_1)<1$ and $\bisimd$ satisfies 
the transfer condition of the bisimulation metrics, there exists a transition $t_1 \trans[a] \pi'$ with $\lambda \cdot \Kantorovich(\bisimddisc)(\pi,\pi') \le \bisimddisc(s_1,t_1) \le \max(\bisimddisc(s_1,t_1),\bisimddisc(s_2,t_2))$.
Since $\bisimddisc(s_2,t_2) < 1$, by Proposition~\ref{prop:immidiate_reactive_behavior_vars_distance_less_one}.\ref{prop:immidiate_reactive_behavior_vars_distance_less_one:bisim_metric} the processes $s_2$ and $t_2$ agree on the actions they can perform immediately. Thus $t_2 \ntrans[a]$. 
Hence we can derive the transition $t_1 +_p t_2 \trans[a] \pi'$. 
	\item Assume that $s_1 +_p s_2 \trans[a] \pi$ is derived from $s_1 \ntrans[a]$ and $s_2 \trans[a] \pi$. The argument is the same of the previous case.
	\item Assume that $s_1 +_p s_2 \trans[a] \pi$ with $\pi = p(\pi_1) + (1-p)\pi_2$ is derived from $s_1 \trans[a] \pi_1$ and $s_2 \trans[a] \pi_2$.
Then, since $\bisimddisc(s_1,t_1)<1$ and $\bisimddisc(s_2,t_2)<1$ and $\bisimd$ satisfies the transfer condition of the bisimulation metrics, there exist transitions $t_1 \trans[a] \pi'_1$ with $\lambda \cdot \Kantorovich(\bisimddisc)(\pi_1,\pi_1') \le \bisimddisc(s_1,t_1)$ and $t_2 \trans[a] \pi'_2$ with $\lambda \cdot \Kantorovich(\bisimddisc)(\pi_2,\pi_2') \le \bisimddisc(s_2,t_2)$.  Therefore we derive $t_1 +_p t_2 \trans[a] p\pi'_1 + (1-p) \pi'_2$, with
\begin{align*}
	& \lambda \cdot \Kantorovich(\bisimddisc)(p\pi_1 + (1-p)\pi_2, p\pi_1' + (1-p)\pi_2') \\
\le & \lambda \cdot (p\Kantorovich(\bisimddisc)(\pi_1,\pi_1') + (1-p)\Kantorovich(\bisimddisc)(\pi_2,\pi_2')) & \text{(Proposition~\ref{prop:kantorovich_lifting}.\ref{prop:kantorovich_lifting:splitting})} \\
\le & \lambda \cdot \max(\Kantorovich(\bisimddisc)(\pi_1,\pi_1'), \Kantorovich(\bisimddisc)(\pi_2,\pi_2')) \\
\le & \max(\bisimddisc(s_1,t_1),\bisimddisc(s_2,t_2)). 
\rlap{\hbox to 271 pt{\hfill\qEd}}
\end{align*}
\end{enumerate}

\noindent We note that the distance between action prefixed processes (Proposition~\ref{prop:distance_nonextensively_composed_procs}.\ref{prop:distance_nonextensively_composed_procs:action_prefix}) is discounted by $\lambda$ since the processes $a.\bigoplus_{i=1}^n[p_i]s_i$ and $a.\bigoplus_{i=1}^n[p_i]t_i$ perform first the action $a$ before the processes $s_i$ and $t_i$ may evolve and their distance is observed.
The distances between processes composed by either the nondeterministic alternative composition operator or by the probabilistic alternative composition operator are both bounded by the maximum of the distances between their respective arguments (Propositions~\ref{prop:distance_nonextensively_composed_procs}.\ref{prop:distance_nonextensively_composed_procs:nondet_alt_comp} and~\ref{prop:distance_nonextensively_composed_procs}.\ref{prop:distance_nonextensively_composed_procs:prob_alt_comp}). The distance bounds for these operators coincide since the first two rules specifying the probabilistic alternative composition define the same operational behavior as the nondeterministic alternative composition and the third rule defining a convex combination of these transitions applies only for those actions that can be performed by both processes $s_1$ and $s_2$ and resp.\ $t_1$ and $t_2$.
If the probabilistic alternative composition would be defined by only the third rule of Table~\ref{tab:prob_algebra_prob_process_combinators}, then $\bisimddisc(s_1 +_p s_2, t_1 +_p t_2) \le p\bisimddisc(s_1,t_1) + (1-p)\bisimddisc(s_2,t_2)$.

Finally, we note that the processes $s_i$ and $t_i$ in Propositions~\ref{prop:distance_nonextensively_composed_procs} are obtained by using arbitrary operators in $\Sigma$ (not necessarily only operators in $\nonrecPASig$).

We proceed with those process combinators that satisfy the later discussed compositionality property of non-expansiveness (Definition~\ref{def:non_expansiveness}). 

\begin{prop} \label{prop:distance_nonexpansively_composed_procs}
Let $P=(\Sigma,\Act,R)$ be any PTSS with $\nonrecPASpec \dext P$. For all terms
$s_i,t_i \in \closedTerms$ it holds:
\begin{enumerate}[label=\({\alph*}]
	\item \label{prop:distance_nonexpansively_composed_procs:sequential_comp} 
		$ \bisimddisc(s_1 ; s_2, t_1 ; t_2) \le 
                                      \begin{cases}
			1 & \text{if } \bisimddisc(s_1,t_1)=1 \\
			\max(d^1_{1,2},\bisimddisc(s_2,t_2)) & \text{if } \bisimddisc(s_1,t_1) \in [0,1)
		\end{cases}$ \\
	\item \label{prop:distance_nonexpansively_composed_procs:sync_parallel_comp} 
		$\bisimddisc(s_1 \mid s_2, t_1 \mid t_2) \le d^s$ 
	\item \label{prop:distance_nonexpansively_composed_procs:async_parallel_comp}  
		$\bisimddisc(s_1 \mid\mid\mid s_2, t_1 \mid\mid\mid t_2) \le d^a$
	\item \label{prop:distance_nonexpansively_composed_procs:csp_parallel_comp}  
		$\bisimddisc(s_1 \parallel_B s_2, t_1 \parallel_B t_2) \le
		\begin{cases}
            d^s & \text{if } B \setminus \{\tick\} \neq \emptyset \\
			d^a & \text{otherwise}	
		\end{cases}$ \\[0ex]
            \item \label{prop:distance_nonexpansively_composed_procs:prob_async_parallel}  
			$\bisimddisc(s_1 \mid\mid\mid_p s_2, t_1 \mid\mid\mid_p t_2) \le d^a$
\end{enumerate}
with
\begin{align*}
	d^s &= 
	\begin{cases}
				1 & \text{if } \bisimddisc(s_1,t_1) = 1 \\
				1 & \text{if } \bisimddisc(s_2,t_2) = 1 \\
				d^0_{1,2} & \text{otherwise}
	\end{cases}\\[1.0ex]
	d^a &= 
	\begin{cases}
				1 & \text{if } \bisimddisc(s_1,t_1) = 1 \\
				1 & \text{if } \bisimddisc(s_2,t_2) = 1 \\
				\max(d^2_{1,2} \,,\, d^2_{2,1}) & \text{otherwise}
	\end{cases}\\[1.0ex]
	d^n_{1,2} &= \bisimddisc(s_1,t_1) + \lambda^n (1 - \bisimddisc(s_1,t_1)/\lambda)\bisimddisc(s_2,t_2) \\
	d^n_{2,1} &= \bisimddisc(s_2,t_2) + \lambda^n (1 - \bisimddisc(s_2,t_2)/\lambda)\bisimddisc(s_1,t_1) \\
\end{align*}
\end{prop}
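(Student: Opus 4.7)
The plan is to prove each item by case analysis over the PGSOS rules for the corresponding operator, combined with the fixed-point characterization of $\bisimd$. Concretely, for each item I would take the proposed right-hand side as a function $D$ on pairs of composed terms and show that it satisfies the bisimulation transfer condition, so that $\bisimd\sqsubseteq D$ follows from Proposition~\ref{prop:coinductive_vs_fixpoint_def_bisim_metric}. In every case I first dispose of the trivial subcases $\bisimd(s_i,t_i)=1$, where the claimed bound is $1$ and the inequality follows from $1$-boundedness of $\bisimd$. Otherwise $\bisimd(s_i,t_i)<1$ for every relevant $i$, and Proposition~\ref{prop:immidiate_reactive_behavior_vars_distance_less_one}.\ref{prop:immidiate_reactive_behavior_vars_distance_less_one:bisim_metric} gives action agreement between $s_i$ and $t_i$, so the rule-by-rule mimicking argument applies uniformly.

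For every rule firing on the source $s_1\,f\,s_2$, I pick its positive premises $s_i \trans[a_{i,k}] \pi_{i,k}$, apply the transfer condition to extract mimicking transitions $t_i \trans[a_{i,k}] \pi_{i,k}'$ with $\lambda\Kantorovich(\bisimd)(\pi_{i,k},\pi_{i,k}')\le\bisimd(s_i,t_i)$, verify that each negative premise $s_i\ntrans[b]$ entails $t_i\ntrans[b]$ by action agreement, and recombine via the same rule to obtain a mimicking transition on $t_1\,f\,t_2$. The resulting derivatives on the target side are combinators applied to distributions and Dirac deltas: $\pi_1';\delta(t_2)$ or $\pi_2'$ for sequential composition; $\pi_1'\mid\pi_2'$, $\pi_1'\mid\mid\mid\delta(t_2)$, $\delta(t_1)\mid\mid\mid\pi_2'$, $\pi_1'\,\parop{B}\,\pi_2'$, $\delta(0)$, or a $p$-weighted convex combination of these in the probabilistic case.

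The crucial step is then bounding $\lambda\Kantorovich(\bisimd)(\pi,\pi')$ by the claimed distance. I would use Theorem~\ref{thm:Kantorovich_lifting} on the outermost combinator to pull the $\Kantorovich$-lifting through to the components, Proposition~\ref{prop:kantorovich_lifting}.\ref{prop:kantorovich_lifting:dirac} to identify $\Kantorovich(\bisimd)(\delta(s_j),\delta(t_j))$ with $\bisimd(s_j,t_j)$, and Proposition~\ref{prop:kantorovich_lifting}.\ref{prop:kantorovich_lifting:splitting} for the convex combinations arising in the probabilistic rules. The closed form $d^n_{i,j}=\bisimd(s_i,t_i)+\lambda^n(1-\bisimd(s_i,t_i)/\lambda)\bisimd(s_j,t_j)$ is then the natural solution of the recursion this induces: $n=1$ for sequential composition records the single $\tick$-step delay before $s_2$ and $t_2$ become observable, while $n=2$ for parallel composition reflects that after one interleaving step on the active component the idle one is still to be matched, and the symmetric $\max$ of $d^2_{1,2}$ and $d^2_{2,1}$ captures the two choices of which side moves first.

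The main obstacle will be that the candidate bound $z_n(x,y)=x+\lambda^n(1-x/\lambda)y$ is only separately, not jointly, concave in $(x,y)$, so Theorem~\ref{thm:Kantorovich_lifting} does not apply to it directly. I plan to work around this by not invoking the theorem on $z_n$ itself but, after building the joint matching from the optimal matchings of the premise transitions (in the style of the proof of Theorem~\ref{thm:Kantorovich_lifting}), exploiting that the residual bound is affine in each distribution-level distance separately and can be closed using only Proposition~\ref{prop:kantorovich_lifting}. An equally workable alternative is to prove the bound for every $\bisimd_k$ by induction on $k$ using $\bisimd=\lim_{k\to\infty}\bisimd_k$ (Definition~\ref{def:bisim_metric_uptok} together with the convergence result stated immediately afterwards), which turns the coinductive obligation into a straightforward induction and avoids the fixed-point step altogether.
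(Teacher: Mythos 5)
Your overall strategy coincides with the paper's: the paper proves only item (d) (with (b) and (c) recovered as the instances $B=\Act$ and $B=\emptyset$, and (a), (e) declared analogous) by exhibiting a candidate prefixed point of $\Bisimulation$ --- a ``metric congruence closure'' that on pairs $t_1\parop{B}t_2$, $t'_1\parop{B}t'_2$ with both component distances below $1$ takes the value $\min(\lambda[1-(1-d(t_1,t'_1)/\lambda)(1-d(t_2,t'_2)/\lambda)],\,\bisimd(t,t'))$ and elsewhere defaults to $\bisimd$ --- and then verifying the transfer condition rule by rule, by induction on the number of occurrences of $\parop{B}$ in the two terms. So do make your candidate $D$ global in exactly this way: it must be defined on \emph{all} pairs of closed terms and capped by $\bisimd$, since the derivative distributions have composed terms in their supports (forcing the recursion you allude to) and the non-composed pairs must fall back on the transfer condition of $\bisimd$ itself. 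Your point about concavity is a genuine refinement rather than a manufactured obstacle: the paper invokes Theorem~\ref{thm:Kantorovich_lifting} asserting that $z(\epsilon_1,\epsilon_2)=\lambda[1-(1-\epsilon_1/\lambda)(1-\epsilon_2/\lambda)]=\epsilon_1+\epsilon_2-\epsilon_1\epsilon_2/\lambda$ is concave, but this function is strictly convex along the direction $(1,-1)$ and hence not concave in the joint sense required by the theorem's footnote. The conclusion survives for precisely the reason you give: the matching built in the proof of Theorem~\ref{thm:Kantorovich_lifting} is a product of the component matchings, and for a function affine in each argument separately the Jensen step becomes an equality under a product measure. Spelling that out, as you propose, is the correct repair. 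Your alternative route through the up-to-$k$ metrics $\bisimd_k$ and the limit $\bisimd=\lim_{k\to\infty}\bisimd_k$ is also workable (it is the technique the paper itself uses later for the copy operator), though it quietly imports the image-finiteness and finite-support hypotheses needed for that convergence, which the fixed-point argument does not.
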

\begin{proof}
We will prove only Proposition~\ref{prop:distance_nonexpansively_composed_procs}.\ref{prop:distance_nonexpansively_composed_procs:csp_parallel_comp} (CSP-like parallel composition $\parallel_B$). 
The synchronous and asynchronous parallel composition operators (Propositions~\ref{prop:distance_nonexpansively_composed_procs}.\ref{prop:distance_nonexpansively_composed_procs:sync_parallel_comp} and~\ref{prop:distance_nonexpansively_composed_procs}.\ref{prop:distance_nonexpansively_composed_procs:async_parallel_comp}) are special cases, since $\mid$ coincides with $\parallel_{\Act}$ and $\mid\mid\mid$ coincides with $\parallel_{\emptyset}$. The proofs for the probabilistic parallel composition operator $\mid\mid\mid_p$ (Proposition~\ref{prop:distance_nonexpansively_composed_procs}.\ref{prop:distance_nonexpansively_composed_procs:prob_async_parallel}) and the sequential composition $;$ (Proposition~\ref{prop:distance_nonexpansively_composed_procs}.\ref{prop:distance_nonexpansively_composed_procs:sequential_comp}) are analogous.

We prove the case $B \setminus \{\tick\} \neq \emptyset$ (the case $B \setminus \{\tick\} = \emptyset$ is similar).
First we need to introduce the notion of congruence closure for $\lambda$-bisimilarity metric $\bisimd$ as the quantitative analogue of the well-known concept of congruence closure of a process equivalence. 
We define the metric congruence closure of $\bisimddisc$ for operator $\parallel_B$ w.r.t.\ the bound provided in Proposition~\ref{prop:distance_nonexpansively_composed_procs}.\ref{prop:distance_nonexpansively_composed_procs:csp_parallel_comp} as a function $d\colon \closedTerms \times \closedTerms \to [0,1]$ defined by
\[
	d(t,t')=
	\begin{cases}
		 \min(\lambda[1-(1-d(t_1,t'_1)/\lambda)(1-d(t_2,t'_2)/\lambda)], \bisimddisc(t,t')) & \text{if } 
			\left[\begin{array}{l}
				t=t_1\parallel_B t_{2} \,\land \\
				t'=t'_1 \parallel_B t'_{2} \,\land \\
				\bisimddisc(t_1,t'_1) < 1 \,\land \\
				\bisimddisc(t_2,t'_2) < 1
			\end{array} \right.\\
		\bisimddisc(t,t') & \text{otherwise}
	\end{cases}
\]

We note that $d$ satisfies by construction $d(s_1 \parallel_B s_2, t_1 \parallel_B t_2) \le d^s$ since $\lambda [1-(1-d(s_1,t_1)/\lambda)(1-d(s_2,t_2)/\lambda)]=d(s_1,t_1) + (1 - d(s_1,t_1)/\lambda)d(s_2,t_2)$.
We note also that $d$ satisfies by construction $d \sqsubseteq \bisimd$.
It remains to show that $\bisimddisc \sqsubseteq d$, thus giving $\bisimddisc = d$, and Proposition~\ref{prop:distance_nonexpansively_composed_procs}.\ref{prop:distance_nonexpansively_composed_procs:csp_parallel_comp} holds. 
Since $\bisimddisc$ is the least  prefixed point of $\Bisimulation$, to show $\bisimddisc \sqsubseteq d$ it is enough to prove that $d$ is a prefixed point of $\Bisimulation$.

To prove that $\Bisimulation(d) \sqsubseteq d$ we need to show that $d$ satisfies the transfer condition of the bisimulation metrics, namely
\begin{equation}\label{bound_parallel_proof_obligation}
 \text{for all } t \trans[a] \pi \text{ there exists a transition } t' \trans[a] \pi' \text{ with } \lambda \cdot \Kantorovich(d)(\pi,\pi') \le d(t,t')
\end{equation}
for all terms $t,t' \in \closedSTerms$ with $d(t,t') < 1$.

We prove Equation~\ref{bound_parallel_proof_obligation} by induction over the overall number $k$ of occurrences of operator $\parallel_B$ occurring in $t$ and $t'$.

Consider the base case $k=0$. 
By definition of $d$, we have that $d(t,t') = \bisimd(t,t')$. Since $\bisimd(t,t') < 1$ we are sure that the transition $t \trans[a] \pi$ is mimicked by some transition $t' \trans[a] \pi'$ for some distribution $\pi' \in \Delta(\closedSTerms)$ such that $\lambda \cdot \Kantorovich(\bisimd)(\pi,\pi') \le \bisimd(t,t')$. By Proposition~\ref{prop:kantorovich_lifting}.\ref{prop:kantorovich_lifting:monotonicity} from $d \sqsubseteq \bisimd$ we infer $\Kantorovich(d) \sqsubseteq \Kantorovich(\bisimd)$. Therefore we conclude
\[
\lambda \cdot \Kantorovich(d)(\pi,\pi') \le \lambda \cdot \Kantorovich(\bisimd)(\pi,\pi') \le \bisimd(t,t') = d(t,t')
\]
which confirms that Equation~\ref{bound_parallel_proof_obligation} holds for $t$ and $t'$.

Consider the inductive step $k > 0$.
If either $t$ is not of the form $t = t_1 \parallel_B t_2$, or $t'$ is not of the form $t '= t'_1 \parallel_B t'_2$, then by definition of $d$ we have $d(t,t') = \bisimd(t,t')$ and Equation~\ref{bound_parallel_proof_obligation} follows precisely as in the base case $k=0$.
If both $t = t_1 \parallel_B t_2$ and $t '= t'_1 \parallel_B t'_2$, then we distinguish two cases, namely $d(t,t') = \bisimd(t,t')$ (either $\bisimd(t_1,t'_1)=1$ or $\bisimd(t_2,t'_2)=1$ or $\bisimd(t,t') < \lambda[1-(1-d(t_1,t'_1)/\lambda)(1-d(t_2,t'_2)/\lambda)]$) and $d(t,t') = \lambda[1-(1-d(t_1,t'_1)/\lambda)(1-d(t_2,t'_2)/\lambda)]$ (both $\bisimd(t_1,t'_1)<1$ and $\bisimd(t_2,t'_2)<1$ and $\bisimd(t,t') \ge \lambda[1-(1-d(t_1,t'_1)/\lambda)(1-d(t_2,t'_2)/\lambda)]$). In case $d(t,t') = \bisimd(t,t')$ Equation~\ref{bound_parallel_proof_obligation} follows precisely as in the base case $k=0$. Consider the case $d(t,t') = \lambda[1-(1-d(t_1,t'_1)/\lambda)(1-d(t_2,t'_2)/\lambda)]$. We have four different subcases:
\begin{enumerate}
	\item $t_1 \trans[a] \pi_1$, $t_2 \trans[a] \pi_2$, $a \in B \setminus \{\tick\}$ and $\pi = \pi_1 \parallel_B \pi_2$;
	\item $t_1 \trans[a] \pi_1$, $t_2 \ntrans[a]$, $a \not\in B \cup \{\tick\}$ and $\pi = \pi_1 \parallel_B \delta(t_2)$;
	\item $t_2 \trans[a] \pi_2$, $t_1 \ntrans[a]$, $a \not\in B \cup \{\tick\}$ and $\pi = \delta(t_1) \parallel_B \pi_2$;
                \item $t_1 \trans[a] \pi_1$, $t_2 \trans[a] \pi_2$, $a = \tick$ and $\pi = \delta(0)$.
\end{enumerate}

We start with the first case. 
By $\bisimd(t_1,t'_1)<1$ and $\bisimd(t_2,t'_2) < 1$  and $d \sqsubseteq \bisimd$, we get $d(t_1,t'_1)<1$ and $d(t_2,t'_2) < 1$.
By the inductive hypothesis we get that there are also transitions $t'_1 \trans[a] \pi'_1$ and $t'_2 \trans[a] \pi'_2$ with $\lambda \cdot \Kantorovich(d)(\pi_1,\pi'_1) \le d(t_1,t'_1)$ and $\lambda \cdot \Kantorovich(d)(\pi_2,\pi'_2) \le d(t_2,t'_2)$. 
Hence, there is also the transition $t_1' \parallel_B t'_2 \trans[a] \pi'_1 \parallel_B \pi'_2$.
Then 
\begin{align*}
& \lambda \cdot \Kantorovich(d)(\pi_1 \parallel_B \pi_2, \pi'_1 \parallel_B \pi'_2) \\
\le & \lambda^2[1 - (1-\Kantorovich(d)(\pi_1,\pi'_1)/\lambda)(1- \Kantorovich(d)(\pi_2,\pi'_2)/\lambda) ] \\
\le & \lambda^2[1 - (1-d(t_1,t'_1)/\lambda^2)(1- d(t_2,t'_2)/\lambda^2) ] \\
\le & \lambda[1 - (1-d(t_1,t'_1)/\lambda)(1- d(t_2,t'_2)/\lambda) ] \\
= & d(t_1 \parallel_B t_2, t'_1 \parallel_B t'_2)
\end{align*}
with the first step by Theorem~\ref{thm:Kantorovich_lifting} (using the fact that the candidate modulus of continuity of operator $\parallel_B$ given by $z(\epsilon_1,\epsilon_2)=\lambda[1 - (1-\epsilon_1/\lambda)(1-\epsilon_2/\lambda)]$ is concave) and the second step by the inductive hypothesis $\lambda \cdot \Kantorovich(d)(\pi_i,\pi'_i) \le d(t_i,t'_i)$. 
Thus, the metric bisimulation transfer condition (Equation~\ref{bound_parallel_proof_obligation}) is satisfied for $d$ in this case. 

Consider now the second case. By $\bisimd(t_1,t'_1)<1$ and $d \sqsubseteq \bisimd$, we get $d(t_1,t'_1)<1$.
By the inductive hypothesis we get that there is also a transitions $t'_1 \trans[a] \pi'_1$ with $\lambda \cdot \Kantorovich(d)(\pi_1,\pi'_1) \le d(t_1,t'_1)$.
By Proposition~\ref{prop:immidiate_reactive_behavior_vars_distance_less_one}.\ref{prop:immidiate_reactive_behavior_vars_distance_less_one:bisim_metric} we have that $t'_2 \ntrans[a]$, therefore we can derive the transition
$t_1' \parallel_B t'_2 \trans[a] \pi'_1 \parallel_B \delta(t_2')$.
Then 
\begin{align*}
& \lambda \cdot \Kantorovich(d)(\pi_1 \parallel_B \delta(t_2), \pi'_1 \parallel_B \delta(t_2')) \\
\le & \lambda^2 [1 - (1-\Kantorovich(d)(\pi_1,\pi'_1)/\lambda)(1- \Kantorovich(d)(\delta(t_2),\delta(t'_2))/\lambda)] \\
\le & \lambda^2 [1 - (1-d(t_1,t'_1)/\lambda^2)(1-d(t_2,t'_2)/\lambda)] \\
\le & \lambda [1 - (1-d(t_1,t'_1)/\lambda)(1-d(t_2,t'_2)/\lambda)] \\
= & d(t_1 \parallel_B t_2, t'_1 \parallel_B t'_2)
\end{align*}
with step 1 again from Theorem~\ref{thm:Kantorovich_lifting} like in the first case and the second step by the inductive hypothesis $\lambda \cdot \Kantorovich(d)(\pi_1,\pi'_1) \le d(t_1,t'_1)$ and Proposition~\ref{prop:kantorovich_lifting}.\ref{prop:kantorovich_lifting:dirac}. 
Hence, the metric bisimulation transfer condition (Equation~\ref{bound_parallel_proof_obligation}) is satisfied for $d$ in this case.

The third case is analogous to the second one.

Consider now the fourth case.
By $\bisimd(t_1,t'_1)<1$ and $\bisimd(t_2,t'_2) < 1$  and $d \sqsubseteq \bisimd$, we get $d(t_1,t'_1)<1$ and $d(t_2,t'_2) < 1$.
By the inductive hypothesis we get that there are also transitions $t'_1 \trans[\tick] \pi'_1$ and $t'_2 \trans[\tick] \pi'_2$. 
Hence, there is also the transition $t_1' \parallel_B t'_2 \trans[\tick] \delta(0)$.
Then $\lambda \cdot \Kantorovich(d)(\delta(0), \delta(0)) = 0 \le d(t_1 \parallel_B t_2, t'_1 \parallel_B t'_2)$.
Thus, the metric bisimulation transfer condition (Equation~\ref{bound_parallel_proof_obligation}) is satisfied for $d$ also in this case. 
\end{proof}

The expression $d^s$ in Proposition~\ref{prop:distance_nonexpansively_composed_procs} captures the distance bound between the synchronously evolving processes $s_1$ and $s_2$ on the one hand and the synchronously evolving processes $t_1$ and $t_2$ on the other hand. We remark that the distances $\bisimddisc(s_1,t_1)$ and $\bisimddisc(s_2,t_2)$ contribute symmetrically to $d^s$ since
$d^0_{1,2} = \bisimddisc(s_1,t_1) + (1 - \bisimddisc(s_1,t_1)/\lambda)\bisimddisc(s_2,t_2) =
\bisimddisc(s_2,t_2) + (1 - \bisimddisc(s_2,t_2)/\lambda)\bisimddisc(s_1,t_1) = d^0_{2,1}$. 
The expressions $d^n_{1,2}, d^n_{2,1}$ with $n > 0$ cover different scenarios of the asynchronous evolution of those processes. The expression $d^n_{1,2}$ (resp.\ $d^n_{2,1}$) denotes the distance bound between the asynchronously evolving processes $s_1$ and $s_2$ on the one hand and the asynchronously evolving processes $t_1$ and $t_2$ on the other hand, at which the first $n$ transitions are performed by the processes $s_1$ and $t_1$ (resp.\ the first $n$ transitions are performed by processes $s_2$ and $t_2$).

If $\bisimddisc(s_1,t_1)=1$ or $\bisimddisc(s_2,t_2)=1$, then the processes $s_1$ and $t_1$ and the processes $s_2$ and $t_2$ may disagree on the initial actions they can perform, and also the composed processes may disagree on their initial actions and have then also the maximal distance of $1$ (cf.\ Proposition~\ref{prop:immidiate_reactive_behavior_vars_distance_less_one} and Remark~\ref{rem:agreementDisagreementActionDistance}).
We analyze the bound for the process combinators in details assuming both $\bisimddisc(s_1,t_1)<1$ and $\bisimddisc(s_2,t_2)<1$.

The distance between the sequentially composed processes $s_1 ; s_2$ and $t_1 ; t_2$ (Proposition~\ref{prop:distance_nonexpansively_composed_procs}.\ref{prop:distance_nonexpansively_composed_procs:sequential_comp}) is given if $\bisimddisc(s_1,t_1) \in [0,1)$ as the maximum of
\begin{enumerate}[label=(\roman*)]
	\item \label{case:sequential_comp_case1}
		distance $d^1_{1,2} = \bisimddisc(s_1,t_1) + \lambda(1 - \bisimddisc(s_1,t_1)/\lambda)\bisimddisc(s_2,t_2)$, which captures the case that first the processes $s_1$ and $t_1$ evolve followed by $s_2$ and $t_2$, and
	\item \label{case:sequential_comp_case2} 
		distance $\bisimddisc(s_2,t_2)$, which captures the case that the processes $s_2$ and $t_2$ evolve immediately because both $s_1$ and $t_1$ terminate successfully at their first computation step.
\end{enumerate}	
The distance $d^1_{1,2}$ weights the distance $\bisimddisc(s_2,t_2)$ between $s_2$ and $t_2$ by $\lambda (1-\bisimddisc(s_1,t_1)/\lambda)$. 
The discount $\lambda$ expresses that processes $s_2$ and $t_2$ are delayed by at least one transition step whenever $s_1$ and $t_1$ perform at least one transition step before terminating.
Additionally, note that the difference between $s_2$ and $t_2$ can only be observed when $s_1$ and $t_1$ agree to terminate. When processes $s_1$ and $t_1$ evolve by one step, they disagree by  $\bisimddisc(s_1,t_1)/\lambda$ on their behavior. Hence they agree by $(1-\bisimddisc(s_1,t_1)/\lambda)$. Thus, the distance between processes $s_2$ and $t_2$ needs to be additionally weighted by $(1-\bisimddisc(s_1,t_1)/\lambda)$. In case~(\ref{case:sequential_comp_case2}) the distance between $s_2$ and $t_2$ is not discounted since both processes start immediately.

The distance bound between synchronous parallel composed processes $s_1 \mid s_2$ and $t_1 \mid t_2$ (Proposition~\ref{prop:distance_nonexpansively_composed_procs}.\ref{prop:distance_nonexpansively_composed_procs:sync_parallel_comp}) is the expression $d^s$, which is
$d^0_{1,2} = \bisimddisc(s_1,t_1) + (1 - \bisimddisc(s_1,t_1)/\lambda)\bisimddisc(s_2,t_2) =
 \bisimddisc(s_2,t_2) + (1 - \bisimddisc(s_2,t_2)/\lambda)\bisimddisc(s_1,t_1) = d^0_{2,1}$, when both $\bisimd(s_1,t_1) < 1$ and $\bisimd(s_2,t_2) < 1$.
Hence the distance between $s_1 \mid s_2$ and $t_1 \mid t_2$ is bounded by the sum of the distance between $s_1$ and $t_1$, which is the degree of dissimilarity between $s_1$ and $t_1$, and the distance between $s_2$ and $t_2$ weighted by the probability that $s_1$ and $t_1$ agree on their behavior, which is the degree of dissimilarity between $s_2$ and $t_2$ under equal behavior of $s_1$ and $t_1$. 
Alternatively, by $d^0_{1,2} = d^0_{2,1} = \lambda(1-(1 - \bisimddisc(s_1,t_1)/\lambda)(1 - \bisimddisc(s_2,t_2)/\lambda))$, the bound to the distance between $s_1 \mid s_2$ and $t_1 \mid t_2$ can be  understood as composing processes on the behavior they agree upon, i.e.\ $s_1 \mid s_2$ and $t_1 \mid t_2$ agree on their behavior if $s_1$ and $t_1$ agree (probability of similarity $1 - \bisimddisc(s_1,t_1)/\lambda$) and if $s_2$ and $t_2$ agree (probability of similarity $1 - \bisimddisc(s_2,t_2)/\lambda$). The resulting distance is then the probability of dissimilarity of the respective behavior $1-(1 - \bisimddisc(s_1,t_1)/\lambda)(1 -\bisimddisc(s_2,t_2)/\lambda)$ multiplied by the discount factor $\lambda$.

The distance bound between asynchronous parallel composed processes $s_1 \mid\mid\mid s_2$ and $t_1 \mid\mid\mid t_2$ is the expression $d^a$ (Proposition~\ref{prop:distance_nonexpansively_composed_procs}.\ref{prop:distance_nonexpansively_composed_procs:async_parallel_comp}).
Hence the distance bound is the maximum of $d^2_{1,2}$, namely the distance observable when first processes $s_1$ and $t_1$ evolve by at least two transition steps and then $s_2$ and $t_2$, and  $d^2_{2,1}$, namely the distance observable when first processes $s_2$ and $t_2$ evolve by at least two transition steps and then $s_1$ and $t_1$.
Notice that at least two transition steps by the faster processes are necessary to observe their distance before the slower processes start.
The behaviors where either $s_1$ and $t_1$ perform the first transition step and $s_2$ and $t_2$ perform the second transition step, or $s_2$ and $t_2$ perform the first transition step and $s_1$ and $t_1$ perform the second transition step, give rise to a lower distance wrt. that expressed by the maximum between $d^2_{1,2}$ and $d^2_{2,1}$. The reason is that the observation of the different behaviors is delayed by more transition steps and, therefore, more discounted.
Notice that both $d^2_{1,2}$ and $d^2_{2,1}$ differ from the distance $d^s$ of the synchronously evolving processes $s_1 \mid s_2$ and $t_1 \mid t_2$ only by the discount factor $\lambda^2$ that is applied to the distance of the delayed processes. 
Moreover, $d^2_{1,2}$ differs from the distance $d^1_{1,2}$ of the sequential composed processes $s_1 ; s_2$ and $t_1 ; t_2$ by the different discount factor that is applied to the distance of the  processes $s_2$ and $t_2$.
The discount factor in case $d^2_{1,2}$ is $\lambda^2$ since $s_2$ and $t_2$ are delayed by at least two transition steps after the distance between $s_1$ and $t_1$ is observed, whereas the discount factor in case $d^1_{1,2}$ is $\lambda$ since the distance between $s_1$ and $t_1$ observed at their second transition step may be realized by the ability/inability of performing action $\tick$, which let $s_2$ and $t_2$ start immediately (namely already in this second transition step).

Processes that are composed by the CSP-like parallel composition operator $\_ \parallel_B \_$ evolve synchronously for actions in $B \setminus \{\tick\}$, evolve asynchronously for actions in $\Act \setminus (B \cup \{\tick\})$, and the action $\tick$ leads always to the stop process if both processes can perform $\tick$. Since $d^s \ge d^a$, the distance between processes $s_1 \mid\mid\mid s_2$ and $t_1 \mid\mid\mid t_2$ (Proposition~\ref{prop:distance_nonexpansively_composed_procs}.\ref{prop:distance_nonexpansively_composed_procs:csp_parallel_comp}) is bounded by $d^s$ if there is at least one action $a \in B$ with $a \neq \tick$ for which the composed processes can evolve synchronously, and otherwise by $d^a$.

The distance between processes composed by the probabilistic parallel composition operator $s_1 \mid\mid\mid_p s_2$ and $t_1 \mid\mid\mid_p t_2$ (Proposition~\ref{prop:distance_nonexpansively_composed_procs}.\ref{prop:distance_nonexpansively_composed_procs:prob_async_parallel}) is bounded by 
the expression $d^a$ since the first two rules specifying the probabilistic parallel composition define the same operational behavior as the nondeterministic parallel composition, and the third rule defining a convex combination of these transitions applies only for those actions that can be performed by both processes $s_1$ and $s_2$ and resp.\ $t_1$ and $t_2$.

The distance bounds on the distance between processes composed by non-recursive process combinators 
(Proposition~\ref{prop:distance_nonextensively_composed_procs} and~\ref{prop:distance_nonexpansively_composed_procs}) are tight.

\begin{prop} \label{prop:optimality_distances_nonext_nonexp_composed_procs}
Let $\epsilon_i \in [0,1]$. There are processes $s_i,t_i \in \T(\nonrecPASig)$ with $\bisimddisc(s_i,t_i)=\epsilon_i$ such that the inequalities in Propositions~\ref{prop:distance_nonextensively_composed_procs} and~\ref{prop:distance_nonexpansively_composed_procs} become equalities.
\end{prop}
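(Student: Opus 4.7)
The plan is to establish tightness by exhibiting, for each inequality in Propositions~\ref{prop:distance_nonextensively_composed_procs} and~\ref{prop:distance_nonexpansively_composed_procs} and each pair $(\epsilon_1,\epsilon_2)$ of achievable bisimulation distances, a concrete pair of witnesses $(s_i,t_i)$ realising equality. The first step is to construct a parametric family of atomic witnesses: for $\epsilon \in [0,\lambda]$ and fresh distinct actions $a,c$, I take
\[
    s^{\epsilon}[a,c] \;=\; a.0,
    \qquad
    t^{\epsilon}[a,c] \;=\; a.\bigl([\,1-\epsilon/\lambda\,]\,0 \;\oplus\; [\,\epsilon/\lambda\,]\, c.0\bigr),
\]
together with $s^{1}[a] = a.0$ and $t^{1}[a] = 0$ for the extreme $\epsilon = 1$. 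A direct computation using Propositions~\ref{prop:kantorovich_lifting} and~\ref{prop:bisim_metric_lfp_D} gives $\bisimddisc(s^{\epsilon},t^{\epsilon}) = \epsilon$, and the optimal Kantorovich matching of their $a$-derivatives places mass only on pairs of distance $0$ or $1$. This extremality is what will turn subsequent inequalities into equalities.

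For the non-extensive cases (Proposition~\ref{prop:distance_nonextensively_composed_procs}) the witnesses are chosen so that cross-matching becomes prohibitive. For the probabilistic prefix I set $s_i = s^{\epsilon_i}[a_i,c_i]$ and $t_i = t^{\epsilon_i}[a_i,c_i]$ with pairwise distinct $a_i$: distinct initial actions force the cost-minimising matching of $\sum_i p_i \delta(s_i)$ with $\sum_i p_i \delta(t_i)$ to pair $s_i$ with $t_i$ at mass $p_i$, giving Kantorovich distance $\sum_i p_i \epsilon_i$, which after the outer $\lambda$-discount matches the bound. For both alternative compositions I choose $s_1,t_1$ to use only initial action $a_1$ and $s_2,t_2$ to use only initial action $a_2 \neq a_1$; the Hausdorff supremum over actions then splits into the two singleton derivative sets and yields $\max(\epsilon_1,\epsilon_2)$.

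For the non-expansive cases (Proposition~\ref{prop:distance_nonexpansively_composed_procs}) the construction must additionally enforce the right interleaving of transitions. For the synchronous parallel I replace the atomic tails by variants of shape $a.\varepsilon$ vs $a.([1-\epsilon/\lambda]\varepsilon \oplus [\epsilon/\lambda]c.\varepsilon)$ sharing a common initial action $a \in B$, so that synchronisation is the only option, and then exploit the fact that $\varepsilon;s$ behaves as $s$ to continue with the second component; the upper bound from Theorem~\ref{thm:Kantorovich_lifting} applied with the induced product matching is then attained as $\lambda[1-(1-\epsilon_1/\lambda)(1-\epsilon_2/\lambda)] = d^{0}_{1,2}$. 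For asynchronous, probabilistic and CSP-like parallel I use witnesses in which the ``faster'' component needs two transition steps to expose its distance, so that the scheduler realising $d^{2}_{1,2}$ (respectively $d^{2}_{2,1}$) is the dominant one; maximisation over schedulers is obtained by swapping the roles of the two arguments. For sequential composition I present two separate witnesses: one in which $s_1,t_1$ perform a non-$\tick$ step before terminating via $\tick$, realising the $d^{1}_{1,2}$ branch, and one in which $s_1 = t_1 = \varepsilon$ terminate immediately so the distance collapses to $\bisimddisc(s_2,t_2)$.

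The main obstacle is ensuring that the bound coming from Theorem~\ref{thm:Kantorovich_lifting} is attained rather than merely upper-bounded: since the modulus $z(\epsilon_1,\epsilon_2) = \lambda[1-(1-\epsilon_1/\lambda)(1-\epsilon_2/\lambda)]$ is strictly concave, Jensen's inequality in the proof of that theorem becomes an equality precisely when the involved matchings are supported on pairs of distance $0$ or $1$. This is exactly the extremality property guaranteed by the atomic witnesses, so the remaining technical task is to verify that this $\{0,1\}$-valued support property is preserved under the product matchings built coordinatewise from the atomic ones.
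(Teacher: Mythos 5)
Your witness family is, up to cosmetic relabelling, the paper's own: the paper's atoms are $t_i=a.\varepsilon$ against $s_i=a.([1-\epsilon_i/\lambda]\varepsilon\oplus[\epsilon_i/\lambda]0)$ (with distinct initial actions for the prefix and alternative compositions, a shared action for the synchronising operators, and two-step variants $a_i.a.0$ for the interleaving ones), and yours merely swap which side carries the probabilistic split and use $c.0$ in place of $\varepsilon$ as the totally-disagreeing branch. The choices of distinct versus shared initial actions and of one- versus two-step witnesses coincide with the paper's, so the construction itself is sound.

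The gap is in your final paragraph, i.e.\ in how you certify that the bounds are \emph{attained}. First, $z(\epsilon_1,\epsilon_2)=\lambda[1-(1-\epsilon_1/\lambda)(1-\epsilon_2/\lambda)]=\epsilon_1+\epsilon_2-\epsilon_1\epsilon_2/\lambda$ is not strictly concave --- its Hessian has zero diagonal and off-diagonal entries $-1/\lambda$, hence eigenvalues $\pm 1/\lambda$ --- and the equality case of the Jensen step is not characterised by matchings supported on pairs at distance $0$ or $1$: for this bilinear $z$ that step is an equality exactly when the component distances are uncorrelated under the coupling, which holds for the product matching by independence regardless of the support. Second, and more seriously, even if every inequality in the proof of Theorem~\ref{thm:Kantorovich_lifting} were an equality, you would only have shown that the \emph{product} matching realises the value $z(\ldots)$; since $\Kantorovich(d)$ is a minimum over $\Omega(\pi,\pi')$, that is again only an upper bound, and tightness requires a matching-independent lower bound. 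For these witnesses that lower bound is immediate and needs none of this machinery: in the parallel and sequential cases one of the two composed processes has a Dirac $a$-derivative (e.g.\ $t_1\mid t_2\trans[a]\delta(\varepsilon\mid\varepsilon)$), so $\Omega(\pi,\pi')$ is a singleton and the Kantorovich distance is a plain expectation of $\{0,1\}$-valued component distances, while in the prefix case the distinct initial actions make every off-diagonal pair cost $1$, forcing the diagonal matching. Replace the Jensen discussion with this direct computation and the argument is complete.
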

\begin{proof}
We start with Proposition~\ref{prop:distance_nonextensively_composed_procs}. 
Let $\Act = \{a_1, \ldots, a_n \} \cup \{\tick\}$. 
We define now the witness processes
\begin{itemize}
	\item $s_i = t_i = a_i.\varepsilon$, if $\epsilon_i = 0$;
	\item $s_i = a_i.([1-\epsilon_i/\lambda]\varepsilon\oplus [\epsilon_i/\lambda]0)$ and $t_i = a_i.\varepsilon$, if $\epsilon_i \in (0,\lambda)$;
	\item $s_i = a_i.0$ and $t_i = a_i.\varepsilon$, if $\epsilon_i = \lambda < 1$;
	\item $s_i = 0$ and $t_i = a_i.\varepsilon$, if $\epsilon_i = 1$.
\end{itemize}
It is easy to see that these processes yield for all process combinators of Proposition~\ref{prop:distance_nonextensively_composed_procs} exactly the stated upper bound.

We proceed now with Propositions~\ref{prop:distance_nonexpansively_composed_procs}.\ref{prop:distance_nonexpansively_composed_procs:sequential_comp}, \ref{prop:distance_nonexpansively_composed_procs}.\ref{prop:distance_nonexpansively_composed_procs:sync_parallel_comp} and \ref{prop:distance_nonexpansively_composed_procs}.\ref{prop:distance_nonexpansively_composed_procs:csp_parallel_comp}, case $B \setminus \{\tick\} \neq \emptyset$.
Let $\Act = \{a,\tick\}$ with $a \in B$. We define now the witness processes 
\begin{itemize}
	\item $s_i = t_i = a.\varepsilon$, if $\epsilon_i = 0$;
	\item $s_i = a.([1-\epsilon_i/\lambda]\varepsilon \oplus [\epsilon_i/\lambda]0)$ and $t_i = a.\varepsilon$, if $\epsilon_i \in (0,\lambda)$;
	\item $s_i = a.0$ and $t_i = a.\varepsilon$, if $\epsilon_i = \lambda<1$;
	\item $s_i = 0$ and $t_i = a.\varepsilon$ if $\epsilon_i = 1$.
\end{itemize} 
These processes yield for all process combinators of Propositions~\ref{prop:distance_nonexpansively_composed_procs}.\ref{prop:distance_nonexpansively_composed_procs:sequential_comp}, \ref{prop:distance_nonexpansively_composed_procs}.\ref{prop:distance_nonexpansively_composed_procs:sync_parallel_comp} and \ref{prop:distance_nonexpansively_composed_procs}.\ref{prop:distance_nonexpansively_composed_procs:csp_parallel_comp}, case $B \setminus \{\tick\} \neq \emptyset$, exactly the stated upper bound.

Finally, we conclude with Propositions~\ref{prop:distance_nonexpansively_composed_procs}.\ref{prop:distance_nonexpansively_composed_procs:async_parallel_comp}, \ref{prop:distance_nonexpansively_composed_procs}.\ref{prop:distance_nonexpansively_composed_procs:prob_async_parallel} and \ref{prop:distance_nonexpansively_composed_procs}.\ref{prop:distance_nonexpansively_composed_procs:csp_parallel_comp}, case $B \setminus \{\tick\} = \emptyset$.
Let $\Act = \{a_1,a_2,a\} \cup \{\tick\}$. We define now the witness processes
\begin{itemize}
	\item $s_i = t_i = a_i.a.0$, if $\epsilon_i = 0$;
	\item $s_i = a_i.([1-\epsilon_i/\lambda]a.0\oplus [\epsilon_i/\lambda]0)$ and $t_i = a_i.a.0$, if $\epsilon_i \in (0,\lambda)$;
	\item $s_i = a_i.0$ and $t_i = a_i.a.0$, if $\epsilon_i = \lambda < 1$;
	\item $s_i = 0$ and $t_i = a_i.\varepsilon$, if $\epsilon_i = 1$.
\end{itemize}
These processes yield for all process combinators of Propositions~\ref{prop:distance_nonexpansively_composed_procs}.\ref{prop:distance_nonexpansively_composed_procs:async_parallel_comp}, \ref{prop:distance_nonexpansively_composed_procs}.\ref{prop:distance_nonexpansively_composed_procs:prob_async_parallel} and \ref{prop:distance_nonexpansively_composed_procs}.\ref{prop:distance_nonexpansively_composed_procs:csp_parallel_comp}, case $B \setminus \{\tick\} = \emptyset$, exactly the stated upper bound.
\end{proof}

\subsection{Compositional reasoning over non-recursive processes} \label{sec:comp_reasoning_nonrec_procs}

In order to specify and verify systems in a compositional manner, it is necessary that the behavioral semantics is compatible with all operators of the language that describe these systems.
There are multiple proposals which properties of process combinators facilitate compositional reasoning. In this section we discuss non-extensiveness~\cite{BBLM13b} and non-expansiveness~\cite{DJGP02,DGJP04,DCPP06,CGPX14}),
which are compositionality properties based on the $p$-norm. They allow for compositional reasoning over probabilistic processes that are built of non-recursive process combinators. Non-extensiveness and non-expansiveness are very strong forms of uniform continuity. For instance, a non-expansive operator ensures that the distance between the composed processes is at most the sum of the distances between their parts. 
Later in Section~\ref{sec:recProcs:compReasoning} we will propose uniform continuity as generalization of these properties 
that allows also for compositional reasoning over recursive processes.

\begin{defi}[Non-extensive process combinator]\label{def:non_extensiveness}
A process combinator $f \in \Sigma$ is \emph{non-extensive} w.r.t.\ $\lambda$-bisimilarity metric $\bisimddisc$ if 
\[ 
	\bisimddisc(f(s_1,\dots,s_n), f(t_1,\dots,t_n)) 
		\le
	\max_{i=1}^n \bisimddisc(s_i,t_i) 
\]
for all closed process terms $s_i,t_i \in \closedSTerms$.
\end{defi}

Probabilistic action prefix, nondeterministic alternative composition, and probabilistic alternative composition are non-extensive  w.r.t.\ $\bisimddisc$.

\begin{thm}\label{thm:non_extensiveness}
The process combinators 
\begin{itemize}
	\item probabilistic action prefix $a.\bigoplus_{i=1}^n[p_i]\_$\ 
	\item nondeterministic alternative composition $\_ + \_$\ 
	\item probabilistic alternative composition $\_ +_p \_$
\end{itemize}
are non-extensive w.r.t.\ $\lambda$-bisimilarity metric $\bisimddisc$ \ for any $\lambda\in(0,1]$.
\end{thm}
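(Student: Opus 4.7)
My plan is to derive Theorem~\ref{thm:non_extensiveness} directly from the tight distance bounds already established in Proposition~\ref{prop:distance_nonextensively_composed_procs}. For the three operators in question, the work of showing non-extensiveness reduces to a short numerical argument on the right-hand sides of those bounds.

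First I would handle the nondeterministic alternative composition and the probabilistic alternative composition. For these, Proposition~\ref{prop:distance_nonextensively_composed_procs}(b) and (c) already state
\[
\bisimddisc(s_1 + s_2, t_1 + t_2) \le \max(\bisimddisc(s_1,t_1), \bisimddisc(s_2,t_2))
\]
and the analogous inequality for $+_p$, which is exactly the non-extensiveness condition of Definition~\ref{def:non_extensiveness} specialized to binary operators. So for these two combinators no additional reasoning is required.

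For the probabilistic action prefix, Proposition~\ref{prop:distance_nonextensively_composed_procs}(a) yields
\[
\bisimddisc\bigl(a.\bigoplus_{i=1}^n[p_i]s_i,\; a.\bigoplus_{i=1}^n[p_i]t_i\bigr) \le \lambda \cdot \sum_{i=1}^n p_i \bisimddisc(s_i,t_i).
\]
I would then observe that, since the $p_i$ form a convex combination and $\lambda \in (0,1]$,
\[
\lambda \cdot \sum_{i=1}^n p_i \bisimddisc(s_i,t_i) \;\le\; \sum_{i=1}^n p_i \bisimddisc(s_i,t_i) \;\le\; \max_{i=1}^n \bisimddisc(s_i,t_i),
\]
which gives exactly the non-extensiveness bound.

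There is no substantial obstacle: the real technical content (the bisimulation transfer argument and the use of the Kantorovich lifting via Proposition~\ref{prop:kantorovich_lifting}) has already been carried out in the proof of Proposition~\ref{prop:distance_nonextensively_composed_procs}. The only thing worth remarking in the write-up is that the discount $\lambda$ makes the action-prefix bound even stronger than the non-extensiveness bound, so non-extensiveness holds uniformly for every $\lambda \in (0,1]$.
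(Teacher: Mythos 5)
Your proposal is correct and matches the paper's proof, which simply states that the theorem follows directly from Proposition~\ref{prop:distance_nonextensively_composed_procs}; you have merely made explicit the small numerical step $\lambda \cdot \sum_{i=1}^n p_i \bisimddisc(s_i,t_i) \le \max_{i=1}^n \bisimddisc(s_i,t_i)$ for the prefix case, which is exactly the intended argument.
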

\begin{proof}
Follows directly from Proposition~\ref{prop:distance_nonextensively_composed_procs}.
\end{proof}

All other operators of $\nonrecPASig$ are not non-extensive.

\begin{prop} \label{prop:not_non_extensive}
None of the process combinators
\begin{itemize}
	\item sequential composition $\_ \,; \_$
	\item synchronous parallel composition $\_ \mid \_$
	\item asynchronous parallel composition $\_ \mid\mid\mid \_$
	\item CSP-like parallel composition $\_ \parallel_B \_$
	\item probabilistic parallel composition $\_ \mid\mid\mid_p \_$
\end{itemize}
is non-extensive w.r.t.\ $\lambda$-bisimilarity metric $\bisimddisc$ \ for any $\lambda\in(0,1]$.
\end{prop}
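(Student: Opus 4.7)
The plan is to give, for each listed operator $f$, a concrete pair of processes $(s_1,s_2)$, $(t_1,t_2)$ such that the bound $\bisimddisc(f(s_1,s_2),f(t_1,t_2)) \le \max(\bisimddisc(s_1,t_1),\bisimddisc(s_2,t_2))$ is violated. The key leverage is that the distance bounds in Proposition~\ref{prop:distance_nonexpansively_composed_procs} are already known to be tight by Proposition~\ref{prop:optimality_distances_nonext_nonexp_composed_procs}, so I can just reuse the witnesses from the proof of that proposition and verify that the resulting expression strictly exceeds the maximum of the component distances.

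Concretely, I would fix a small $\epsilon \in (0,\lambda)$ (e.g.\ $\epsilon = \lambda/2$) and take $s_i = a_i.([1-\epsilon/\lambda]\varepsilon \oplus [\epsilon/\lambda]0)$, $t_i = a_i.\varepsilon$ (choosing the same or different action symbols $a_1,a_2$ as needed for the synchronous versus asynchronous cases, exactly as in the witness constructions of Proposition~\ref{prop:optimality_distances_nonext_nonexp_composed_procs}). Then $\bisimddisc(s_i,t_i)=\epsilon$ and $\max(\bisimddisc(s_1,t_1),\bisimddisc(s_2,t_2))=\epsilon$. Applying the tight-bound formulas of Proposition~\ref{prop:distance_nonexpansively_composed_procs} yields, for each operator, a distance of the form $\epsilon + \lambda^n(1-\epsilon/\lambda)\epsilon$ for some $n \in \{0,1,2\}$; this is strictly greater than $\epsilon$ because $\lambda^n(1-\epsilon/\lambda)\epsilon > 0$ whenever $\lambda \in (0,1]$ and $\epsilon \in (0,\lambda)$.

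Operator by operator: for $\_\,;\_$ use $d^1_{1,2}=\epsilon+\lambda(1-\epsilon/\lambda)\epsilon > \epsilon$; for $\_\mid\_$ and for $\_\parallel_B\_$ with $B\setminus\{\tick\}\neq\emptyset$ use $d^s=d^0_{1,2}=\epsilon+(1-\epsilon/\lambda)\epsilon > \epsilon$; for $\_\mid\mid\mid\_$, $\_\mid\mid\mid_p\_$ and $\_\parallel_B\_$ with $B\setminus\{\tick\}=\emptyset$ use $d^a$, choosing disjoint actions $a_1\neq a_2$ so both components must contribute along interleavings and hence $d^2_{1,2}=\epsilon+\lambda^2(1-\epsilon/\lambda)\epsilon > \epsilon$. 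In every case the violation is witnessed by a strictly positive additive term.

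The only subtlety, which I would address once at the start, is the degenerate case $\lambda=1$: the witnesses above still work since $\epsilon\in(0,1)$ makes the correction $(1-\epsilon/\lambda)\epsilon=(1-\epsilon)\epsilon$ strictly positive, so the counterexample survives uniformly in $\lambda\in(0,1]$. I do not anticipate any genuine obstacle here; the main thing to be careful about is that the chosen witnesses actually realize the tight upper bound (not just provide an upper bound), which is exactly what Proposition~\ref{prop:optimality_distances_nonext_nonexp_composed_procs} guarantees, so the proof reduces to a brief citation of that proposition together with the elementary inequality $\lambda^n(1-\epsilon/\lambda)\epsilon>0$.
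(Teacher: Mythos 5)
Your proposal is correct and takes essentially the same route as the paper, whose entire proof is the one-line citation of Propositions~\ref{prop:distance_nonexpansively_composed_procs} and~\ref{prop:optimality_distances_nonext_nonexp_composed_procs}; you simply spell out the arithmetic showing that the tight bounds $d^1_{1,2}$, $d^s$, $d^a$ strictly exceed $\max(\bisimddisc(s_1,t_1),\bisimddisc(s_2,t_2))$ on the optimality witnesses. No gaps.
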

\proof
Follows directly from Propositions~\ref{prop:distance_nonexpansively_composed_procs} and~\ref{prop:optimality_distances_nonext_nonexp_composed_procs}.
\qed

We proceed now with the compositionality property of non-expansiveness.

\begin{defi}[Non-expansive process combinator]\label{def:non_expansiveness}
A process combinator $f \in \Sigma$ is \emph{non-expansive} w.r.t.\ $\lambda$-bisimilarity metric $\bisimddisc$ if 
\[ 
	\bisimddisc(f(s_1,\dots,s_n), f(t_1,\dots,t_n)) 
		\le  
	\sum_{i=1}^n \bisimddisc(s_i,t_i) 
\]
for all closed process terms $s_i,t_i \in \closedSTerms$.
\end{defi}

It is clear that if a process combinator $f$ is non-extensive, then $f$ is non-expansive.
Moreover, the two notions coincide when $f$ is unary.

\begin{thm}\label{thm:non_expansiveness}
All non-recursive process combinators of $\nonrecPASig$

are non-expansive w.r.t.\ $\lambda$-bisimilarity metric $\bisimddisc$ \ for any $\lambda\in(0,1]$.
\end{thm}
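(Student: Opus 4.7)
The plan is to reduce the claim to the distance bounds already established in Propositions~\ref{prop:distance_nonextensively_composed_procs} and~\ref{prop:distance_nonexpansively_composed_procs}, and then verify that each bound is dominated by $\sum_{i=1}^n \bisimddisc(s_i,t_i)$. I would split the operators into two groups and handle each separately.

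For the probabilistic action prefix $a.\bigoplus_{i=1}^n[p_i]\_$, the nondeterministic alternative composition $\_+\_$, and the probabilistic alternative composition $\_+_p\_$, non-expansiveness is immediate from Theorem~\ref{thm:non_extensiveness}, because non-extensiveness is a strictly stronger property: $\max_{i=1}^n \bisimddisc(s_i,t_i) \le \sum_{i=1}^n \bisimddisc(s_i,t_i)$. For the probabilistic prefix the bound of Proposition~\ref{prop:distance_nonextensively_composed_procs}.\ref{prop:distance_nonextensively_composed_procs:action_prefix} is in fact $\lambda\sum_{i=1}^n p_i \bisimddisc(s_i,t_i)$, which is dominated by $\sum_{i=1}^n \bisimddisc(s_i,t_i)$ since $\lambda\le 1$ and $p_i\le 1$.

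For the remaining binary operators (sequential composition, synchronous parallel, asynchronous parallel, CSP-like parallel, and probabilistic parallel), I would first dispose of the trivial case: whenever $\bisimddisc(s_1,t_1)=1$ or $\bisimddisc(s_2,t_2)=1$, the sum $\bisimddisc(s_1,t_1)+\bisimddisc(s_2,t_2)\ge 1$, and non-expansiveness holds since $\bisimddisc$ is $1$-bounded. In the remaining case both distances are below $1$, so Proposition~\ref{prop:distance_nonexpansively_composed_procs} applies, and it suffices to show that the relevant bounds $d^1_{1,2}$, $d^s$, $d^a$ are all dominated by $\bisimddisc(s_1,t_1)+\bisimddisc(s_2,t_2)$. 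The key elementary fact, using $\lambda\in(0,1]$, is that $\lambda^n\bigl(1-\bisimddisc(s_1,t_1)/\lambda\bigr)\in[0,1]$ for every $n\ge 0$, so that
\[
d^n_{1,2} \;=\; \bisimddisc(s_1,t_1) + \lambda^n\bigl(1-\bisimddisc(s_1,t_1)/\lambda\bigr)\bisimddisc(s_2,t_2) \;\le\; \bisimddisc(s_1,t_1)+\bisimddisc(s_2,t_2),
\]
and symmetrically for $d^n_{2,1}$. Since $d^s=d^0_{1,2}$ and $d^a=\max(d^2_{1,2},d^2_{2,1})$, and the sequential composition bound is $\max(d^1_{1,2},\bisimddisc(s_2,t_2))$, each of these is below the sum, as required.

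The proof is essentially bookkeeping rather than conceptual, since all the work has already been done in establishing the tight distance bounds. The only mild subtlety I anticipate is keeping track of the case distinction for the sequential composition bound and making sure the argument remains valid for $\lambda=1$, where the discount terms collapse but the bound still holds because $(1-\bisimddisc(s_1,t_1))\le 1$. No appeal to Theorem~\ref{thm:Kantorovich_lifting} or the metric congruence closure machinery is needed at this stage, as those tools were already expended in proving Proposition~\ref{prop:distance_nonexpansively_composed_procs}.
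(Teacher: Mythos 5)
Your proposal is correct and follows essentially the same route as the paper, which proves the theorem by combining Propositions~\ref{prop:distance_nonextensively_composed_procs} and~\ref{prop:distance_nonexpansively_composed_procs} with the observation that $d^a, d^1_{1,2} \le d^s \le \bisimd(s_1,t_1)+\bisimd(s_2,t_2)$; your direct verification that each $d^n_{i,j}$ is dominated by the sum (using that the weighting factor lies in $[0,1]$) is just a slightly more explicit version of the same bookkeeping.
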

\begin{proof}
Follows directly from Propositions~\ref{prop:distance_nonextensively_composed_procs} and~\ref{prop:distance_nonexpansively_composed_procs} and the observation that 
$d^a , d^1_{1,2} \le d^s \le \bisimd(s_1,t_1) + \bisimd(s_2,t_2)$.
\end{proof}

Theorem~\ref{thm:non_expansiveness} generalizes a similar result of~\cite{DGJP04} which considered only PTSs without nondeterministic branching and only a small set of process combinators. The analysis which operators are non-extensive (Theorem~\ref{thm:non_extensiveness}) and the tight distance bounds (Propositions~\ref{prop:distance_nonextensively_composed_procs}, and~\ref{prop:distance_nonexpansively_composed_procs} and~\ref{prop:optimality_distances_nonext_nonexp_composed_procs}) are novel.

\section{Recursive processes} \label{sec:rec_procs}

Recursion is necessary to express infinite (non-terminating) behavior in terms of finite process expressions. Moreover, recursion allows us to express repetitive finite behavior in a compact way. 
We will discuss now compositional reasoning over probabilistic processes that are composed by recursive process combinators. We will see that the compositionality properties of non-extensiveness and non-expansiveness used for non-recursive process combinators (Section~\ref{sec:comp_reasoning_nonrec_procs}) fall short for recursive process combinators. We will propose the more general property of uniform continuity (Section~\ref{sec:recProcs:compReasoning}) that captures the inherent nature of compositional reasoning over probabilistic processes. In fact, it allows us to reason compositionally over processes that are composed by both recursive and non-recursive process combinators. In the next section we apply these results to reason compositionally over a communication protocol and derive its respective performance properties. To the best of our knowledge this is the first study which explores systematically compositional reasoning over recursive processes in the context of bisimulation metric semantics. 
We remark that recursive process combinators are indispensable for effective modeling and verification of safety critical systems, network protocols, and systems biology. 

\subsection{Recursive process combinators} \label{sec:rec_probabilistic_process_algebras}

We define $\recPASpec$
as disjoint extension of $\nonrecPASpec$ with the following operators: 
\begin{itemize}
	\item finite iteration $\_^n$,
	\item infinite iteration $\_^\omega$,
	\item binary Kleene-star iteration $\_^*\_$,
	\item probabilistic Kleene-star iteration $\_^{*_p}\_$,
	\item finite replication $!^n\_$,
	\item infinite replication (bang) operator $!\_$, and
	\item probabilistic bang operator $!_p\_$.
\end{itemize}
The operational semantics of these operators is specified by the rules in Table~\ref{tab:prob_algebra_rec_process_combinators}. 

The finite iteration $t^n$ (resp.\ infinite iteration $t^\omega$) of process $t$ expresses that $t$ is performed $n$ times (resp.\ infinitely often) in sequel. 
The binary Kleene-star expresses for ${t_1}^*t_2$ that either $t_1$ is performed infinitely often in sequel, or $t_1$ is performed a finite number of times in sequel, followed by $t_2$.
The bang operator expresses for $! t$ (resp.\ finite replication $!^n t$) that infinitely many copies (resp.\ $n$ copies) of $t$ evolve asynchronously. 
The probabilistic Kleene-star iteration~\cite[Section~5.2.4(vi)]{Bar04} expresses that ${t_1}^{*_p}t_2$ evolves to a probabilistic choice (with respectively the probability $p$ and $1-p$) between the two nondeterministic choices of the Kleene star operation ${t_1}^*{t_2}$ for actions which can be performed by both $t_1$ and $t_2$.
For actions that can be performed by either only $t_1$ or only $t_2$, ${t_1}^{*_p}t_2$ behaves just like ${t_1}^*{t_2}$.
The probabilistic bang replication \cite[Fig.~1]{MS13} expresses that $!_pt$ replicates the argument process $t$ with probability $1-p$ and behave just like $t$ with probability $p$.

\begin{table}[!t]
\begin{gather*}
	\SOSrule{x \trans[a] \mu \quad\ a\neq\tick}{x^{n+1} \trans[a] \mu;\delta(x^n)}
	\quad\quad
	\SOSrule{x \trans[\tick] \mu}{x^{n+1} \trans[\tick] \mu}
	\quad\quad 
	\SOSrule{}{x^{0} \trans[\tick] \delta(0)}
	\quad\quad
	\SOSrule{x \trans[\tick] \mu \quad x \trans[a] \nu \quad a \neq \tick \quad n>m}{x^{n} \trans[a] \nu;\delta(x^m)}
\\[2ex]
	\SOSrule{x \trans[a] \mu \quad\ a\neq\tick}{x^{\omega} \trans[a] \mu;\delta(x^\omega)}
	\qquad	
	\SOSrule{x \trans[a] \mu \quad\ a\neq\tick}{x^*y \trans[a] \mu;\delta(x^*y)}
	\qquad
	\SOSrule{y \trans[a] \nu}{x^*y \trans[a] \nu}
\\[2ex]
	\SOSrule{x \trans[a] \mu \ \ \ \ y \trans[a] \nu \ \ \ \ a \neq \tick}
			{x^{*_p}y \trans[a] \nu \oplus_p \mu;\delta(x^{*_p}y)}
	\quad\quad
	\SOSrule{x \trans[a] \mu \ \ \ \ y \ntrans[a] \ \ \ \ a \neq \tick}
			{x^{*_p}y \trans[a] \mu;\delta(x^{*_p} y)}
	\quad\quad
	\SOSrule{x \ntrans[a] \ \ \ \ y \trans[a] \nu \ \ \ \ a \neq \tick }
			{x^{*_p}y \trans[a] \nu}
	\quad\quad
	\SOSrule{y \trans[\tick] \nu}
			{x^{*_p}y \trans[\tick] \nu}			
\\[2ex]
	\SOSrule{x \trans[a] \mu \quad a \neq \tick}{!^{n+1} x \trans[a] \mu \mid\mid\mid \delta(!^n x)}
	\qquad
	\SOSrule{x \trans[\tick] \mu}{!^{n+1} x \trans[\tick] \mu}
	\qquad
	\SOSrule{}{!^{0} x \trans[\tick] \delta(0)}
\\[2ex]
	\SOSrule{x \trans[a] \mu \quad\ a\neq\tick}{! x \trans[a] \mu \mid\mid\mid \delta(! x)}
	\qquad
	\SOSrule{x \trans[a] \mu \quad\ a\neq\tick}{!_p x \trans[a]  \mu \oplus_p (\mu \mid\mid\mid \delta(!_p x))} 
\end{gather*}
\caption{Standard recursive process combinators}
\label{tab:prob_algebra_rec_process_combinators}
\end{table}

\subsection{Distance between processes combined by recursive process combinators} \label{sec:distance_rec_procs}
We develop now tight bounds on the distance between processes combined by the recursive process combinators presented in Table~\ref{tab:prob_algebra_rec_process_combinators}. 

\begin{prop}\label{prop:distance_finite_recursion}
Let $P=(\Sigma,\Act,R)$ be any PTSS with $\recPASpec \dext P$. For all terms $s,s_i,t,t_i \in \closedTerms$ it holds:
\begin{enumerate}[label=\({\alph*}]
	\item \label{prop:distance_finite_recursion:finite_iteration}
		$\bisimddisc(s^n,t^n) \le d^n$
	\item \label{prop:distance_finite_recursion:finite_replication} 
		$\bisimddisc(!^n s,!^n t) \le d^{!^n}$
	\item \label{prop:distance_infinite_recursion:infinite_iteration} 
		$\bisimddisc(s^\omega,t^\omega) \le d^\omega$
	\item \label{prop:distance_infinite_recursion:infinite_replication}
                                  $\bisimddisc(! s,! t) \le d^!$
	\item \label{prop:distance_infinite_recursion:kleene_star}
		$\bisimddisc({s_1}^*s_2,{t_1}^*t_2) \le \max (\bisimddisc({s_1}^\omega,{t_1}^\omega) , \bisimddisc(s_2,t_2) )$\\[-2ex]
	\item \label{prop:distance_probabilistic_kleene}
               $\bisimddisc({s_1}^{*_p} s_2,{t_1}^{*_p} t_2) \le \bisimddisc({s_1}^*s_2,{t_1}^*t_2)$ 
	\item \label{prop:distance_probabilistic_bang}
            $\displaystyle\bisimddisc(!_p s, !_p t) \le \begin{cases}
    \bisimddisc(s,t)\frac{1}{1-(1-p)(\lambda^2 - \lambda\bisimddisc(s,t))} & \text{if } \bisimddisc(s,t) \in (0,1) \\
\bisimddisc(s,t) & \text{if } \bisimddisc(s,t) \in \{0,1\}
\end{cases}$
\end{enumerate}
with
\begin{align*}
	d^n &= 
	\begin{cases} 
		\bisimddisc(s,t) \frac{1-(\lambda - \bisimddisc(s,t))^n}{1-(\lambda - \bisimddisc(s,t))} & \text{if } \bisimddisc(s,t) \in (0,1) \\
		\bisimddisc(s,t) & \text{if } \bisimddisc(s,t) \in \{0,1\}
	\end{cases} \\[1ex]
                d^{!^n} &= 
	\begin{cases} 
		\bisimddisc(s,t) \frac{1-(\lambda^2 - \lambda\bisimddisc(s,t))^n}{1-(\lambda^2 - \lambda\bisimddisc(s,t))} & \text{if } \bisimddisc(s,t) \in (0,1) \\
		\bisimddisc(s,t) & \text{if } \bisimddisc(s,t) \in \{0,1\}
	\end{cases} \\[1ex]
	d^\omega &= 
	\begin{cases} 
	   \bisimddisc(s,t)  \frac{1}{1-(\lambda - \bisimddisc(s,t))} & \text{if } \bisimddisc(s,t) \in (0,1) \\
		\bisimddisc(s,t) & \text{if } \bisimddisc(s,t) \in \{0,1\}
	\end{cases}\\[1ex]
	d^! &= 
	\begin{cases} 
	   \bisimddisc(s,t)  \frac{1}{1-(\lambda^2 - \lambda\bisimddisc(s,t))} & \text{if } \bisimddisc(s,t) \in (0,1) \\
		\bisimddisc(s,t) & \text{if } \bisimddisc(s,t) \in \{0,1\}
	\end{cases}\\[1ex]
\end{align*}
\end{prop}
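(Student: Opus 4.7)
The overall strategy mirrors the proof of Proposition~\ref{prop:distance_nonexpansively_composed_procs}.\ref{prop:distance_nonexpansively_composed_procs:csp_parallel_comp}: for each claimed bound I would introduce a \emph{metric congruence closure}, namely a candidate $d\colon\closedSTerms\times\closedSTerms\to[0,1]$ that agrees with the claimed expression on pairs of terms of the appropriate shape (capped below $\bisimd$ via a $\min$, so that $d\sqsubseteq\bisimd$ holds by construction) and equals $\bisimd$ elsewhere. Showing that $d$ is a prefixed point of the bisimulation functional $\mathbf{B}$ then yields $\bisimd\sqsubseteq d$ by Proposition~\ref{prop:bisim_metric_lfp_D}, which combined with $d\sqsubseteq\bisimd$ gives the desired inequality. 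In all seven parts the edge cases $\bisimd(s,t)\in\{0,1\}$ are immediate: $\bisimd(s,t)=1$ gives the bound vacuously, while $\bisimd(s,t)=0$ follows from the fact that each combinator in $\recPASpec$ is a congruence for probabilistic bisimilarity.

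For the finite-recursion cases (a) and (b), I would proceed by induction on $n$. The recurrence $d^{n+1}=\bisimd(s,t)+(\lambda-\bisimd(s,t))\,d^n$ underlying (a) is exactly the sequential-composition bound $d^1_{1,2}$ of Proposition~\ref{prop:distance_nonexpansively_composed_procs}.\ref{prop:distance_nonexpansively_composed_procs:sequential_comp} applied to the unfolding of $s^{n+1}$ as ``$s$ followed by $s^n$''; likewise $d^{!^{n+1}}=\bisimd(s,t)+(\lambda^2-\lambda\bisimd(s,t))\,d^{!^n}$ comes from the asynchronous-parallel bound of Proposition~\ref{prop:distance_nonexpansively_composed_procs}.\ref{prop:distance_nonexpansively_composed_procs:async_parallel_comp} applied to $!^{n+1}s$ unfolded as $s\mid\mid\mid\,!^ns$. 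Solving these linear recurrences yields the stated closed-form geometric sums. For (a) one must additionally dispatch the auxiliary rule producing $x^n\trans[a]\nu;\delta(x^m)$ with $n>m$ from a tick-transition and an $a$-transition of $x$: by Proposition~\ref{prop:immidiate_reactive_behavior_vars_distance_less_one}, $\bisimd(s,t)<1$ forces $t$ to admit the same kinds of transitions, so $t^n$ has a matching transition with continuation $t^m$, and $\bisimd(s^m,t^m)\le d^m\le d^n$ follows from the induction hypothesis since $m<n$.

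For the infinite cases (c), (d) and (e), the same metric-congruence-closure construction applies to pairs of the shape $(s^\omega,t^\omega)$, $(!s,!t)$ and $({s_1}^*s_2,{t_1}^*t_2)$ respectively. The key observation is that the self-referential transitions $s^\omega\trans[a]\mu;\delta(s^\omega)$ and $!s\trans[a]\mu\mid\mid\mid\delta(!s)$ induce fixed-point equations $x=\bisimd(s,t)+(\lambda-\bisimd(s,t))\,x$ and $x=\bisimd(s,t)+(\lambda^2-\lambda\bisimd(s,t))\,x$, whose unique solutions in $[0,1]$ are precisely $d^\omega$ and $d^!$; both equations are non-degenerate in the nontrivial range $\bisimd(s,t)\in(0,1)$ because then $\lambda-\bisimd(s,t)<1$ and $\lambda^2-\lambda\bisimd(s,t)<1$. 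For the Kleene star (e) the two rules produce either a continuation in ${s_1}^*s_2$ (bounded by $\bisimd({s_1}^\omega,{t_1}^\omega)$ via the argument for (c)) or a transition inherited from $s_2$ (bounded by $\bisimd(s_2,t_2)$), so the claim follows by taking the maximum.

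The probabilistic variants (f) and (g) are the most delicate. For (f), the third rule of ${s_1}^{*_p}s_2$ produces $\nu\oplus_p\mu;\delta({s_1}^{*_p}s_2)$ precisely when both $s_1$ and $s_2$ admit action $a$; Proposition~\ref{prop:kantorovich_lifting}.\ref{prop:kantorovich_lifting:splitting} lets the Kantorovich lifting distribute across $\oplus_p$, bounding each summand by the distance already obtained for the nondeterministic Kleene star, while the first two rules coincide with rules of the nondeterministic version and contribute no additional distance. For (g), applying Proposition~\ref{prop:kantorovich_lifting}.\ref{prop:kantorovich_lifting:splitting} to $\mu\oplus_p(\mu\mid\mid\mid\delta(!_ps))$ and then the parallel-composition bound on the second summand yields the self-referential inequality $d^{!_p}\le\bisimd(s,t)+(1-p)(\lambda^2-\lambda\bisimd(s,t))\,d^{!_p}$, which rearranges to the stated closed form. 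The principal technical obstacle throughout, as in the proof of Proposition~\ref{prop:distance_nonexpansively_composed_procs}.\ref{prop:distance_nonexpansively_composed_procs:csp_parallel_comp}, is pushing the Kantorovich lifting inside the sequential, parallel, and $\oplus_p$ combinators via Theorem~\ref{thm:Kantorovich_lifting}: one must exhibit in each case an appropriate concave modulus of continuity so that Jensen's inequality applies to the resulting multilinear expression.
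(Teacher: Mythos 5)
Your proposal is correct and follows essentially the same route as the paper: unfold each recursive combinator via its defining rules, reduce to the bounds for sequential, parallel and alternative composition from Propositions~\ref{prop:distance_nonextensively_composed_procs} and~\ref{prop:distance_nonexpansively_composed_procs}, and then solve the resulting recurrences (by induction on $n$ for the finite cases) or self-referential fixed-point inequalities (for the infinite cases), with the Kleene-star bound obtained as the maximum over the two unfolded branches. The only substantive deviation is part~(g), where you derive the inequality $d^{!_p}\le\bisimd(s,t)+(1-p)(\lambda^2-\lambda\bisimd(s,t))\,d^{!_p}$ directly from the rule for $!_p$, whereas the paper expands $!_p s$ as a mixture of the finite replications $!^{n+1}s$ with weights $p(1-p)^n$ and sums the geometric series --- both yield the same closed form, and your metric-congruence-closure scaffolding, while sound, is more machinery than the paper actually deploys for this proposition.
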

\begin{proof}
First of all we observe that $\frac{1-(\lambda - \bisimddisc(s,t))^n}{1-(\lambda - \bisimddisc(s,t))} = \sum_{k=0}^{n-1}  (\lambda - \bisimd(s,t))^k$ and $\frac{1-(\lambda^2 - \lambda \bisimddisc(s,t))^n}{1-(\lambda^2 - \lambda\bisimddisc(s,t))} = \sum_{k=0}^{n-1}  (\lambda^2 - \lambda\bisimd(s,t))^k$.

Consider first the finite iteration operator $\_^n$. The cases  $\bisimddisc(s,t)$ = $0$ and $\bisimddisc(s,t) =1$ are immediate. Consider the case $0 < \bisimddisc(s,t) < 1$. The proof obligation can be rewritten as $\bisimddisc(s^n,t^n) \le \bisimddisc(s,t) \sum_{k=0}^{n-1}  (\lambda - \bisimd(s,t))^k$. We reason by induction over $n$. The base case $n=0$ is immediate. Let us consider the inductive step $n+1$. By the rules in Tables \ref{tab:prob_algebra_nonprob_process_combinators}--\ref{tab:prob_algebra_rec_process_combinators}, we infer that $s^{n+1}$ is bisimilar to $s ; s^n$ (i.e.\ they are in bisimulation distance $0$) and that $t^{n+1}$ is bisimilar to $t ; t^n$.
Hence $\bisimddisc(s^{n+1},t^{n+1}) =  \bisimddisc(s;s^{n},t;t^{n})$. By Proposition~\ref{prop:distance_nonexpansively_composed_procs}.\ref{prop:distance_nonexpansively_composed_procs:sequential_comp} we have $\bisimddisc(s;s^{n},t;t^{n}) \le \bisimddisc(s,t) + \bisimddisc(s^n,t^n)(\lambda - \bisimddisc(s,t) )$ = (by the inductive hypothesis over $n$)  $\bisimddisc(s,t) + (\bisimddisc(s,t) \sum_{k=0}^{n-1} (\lambda - \bisimddisc(s,t))^k)(\lambda -\bisimddisc(s,t))$ = $\bisimddisc(s,t)\sum_{k=0}^{n}  (\lambda - \bisimddisc(s,t))^k$. Summarizing, $\bisimddisc(s^{n+1},t^{n+1}) \le\bisimddisc(s,t)\sum_{k=0}^{n}  (\lambda - \bisimddisc(s,t))^k$, thus confirming the thesis.

Consider now the finite replication operator $!^n\_$. The cases $\bisimddisc(s,t) =1$ and $\bisimddisc(s,t) =0$ are immediate. Consider the case $0 < \bisimddisc(s,t) < 1$. The proof obligation can be rewritten as $\bisimddisc(!^ns,!^nt) \le \bisimddisc(s,t) \sum_{k=0}^{n-1}  (\lambda^2 - \lambda\bisimd(s,t))^k$. We reason by induction over $n$. The base case $n=0$ is immediate. Let us consider the inductive step $n+1$. By the rules in Tables \ref{tab:prob_algebra_nonprob_process_combinators}--\ref{tab:prob_algebra_rec_process_combinators}, we infer that $!^{n+1}s$ is bisimilar to $s \mid\mid\mid !^{n}s$ and that $!^{n+1}t$ is bisimilar to $t \mid\mid\mid !^{n}t$.
Hence $\bisimddisc(!^{n+1} s, !^{n+1} t) = \bisimddisc(s \mid\mid\mid !^{n}s, t \mid\mid\mid !^{n}t)$. By Proposition~\ref{prop:distance_nonexpansively_composed_procs}.\ref{prop:distance_nonexpansively_composed_procs:async_parallel_comp} we get $\bisimddisc(s \mid\mid\mid !^{n}s, t \mid\mid\mid !^{n}t)
\le \bisimddisc(s,t) + (\lambda^2 -  \lambda\bisimddisc(s,t))\bisimddisc(!^{n}s,!^{n}t) 
\le \text{ (inductive hypothesis over $n$) } \bisimddisc(s,t) + (\lambda^2 -  \lambda\bisimddisc(s,t))\bisimddisc(s,t) (\sum_{k=0}^{n-1}  (\lambda^2 - \lambda\bisimd(s,t))^k)
= \bisimddisc(s,t) \sum_{k=0}^{n}  (\lambda^2 - \lambda\bisimd(s,t))^k$.
Summarizing, we have $\bisimddisc(!^{n+1}s,!^{n+1}t) \le \bisimddisc(s,t)\sum_{k=0}^{n}  (\lambda^2 - \lambda\bisimddisc(s,t))^k$. This confirms the thesis.

Consider the infinite iteration operator $\_^\omega$. The cases  $\bisimddisc(s,t)=1$ and $\bisimddisc(s,t)=0$ are immediate. Consider the case $0 < \bisimddisc(s,t) < 1$. 
By the rules in Tables~\ref{tab:prob_algebra_nonprob_process_combinators}--\ref{tab:prob_algebra_rec_process_combinators}, we infer that 
$s^{\omega}$ is bisimilar to $s ; s^\omega$ and that $t^{\omega}$ is bisimilar to $t ; t^\omega$. 
Hence $\bisimddisc(s^{\omega},t^{\omega}) =  \bisimddisc(s;s^{\omega},t;t^{\omega})$.
By Proposition \ref{prop:distance_nonexpansively_composed_procs}.\ref{prop:distance_nonexpansively_composed_procs:sequential_comp} we get $\bisimddisc(s;s^{\omega},t;t^{\omega}) \le \bisimddisc(s,t) + (\lambda - \bisimddisc(s,t))\bisimddisc(s^\omega,t^\omega)$. 
Hence we have
 $\bisimddisc(s^\omega,t^\omega) \le \bisimddisc(s,t) + (\lambda - \bisimddisc(s,t))\bisimddisc(s^\omega,t^\omega)$, from which we infer $\bisimddisc(s^\omega,t^\omega)  \le \bisimddisc(s,t)\frac{1}{1-(\lambda - \bisimddisc(s,t))} = d^{\omega}$.

Consider now the bang operator $!\_$. The cases  $\bisimddisc(s,t) =1$ and $\bisimddisc(s,t) =0$ are immediate. Consider the case $0 < \bisimddisc(s,t) < 1$. 
By the rules in Tables~\ref{tab:prob_algebra_nonprob_process_combinators}--\ref{tab:prob_algebra_rec_process_combinators}, we infer that $!s$ is bisimilar to $s \mid\mid\mid !s$ and that $!t$ is bisimilar to $t \mid\mid\mid !t$. Hence $\bisimddisc(!s, !t) = \bisimddisc(s \mid\mid\mid !s, t \mid\mid\mid !t)$. By Proposition~\ref{prop:distance_nonexpansively_composed_procs}.\ref{prop:distance_nonexpansively_composed_procs:async_parallel_comp}  we get  $\bisimddisc(s \mid\mid\mid !s, t \mid\mid\mid !t)  \le \bisimddisc(s,t) + (\lambda^2 - \lambda\bisimddisc(s,t))\bisimddisc(!s,!t)$.
Hence we have $\bisimddisc(!s, !t) \le \bisimddisc(s,t) + (\lambda^2 - \lambda\bisimddisc(s,t))\bisimddisc(!s,!t)$, from which we infer
$\bisimddisc(!s,!t)  \le \bisimddisc(s,t)\frac{1}{1-(\lambda^2 - \lambda \bisimddisc(s,t))} = d^{!}$.

Consider the binary Kleene star operator $\_^*\_$.
Observe that the term ${s_1} ^* s_2$ is bisimilar to $(s_1 ; ({s_1} ^* s_2))+s_2$ and that the term ${t_1}^*t_2$ is bisimilar to $(t_1 ; ({t_1}^*t_2))+t_2$.
Proposition~\ref{prop:distance_nonextensively_composed_procs}.\ref{prop:distance_nonextensively_composed_procs:nondet_alt_comp} shows
$\bisimddisc({s_1}^*s_2,{t_1}^*t_2) 
= \bisimddisc((s_1 ; ({s_1}^*s_2))+s_2,(t_1 ; ({t_1}^*t_2))+t_2 )
= \max\{\bisimddisc((s_1 ; ({s_1}^*s_2)),(t_1 ; ({t_1}^*t_2))), \bisimddisc(s_2,t_2)\}$.
If $\max\{\bisimddisc((s_1 ; ({s_1}^*s_2)),(t_1 ; ({t_1}^*t_2))), \bisimddisc(s_2,t_2)\} = \bisimddisc((s_1 ; ({s_1}^*s_2)),(t_1 ; ({t_1}^*t_2))$, we get $\bisimddisc({s_1}^*s_2,{t_1}^*t_2) = \bisimddisc((s_1 ; ({s_1}^*s_2)),(t_1 ; ({t_1}^*t_2)))$, where, by
Proposition~\ref{prop:distance_nonexpansively_composed_procs},\ref{prop:distance_nonexpansively_composed_procs:sequential_comp},
$\bisimddisc((s_1 ; ({s_1}^*s_2)),(t_1 ; ({t_1}^*t_2)))$
= $\bisimddisc(s_1,t_1) + (\lambda - \bisimddisc(s_1,t_1)) \bisimddisc({s_1}^*s_2,{t_1}^*t_2)$, thus giving
$\bisimddisc({s_1}^*s_2,{t_1}^*t_2) = \bisimddisc(s_1,t_1)  \frac{1}{1-(\lambda - \bisimddisc(s_1,t_1))}$.
Therefore we conclude that
$\bisimddisc({s_1}^*s_2,{t_1}^*t_2)
 = \max\{\bisimddisc(s_1,t_1)  \frac{1}{1-(\lambda - \bisimddisc(s_1,t_1))}, \bisimddisc(s_2,t_2)\}
= \max\{\bisimddisc(s_1^\omega,t_1^\omega) , \bisimddisc(s_2,t_2)\}$. This confirms the thesis.

Consider now the probabilistic Kleene star operator. 
The second, third and fourth rule specifying the probabilistic Kleene star operator define the same operational behavior as the nondeterministic Kleene star operator. Since the target of the first rule for the probabilistic Kleene star operator is a convex combination of the targets of the second and the third rule, the thesis follows. 

Consider now the probabilistic bang operator. The bound on the distance of processes composed by the probabilistic bang operator can be understood by observing that the term $!_p s$ behaves as $!^{n+1} s$ with probability $p(1-p)^n$. Hence, by Proposition~\ref{prop:distance_finite_recursion}.\ref{prop:distance_finite_recursion:finite_replication} we get 
$\bisimddisc(!_p s, !_p t) \le
\sum_{n=0}^{\infty} p(1-p)^n \bisimddisc(!^{n+1}s,!^{n+1}t) \le 
\sum_{n=0}^{\infty} p(1-p)^n d^{!^{n+1}} =
\bisimddisc(s,t)/(1-(1-p)(\lambda^2 - \lambda\bisimddisc(s,t)))$.
\end{proof}

The bounds for the combinators in Proposition~\ref{prop:distance_finite_recursion} are immediate when the distance between the process arguments is either 0 or 1.
We explain those bounds by assuming that the distance between the process arguments is neither 0 nor 1.

First we explain the distance bounds for the nondeterministic recursive process combinators. 
To understand the distance bound between processes that iterate finitely often (Proposition~\ref{prop:distance_finite_recursion}.\ref{prop:distance_finite_recursion:finite_iteration}), observe that $s^n$ and $s;\ldots;s$, with $s;\ldots;s$ denoting $n$ sequentially composed instances of $s$, denote the same PTSs (up to renaming of states). 
Recursive application of the distance bound for operator $\_ \; ; \_$ (Proposition~\ref{prop:distance_nonexpansively_composed_procs}.\ref{prop:distance_nonexpansively_composed_procs:sequential_comp}) 
yields
$\bisimddisc(s^n,t^n) =  
\bisimddisc(s;\ldots;s , t;\ldots;t)
\le  \bisimddisc(s,t) \sum_{k=0}^{n-1}(\lambda - \bisimddisc(s,t))^k
= d^n$.
The same reasoning applies to the finite replication operator (Proposition~\ref{prop:distance_finite_recursion}.\ref{prop:distance_finite_recursion:finite_replication}) by observing that $!^n s$ and $s \mid\mid\mid \ldots \mid\mid\mid s$, with  $s \mid\mid\mid \ldots \mid\mid\mid s$ denoting $n$ occurrences of $s$ that evolve asynchronously, denote the same PTSs (up to renaming of states), thus giving 
$\bisimddisc(!^ns,!^nt) =  
\bisimddisc(s \mid\mid\mid \ldots \mid\mid\mid s , t \mid\mid\mid \ldots \mid\mid\mid t)
\le  \bisimddisc(s,t) \sum_{k=0}^{n-1}(\lambda^2 - \lambda \bisimddisc(s,t))^k
= d^{!^n}$.

The distance between processes that may iterate infinitely many times (Proposition~\ref{prop:distance_finite_recursion}.\ref{prop:distance_infinite_recursion:infinite_iteration}), and the distance between processes that may spawn infinitely many copies that evolve asynchro\-nous\-ly (Proposition~\ref{prop:distance_finite_recursion}.\ref{prop:distance_infinite_recursion:infinite_replication}) are the limit of the respective finite iteration and replication bounds. 
The distance between the Kleene-star iterated processes ${s_1}^*s_2$ and ${t_1}^*t_2$ (Proposition~\ref{prop:distance_finite_recursion}.\ref{prop:distance_infinite_recursion:kleene_star}) is bounded by the maximum of the distance $\bisimddisc({s_1}^\omega,{t_1}^\omega)$ (infinite iteration of $s_1$ and $t_1$ s.t.\ $s_2$ and $t_2$ never evolve), and the distance $\bisimddisc(s_2,t_2)$ ($s_2$ and $t_2$ evolve immediately). The case where $s_1$ and $t_1$ iterate $n$-times and then $s_2$ and $t_2$ evolve leads always to a distance $\bisimddisc({s_1}^n,{t_1}^n) + (\lambda - \bisimddisc(s_1,t_1))^n\bisimddisc(s_2,t_2) \le \max (\bisimddisc({s_1}^\omega,{t_1}^\omega) , \bisimddisc(s_2,t_2) )$.

Now we explain the bounds for the probabilistic recursive process combinators.
The distance between processes composed by the probabilistic Kleene star is bounded by the distance between those processes composed by the nondeterministic Kleene star (Proposition~\ref{prop:distance_finite_recursion}.\ref{prop:distance_probabilistic_kleene}), since the second, the third and the fourth rule specifying the probabilistic Kleene star define the same operational behavior as the nondeterministic Kleene star, and the first rule which defines a convex combination of these transitions applies only for those actions that both of the combined processes can perform. In fact, $\bisimddisc({s_1}^{*_p} s_2,{t_1}^{*_p} t_2) = \bisimddisc({s_1}^{*} s_2,{t_1}^{*} t_2)$ if the initial actions that can be performed by processes $s_1,t_1$ are disjoint from the initial actions that can be performed by processes $s_2,t_2$ (and hence the first rule defining $\_^{*_p}\_$ cannot be applied). Thus, the distance bound of the probabilistic Kleene star coincides with the distance bound of the nondeterministic Kleene star.
The bound on the distance of processes composed by the probabilistic bang operator can be understood by observing that $!_p s$ behaves as $!^{n+1} s$ with probability $p(1-p)^n$. Hence, by Proposition~\ref{prop:distance_finite_recursion}.\ref{prop:distance_finite_recursion:finite_replication} we get 
$\bisimddisc(!_p s, !_p t) \le 
\sum_{n=0}^{\infty} p(1-p)^n \bisimddisc(!^{n+1}s,!^{n+1}t) \le 
\sum_{n=0}^{\infty} p(1-p)^n d^{!^{n+1}} =
\bisimddisc(s,t)/(1-(1-p)(\lambda^2 - \lambda\bisimddisc(s,t)))$.

The distance bounds on the distance between processes composed by recursive process combinators (Proposition~\ref{prop:distance_finite_recursion}) are tight.

\begin{prop} \label{prop:optimality_distances_rec_proc_combinators}
Let $\epsilon_i \in [0,1]$. There are processes $s_i,t_i \in \T(\nonrecPASig)$ with $\bisimddisc(s_i,t_i)=\epsilon_i$ such that the inequalities in Proposition~\ref{prop:distance_finite_recursion} become equalities.
\end{prop}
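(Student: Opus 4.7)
The plan is to reproduce the strategy of Proposition~\ref{prop:optimality_distances_nonext_nonexp_composed_procs}: for each $\epsilon \in [0,\lambda]\cup\{1\}$ (the range of attainable bisimulation distances), exhibit explicit witness pairs in $\T(\nonrecPASig)$ that saturate the bounds of Proposition~\ref{prop:distance_finite_recursion}. The basic building block is the same as in Proposition~\ref{prop:optimality_distances_nonext_nonexp_composed_procs}: $s = a.([1-\epsilon/\lambda]\varepsilon \oplus [\epsilon/\lambda]0)$ and $t = a.\varepsilon$ when $\epsilon \in (0,\lambda)$, with $s = t = a.\varepsilon$ at $\epsilon = 0$, $s = a.0,\ t = a.\varepsilon$ at $\epsilon = \lambda < 1$, and $s = 0,\ t = a.\varepsilon$ at $\epsilon = 1$. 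A direct application of Definition~\ref{def:bisim_metric} confirms $\bisimddisc(s,t) = \epsilon$. For the binary combinators $\_^*\_$ and $\_^{*_p}\_$ I take two copies of this schema over disjoint action alphabets so that the two arguments never share an initial action.

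For the nondeterministic finite iteration $\_^n$ I argue by induction on $n$, exploiting the bisimilarity $s^{n+1} \sim s\,;\,s^n$ used in the proof of Proposition~\ref{prop:distance_finite_recursion}. Computing the Kantorovich lifting directly at the single outgoing $a$-transition of $s^n$ versus $t^n$ (the divergent branch $0\,;\,s^{n-1}$ performs no action while its counterpart $\varepsilon\,;\,t^{n-1}$ still can, contributing the maximal distance $1$) yields the recurrence $\epsilon_n = \epsilon + (\lambda - \epsilon)\epsilon_{n-1}$ with $\epsilon_0 = 0$, whose closed form is exactly $d^n$. The infinite iteration $\_^\omega$ then follows by monotone convergence of $\bisimddisc_k$ to $\bisimddisc$ on the image-finite, finitely-supported PTS induced by the witnesses. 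For $\_^*\_$, the disjoint-alphabet choice decouples the two arguments, making the bisimulation distance equal to $\max(\bisimddisc({s_1}^\omega,{t_1}^\omega), \bisimddisc(s_2,t_2))$; the same choice prevents the convex-combination rule of $\_^{*_p}\_$ from firing, so the probabilistic-Kleene case reduces to the nondeterministic one.

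For the replication-like operators $!^n\_, !\_, !_p$ the analogous inductive pattern applies, this time via $!^{n+1}s \sim s \mid\mid\mid !^n s$ together with tightness of the asynchronous parallel bound from Proposition~\ref{prop:distance_nonexpansively_composed_procs}.\ref{prop:distance_nonexpansively_composed_procs:async_parallel_comp}; the resulting recurrence $\epsilon_n = \epsilon + (\lambda^2 - \lambda\epsilon)\epsilon_{n-1}$ solves to $d^{!^n}$, and applying the same technique to $!_p$ via the rule $!_p x \trans[a] \mu \oplus_p (\mu \mid\mid\mid \delta(!_p x))$ gives the self-referential equation $\bisimddisc(!_p s, !_p t) = \epsilon + (1-p)(\lambda^2 - \lambda\epsilon)\bisimddisc(!_p s, !_p t)$, which solves to the stated closed form. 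The main obstacle is the discount-exponent calibration for these operators: the simple single-action witness fails to saturate the $\lambda^2$-factor when $\lambda < 1$, because $s$ and each residual $!^k s$ share the initial action $a$ and the bisimulation game can shortcut one step of the expected delay. I overcome this by replacing $s$ with the two-step prefix $s = a.b.([1-\epsilon/\lambda^2]\varepsilon \oplus [\epsilon/\lambda^2]0)$ when $\epsilon \le \lambda^2$ (with analogous boundary choices), which forces divergence to be observable only after two transition steps; for $\epsilon \in (\lambda^2, \lambda]$ a more elaborate witness, interleaving an initial guaranteed $\varepsilon$-step with a subsequent $\epsilon$-weighted divergent prefix, is needed to realise the same two-step delay.
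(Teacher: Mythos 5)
Your treatment of the iteration and Kleene-star operators matches the paper's proof, which simply reuses the witnesses from Proposition~\ref{prop:optimality_distances_nonext_nonexp_composed_procs} that saturate the sequential-composition bound of Proposition~\ref{prop:distance_nonexpansively_composed_procs}.\ref{prop:distance_nonexpansively_composed_procs:sequential_comp}; your disjoint-alphabet refinement for $\_^*\_$ and $\_^{*_p}\_$ is a harmless (indeed clarifying) variant, and your recurrences are the right ones. The gap is in the replication group $!^n\_$, $!\_$, $!_p\_$. First, you explicitly leave the range $\epsilon\in(\lambda^2,\lambda]$ unresolved (``a more elaborate witness \ldots is needed''), so the proof is incomplete as written. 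Second, the witness you do propose for $\epsilon\le\lambda^2$, namely $s=a.b.([1-\epsilon/\lambda^2]\varepsilon\oplus[\epsilon/\lambda^2]0)$ against $t=a.b.\varepsilon$, does not saturate $d^{!^n}$ when $\lambda<1$: its behavioural difference (the $\tick$-capability of $\varepsilon$ versus $0$) only becomes observable at the \emph{third} transition step, so in $s\mid\mid\mid\, !^{n-1}s$ the residual copies are delayed by three steps rather than two, and the resulting distance carries a factor $\lambda^3$ where $d^{!^n}$ has $\lambda^2$. So for the replication operators your construction both misses a case and, where it applies, undershoots the bound it is meant to attain.

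The paper's fix is simpler than what you attempt: it reuses the witnesses already built for the asynchronous parallel composition in Proposition~\ref{prop:optimality_distances_nonext_nonexp_composed_procs}, i.e.\ $s=a_1.([1-\epsilon/\lambda]a.0\oplus[\epsilon/\lambda]0)$ and $t=a_1.a.0$ (with the obvious boundary choices for $\epsilon\in\{0\}\cup\{\lambda\}\cup\{1\}$). These have distance exactly $\epsilon$ for every $\epsilon\in(0,\lambda)$, and their divergence---the ability versus inability to perform the visible action $a$ after the initial $a_1$---is observable in exactly two steps and is not masked by interleaving, unlike the $\tick$-based divergence of the sequential-style witnesses, which the $\mid\mid\mid$ rules suppress until the partner terminates (this masking is the correct reason the one-step witnesses fail here, not a shortfall in the discount exponent per se). That is precisely what produces the factor $\lambda^2-\lambda\epsilon$ in the recurrence $\epsilon_n=\epsilon+(\lambda^2-\lambda\epsilon)\epsilon_{n-1}$ you correctly identify, uniformly over the whole range of $\epsilon$, so no case split at $\lambda^2$ is needed.
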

\proof
The witness processes of Proposition~\ref{prop:optimality_distances_nonext_nonexp_composed_procs} that were used to show that the inequality in 
 Proposition~\ref{prop:distance_nonexpansively_composed_procs}.\ref{prop:distance_nonexpansively_composed_procs:sequential_comp}
becomes an equality,
suffice for Propositions~\ref{prop:distance_finite_recursion}.\ref{prop:distance_finite_recursion:finite_iteration}, \ref{prop:distance_finite_recursion}.\ref{prop:distance_infinite_recursion:infinite_iteration}, \ref{prop:distance_finite_recursion}.\ref{prop:distance_infinite_recursion:kleene_star},  \ref{prop:distance_finite_recursion}.\ref{prop:distance_probabilistic_kleene}.
The witness processes of Proposition~\ref{prop:optimality_distances_nonext_nonexp_composed_procs} that were used to show that the inequality in
Proposition~\ref{prop:distance_nonexpansively_composed_procs}.\ref{prop:distance_nonexpansively_composed_procs:async_parallel_comp}
becomes an equality,
suffice for Propositions~\ref{prop:distance_finite_recursion}.\ref{prop:distance_finite_recursion:finite_replication}, \ref{prop:distance_finite_recursion}.\ref{prop:distance_infinite_recursion:infinite_replication}, \ref{prop:distance_finite_recursion}.\ref{prop:distance_probabilistic_bang}.
\qed

\subsection{Compositional reasoning over recursive processes} \label{sec:recProcs:compReasoning}

From Propositions~\ref{prop:distance_finite_recursion} and~\ref{prop:optimality_distances_rec_proc_combinators} it follows that none of the recursive process combinators discussed in this section satisfies the compositionality property of non-expansiveness. 
\begin{prop}\label{prop:not-non-expansive}
None of the recursive process combinators of $\recPASig$ (unbounded recursion and bounded recursion with $n \ge 2$) is non-expansive w.r.t.\ $\lambda$-bisimilarity metric $\bisimddisc$ \ for any $\lambda\in(0,1]$.
\end{prop}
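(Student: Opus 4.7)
The plan is to derive the result directly from the tight distance bounds of Proposition~\ref{prop:distance_finite_recursion} together with their attainability (Proposition~\ref{prop:optimality_distances_rec_proc_combinators}). Non-expansiveness of an $n$-ary combinator $f$ demands $\bisimddisc(f(s_1,\dots,s_n),f(t_1,\dots,t_n)) \le \sum_i \bisimddisc(s_i,t_i)$, so to refute it for each recursive operator it suffices to exhibit argument processes $s_i,t_i$ for which the (attained) upper bound of Proposition~\ref{prop:distance_finite_recursion} strictly exceeds this sum. The common thread is that each bound has the shape $\epsilon \cdot \Sigma$ where $\Sigma$ is either a geometric partial sum $\sum_{k=0}^{n-1} q^k$ (finite cases) or a geometric series $1/(1-q)$ (infinite cases), with $q>0$ whenever $\epsilon$ is sufficiently small, so $\Sigma > 1$ and the bound strictly exceeds $\epsilon$.

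Concretely, for each unary combinator I would instantiate the witnesses of Proposition~\ref{prop:optimality_distances_rec_proc_combinators} so that $\bisimddisc(s,t) = \epsilon$ for some $\epsilon \in (0,\lambda)$ (or $\epsilon \in (0,1)$ when $\lambda = 1$). Then for finite iteration $\_^n$ with $n \ge 2$, the tight distance $d^n = \epsilon \sum_{k=0}^{n-1}(\lambda-\epsilon)^k$ already strictly exceeds $\epsilon$ because the $k=1$ summand contributes $\epsilon(\lambda-\epsilon) > 0$; the same geometric argument handles infinite iteration $\_^\omega$ (where $d^\omega = \epsilon/(1-\lambda+\epsilon) > \epsilon$ since $1-\lambda+\epsilon < 1$), finite replication $!^n\_$ with $n \ge 2$ and the bang operators $!\_$ and $!_p\_$ (where the factor $\lambda-\epsilon$ is replaced by $\lambda^2-\lambda\epsilon$ or $(1-p)(\lambda^2-\lambda\epsilon)$, each strictly positive for small~$\epsilon$).

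For the binary Kleene-star $\_^*\_$ and its probabilistic variant $\_^{*_p}\_$, the natural choice is $s_2 = t_2$, so the sum of argument distances reduces to $\bisimddisc(s_1,t_1) = \epsilon$. By Proposition~\ref{prop:distance_finite_recursion}.\ref{prop:distance_infinite_recursion:kleene_star} the tight distance then equals $\bisimddisc({s_1}^\omega,{t_1}^\omega)$, which was just shown to exceed $\epsilon$; the same witnesses then refute non-expansiveness of $\_^{*_p}\_$ via Proposition~\ref{prop:distance_finite_recursion}.\ref{prop:distance_probabilistic_kleene} combined with the matching lower bound of Proposition~\ref{prop:optimality_distances_rec_proc_combinators}.

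The only mildly delicate point, and the one I expect to be the main obstacle, is the corner case $\lambda = 1$, where one needs $\epsilon < 1$ rather than $\epsilon < \lambda$; here the ratios $d^\omega = d^! = 1$ for every $\epsilon \in (0,1)$, so non-expansiveness fails in the sharpest possible way. Once that uniformity in $\lambda$ is verified, the overall argument reduces to a routine per-operator check that the strictly positive discount-modified geometric term creates a gap between the attained bound and the argument-distance sum, establishing the failure of non-expansiveness uniformly across all $\lambda \in (0,1]$.
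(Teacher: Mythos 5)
Your proposal is correct and follows essentially the same route as the paper: the paper's proof likewise combines the tight bounds of Proposition~\ref{prop:distance_finite_recursion} with their attainability (Proposition~\ref{prop:optimality_distances_rec_proc_combinators}) and the observation that $d^{\omega} \ge d^{!}$ and $d^n \ge d^{!^n} > \bisimd(s,t)$ whenever $0 < \bisimd(s,t) < 1$. Your geometric-series elaboration and the choice $s_2 = t_2$ for the binary Kleene star just make explicit what the paper leaves implicit.
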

\begin{proof}
Follows directly from Propositions~\ref{prop:distance_finite_recursion} and~\ref{prop:optimality_distances_rec_proc_combinators} and the observation that $d^{\omega} \ge d^{!} , d^n \ge d^{!^n}  > \bisimd(s,t)$ whenever $0 < \bisimd(s,t) <1$.
\end{proof}

However, a weaker property suffices to facilitate compositional reasoning. To reason compositionally over probabilistic processes it is enough if the distance between the composed processes can be related to the distance between their parts. In essence, compositional reasoning over probabilistic processes is possible whenever a small variance in the behavior of the parts leads to a bounded small variance in the behavior of the composed processes. 

We introduce uniform continuity as the compositionality property for both recursive and non-recursive process combinators. Uniform continuity generalizes the 
properties non-extensiveness and non-expansiveness for non-recursive process combinators. 

\begin{defi}[Uniformly continuous process combinator]\label{def:continuous_operator}
A process combinator $f \in \Sigma$ is \emph{uniformly continuous} w.r.t.\ $\lambda$-bisimilarity metric $\bisimddisc$ if for all $\epsilon>0$ there are $\delta_1,\ldots,\delta_n>0$ such that 
\begin{align*}
	\forall i=1,\ldots,n.\ \bisimddisc(s_i,t_i) < \delta_i
		\ \implies
	\bisimddisc(f(s_1,\dots,s_n), f(t_1,\dots,t_n)) < \epsilon
\end{align*}
for all closed process terms $s_i,t_i \in \closedSTerms$.
\end{defi}
Note that by definition each non-expansive operator is also uniformly continuous (by $\delta_i = \epsilon / n$).
A uniformly continuous combinator $f$ ensures that for any non-zero bisimulation distance $\epsilon$ there are appropriate non-zero bisimulation distances $\delta_i$
s.t. for any composed process $f(s_1,\dots,s_n)$ the distance to the composed process where each $s_i$ is replaced by any $t_i$ with $\bisimddisc(s_i,t_i) < \delta_i$ is $\bisimddisc(f(s_1,\dots,s_n), f(t_1,\dots,t_n)) < \epsilon$. We consider the uniform notion of continuity (technically, the $\delta_i$ depend only on $\epsilon$ and are independent of the concrete states $s_i$) because we aim at universal compositionality guarantees.

A particular case of uniform continuity is Lipschitz continuity, which requires that there is a constant $K \in \Rgez$ such that $\delta_i = \epsilon / (n\cdot K)$.
Intuitively, this ensures that the distance between the composed processes is limited in how fast it can change due to the change of the distance between the components.
\begin{defi}[Lipschitz continuous process combinator]\label{def:Lipschitz-continuous-operator}
A process combinator $f \in \Sigma$ is \emph{Lipschitz continuous} w.r.t.\ $\lambda$-bisimilarity metric $\bisimddisc$ if there exists a constant $K \in \Rgez$ with 
\[ 
	\bisimddisc(f(s_1,\dots,s_n), f(t_1,\dots,t_n)) 
		\le  
	K \sum_{i=1}^n \bisimddisc(s_i,t_i) 
\]
for all closed process terms $s_i,t_i \in \closedSTerms$.
\end{defi}

We refer to the constant $K$ in Definition~\ref{def:Lipschitz-continuous-operator} as the \emph{Lipschitz factor} for combinator $f$, and we may say that $f$ is \emph{$K$-Lipschitz continuous}.
Note that by definition a non-expansive operator is Lipschitz continuous (by $K=1$) and a Lipschitz continuous operator is uniformly continuous (by $\delta_i = \epsilon / (n\cdot K)$).

The distance bounds of Section~\ref{sec:distance_rec_procs} allow us to derive that finitely recursing process combinators are Lipschitz continuous (and therefore also uniformly continuous) w.r.t.\ both non-discounted and discounted bisimilarity metric (Theorem~\ref{thm:continuous_ops_nondiscounting_bisim}). On the contrary, unbounded recursing process combinators are Lipschitz continuous and uniformly continuous only w.r.t.\ discounted bisimilarity metric (Theorem~\ref{thm:continuous_ops_discounting_bisim} and Proposition~\ref{prop:infinite_recursion_not_continuous_w.r.t_d1}).

\begin{thm} \label{thm:continuous_ops_nondiscounting_bisim}
The process combinators 
\begin{itemize}
	\item finite iteration $\_^n$
	\item finite replication $!^n\_$
	\item probabilistic replication (bang) $!_p\_$
\end{itemize}
are Lipschitz continuous w.r.t.\ $\lambda$-bisimilarity metric $\bisimddisc$ for any $\lambda \in (0,1]$.
\end{thm}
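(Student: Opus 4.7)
The plan is to derive Lipschitz continuity as a direct corollary of the distance bounds already established in Proposition~\ref{prop:distance_finite_recursion}. For each of the three operators I will exhibit an explicit Lipschitz constant $K$ (depending on the arity parameter $n$ or the probability $p$, but independent of $s,t$) that witnesses the inequality of Definition~\ref{def:Lipschitz-continuous-operator}.

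For finite iteration $\_^n$, I start from Proposition~\ref{prop:distance_finite_recursion}.\ref{prop:distance_finite_recursion:finite_iteration} and rewrite the bound as $d^n = \bisimddisc(s,t) \sum_{k=0}^{n-1} (\lambda - \bisimddisc(s,t))^k$ using the geometric series identity noted at the start of that proposition's proof. Since $\lambda \in (0,1]$ and $\bisimddisc(s,t) \in [0,1]$ give $|\lambda - \bisimddisc(s,t)| \le 1$, each summand is bounded by $1$ in absolute value, so the sum is at most $n$. Hence $d^n \le n \cdot \bisimddisc(s,t)$ when $\bisimddisc(s,t) \in (0,1)$, and the boundary cases $\bisimddisc(s,t) \in \{0,1\}$ trivially satisfy $d^n = \bisimddisc(s,t) \le n \cdot \bisimddisc(s,t)$, yielding Lipschitz factor $K = n$. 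The same argument handles finite replication $!^n\_$ verbatim: starting from Proposition~\ref{prop:distance_finite_recursion}.\ref{prop:distance_finite_recursion:finite_replication} and writing $x = \lambda^2 - \lambda\bisimddisc(s,t)$, the bound $|x| \le \lambda^2 \le 1$ again controls the geometric sum by $n$, giving $K = n$.

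For the probabilistic bang $!_p\_$, Proposition~\ref{prop:distance_finite_recursion}.\ref{prop:distance_probabilistic_bang} bounds the distance by $\bisimddisc(s,t)/(1 - (1-p)(\lambda^2 - \lambda\bisimddisc(s,t)))$. The critical step is to show that the denominator stays uniformly away from zero. Since $\lambda\bisimddisc(s,t) \ge 0$, we have $(1-p)(\lambda^2 - \lambda\bisimddisc(s,t)) \le (1-p)\lambda^2$, and the assumptions $p \in (0,1)$ and $\lambda \in (0,1]$ ensure $(1-p)\lambda^2 < 1$. Therefore the denominator is at least $1 - (1-p)\lambda^2 > 0$, uniformly in $s$ and $t$, yielding Lipschitz factor $K = 1/(1 - (1-p)\lambda^2)$; the boundary cases $\bisimddisc(s,t) \in \{0,1\}$ are again immediate since $K \ge 1$.

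The argument is essentially a bookkeeping exercise once Proposition~\ref{prop:distance_finite_recursion} is in hand; the only delicate point is the uniform lower bound on the denominator in the probabilistic bang case, which crucially relies on $p > 0$. This is also the conceptual reason why unbounded recursive operators such as $\_^\omega$ and $!\_$ fail Lipschitz continuity in the non-discounted setting and instead require $\lambda < 1$, as treated separately in the subsequent theorem.
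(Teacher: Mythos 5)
Your proposal is correct and follows essentially the same route as the paper: both derive the result directly from the bounds in Proposition~\ref{prop:distance_finite_recursion}, with the identical Lipschitz constants $K=n$ for $\_^n$ and $!^n\_$ (via the observation that the geometric sum $\frac{1-x^n}{1-x}=\sum_{k=0}^{n-1}x^k$ is at most $n$) and $K=1/(1-(1-p)\lambda^2)$ for $!_p\_$ (via the uniform lower bound on the denominator). Your added remarks on the boundary cases and on why $p>0$ is essential are correct but not points of divergence.
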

\begin{proof}
For finite iteration operator, this follows directly from Propositions~\ref{prop:distance_finite_recursion}.\ref{prop:distance_finite_recursion:finite_iteration} and the observation that $\frac{1-(\lambda - \bisimddisc(s,t))^n}{1-(\lambda - \bisimddisc(s,t))} \le n = K$.
For finite  replication operator, this follows directly from Propositions~\ref{prop:distance_finite_recursion}.\ref{prop:distance_finite_recursion:finite_replication} and the observation that $\frac{1-(\lambda^2 - \lambda \bisimddisc(s,t))^n}{1-(\lambda^2 - \lambda \bisimddisc(s,t))} \le n = K$. 
For the probabilistic bang operator it follows from Proposition~\ref{prop:distance_finite_recursion}.\ref{prop:distance_probabilistic_bang} and the observation that $\frac{1}{1-(1-p)(\lambda^2 - \lambda \bisimddisc(s,t))} \le \frac{1}{1-(1-p)\lambda^2} = K$. 
\end{proof}

Note that the probabilistic bang operator is Lipschitz continuous w.r.t.\ non-discounted bisimilarity metric $\bisimd$ with $\lambda=1$ because in each step there is a non-zero probability that the process is not copied. On the contrary, the process ${s_1}^{*_p} s_2$ applying the probabilistic Kleene star creates with probability $1$ a copy of $s_1$ for actions that $s_1$ can and $s_2$ cannot perform. Hence, the probabilistic Kleene star operator $\_^{*_p}\_$ is uniformly continuous only for discounted bisimilarity metric with $\lambda<1$.

\begin{thm} \label{thm:continuous_ops_discounting_bisim}
The process combinators 
\begin{itemize}
	\item infinite iteration $\_^\omega$
	\item nondeterministic Kleene-star iteration $\_^*\_$
	\item probabilistic Kleene-star iteration $\_^{*_p}\_$, and
	\item infinite replication (bang) $!\_$
\end{itemize}
are Lipschitz continuous w.r.t.\ discounted $\lambda$-bisimilarity metric $\bisimddisc$ for any $\lambda \in (0,1)$.
\end{thm}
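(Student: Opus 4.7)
The plan is to derive each Lipschitz constant directly from the closed-form distance bounds already established in Proposition~\ref{prop:distance_finite_recursion}, exploiting the crucial fact that $\lambda < 1$ keeps the denominators $1-\lambda+\bisimddisc(s,t)$ and $1-\lambda^2+\lambda\bisimddisc(s,t)$ appearing in $d^\omega$ and $d^!$ uniformly bounded below by $1-\lambda$ and $1-\lambda^2$ respectively. This is exactly the point where the discount is essential: for $\lambda = 1$ those denominators may vanish, which is precisely the obstruction recorded in Proposition~\ref{prop:not-non-expansive}.

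For the infinite iteration $\_^\omega$, when $\bisimddisc(s,t) \in (0,1)$ the bound gives $d^\omega = \bisimddisc(s,t)/(1-\lambda+\bisimddisc(s,t)) \le \bisimddisc(s,t)/(1-\lambda)$, and the boundary cases $\bisimddisc(s,t) \in \{0,1\}$ are trivially dominated with constant $K_\omega = 1/(1-\lambda) \ge 1$. For the bang $!\_$, the identical argument with $1-\lambda^2$ in place of $1-\lambda$ yields $K_! = 1/(1-\lambda^2)$. For the nondeterministic Kleene star $\_^*\_$, combine Proposition~\ref{prop:distance_finite_recursion}.\ref{prop:distance_infinite_recursion:kleene_star} with the just-derived Lipschitz bound for $\_^\omega$ and the trivial estimate $\max(a,b) \le a+b$ to obtain
\[
\bisimddisc({s_1}^* s_2, {t_1}^* t_2) \le K_\omega\,\bisimddisc(s_1,t_1) + \bisimddisc(s_2,t_2) \le K_\omega\bigl(\bisimddisc(s_1,t_1)+\bisimddisc(s_2,t_2)\bigr),
\]
using $K_\omega \ge 1$. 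The probabilistic Kleene star $\_^{*_p}\_$ then inherits the same constant by Proposition~\ref{prop:distance_finite_recursion}.\ref{prop:distance_probabilistic_kleene}, which bounds its distance by the nondeterministic version.

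The only delicate point is the piecewise nature of $d^\omega$ and $d^!$: one has to verify separately that the boundary values $\bisimddisc(s,t) = 0$ (both sides zero) and $\bisimddisc(s,t) = 1$ (right-hand side equal to $1$, dominated by $K \cdot 1 \ge 1$) are consistent with the chosen $K$. No deeper obstacle appears, since all technical work — in particular the delicate congruence-closure argument involving Theorem~\ref{thm:Kantorovich_lifting} — has already been carried out inside Proposition~\ref{prop:distance_finite_recursion}. The theorem is therefore essentially a direct corollary obtained by uniformly bounding the previously derived closed-form distance expressions once $\lambda$ is required to be strictly less than one.
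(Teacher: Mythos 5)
Your proposal is correct and follows essentially the same route as the paper: the paper's proof likewise reads the Lipschitz constants $K=1/(1-\lambda)$ and $K=1/(1-\lambda^2)$ directly off the closed-form bounds $d^\omega$, $d^!$ and the Kleene-star bounds of Proposition~\ref{prop:distance_finite_recursion}, using that the denominators $1-(\lambda-\bisimddisc(s,t))$ and $1-(\lambda^2-\lambda\bisimddisc(s,t))$ are bounded below by $1-\lambda$ and $1-\lambda^2$ when $\lambda<1$. Your treatment is in fact slightly more careful than the paper's, since you make the boundary cases and the $\max(a,b)\le a+b$ step for the Kleene star explicit.
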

\begin{proof}
For infinite iteration, nondeterministic Kleene star iteration and probabilistic Kleene star iteration this follows by Proposition~\ref{prop:distance_finite_recursion}.\ref{prop:distance_infinite_recursion:infinite_iteration}, ~\ref{prop:distance_finite_recursion}.\ref{prop:distance_infinite_recursion:kleene_star},  \ref{prop:distance_finite_recursion}.\ref{prop:distance_probabilistic_kleene} 
and the observation that $\frac{1}{1-(\lambda - \bisimddisc(s,t))} \le \frac{1}{1-\lambda} = K$.
For infinite replication this follows by Proposition~\ref{prop:distance_finite_recursion}.\ref{prop:distance_infinite_recursion:infinite_replication}
and the observation that $\frac{1}{1-(\lambda^2 - \lambda \bisimddisc(s,t))} \le \frac{1}{1-\lambda^2} = K$.
\end{proof}

\begin{prop} \label{prop:infinite_recursion_not_continuous_w.r.t_d1}
None of the process combinators 
\begin{itemize}
	\item infinite iteration $\_^\omega$
	\item nondeterministic Kleene-star iteration $\_^*\_$
	\item probabilistic Kleene-star iteration $\_^{*_p}\_$, and
	\item infinite replication (bang) $!\_$
\end{itemize}
is uniformly continuous w.r.t.\ the non-discounted $\lambda$-bisimilarity metric $\bisimd$ with $\lambda=1$.
\end{prop}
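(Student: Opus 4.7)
The plan is to prove non-uniform continuity by exhibiting, for each of the four operators, a parametric family of process pairs whose component distance tends to $0$ yet whose composed distance stays fixed at $1$, in direct violation of Definition~\ref{def:continuous_operator}. The witnesses are inherited from the tightness witnesses of Proposition~\ref{prop:optimality_distances_rec_proc_combinators} specialized to the boundary case $\lambda = 1$; the key observation to exploit is that several of the bounds of Proposition~\ref{prop:distance_finite_recursion} \emph{degenerate} to $1$ at $\lambda = 1$.

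Concretely, for each $\epsilon \in (0,1)$ let
\[
  s_\epsilon \;=\; a.\bigl([1-\epsilon]\varepsilon \oplus [\epsilon] 0\bigr), \qquad t \;=\; a.\varepsilon,
\]
so that $\bisimd(s_\epsilon, t) = \epsilon$ (these are precisely the witnesses from Proposition~\ref{prop:optimality_distances_nonext_nonexp_composed_procs} evaluated at $\lambda = 1$). Substituting $\lambda = 1$ into the expressions $d^\omega$ and $d^!$ of Proposition~\ref{prop:distance_finite_recursion} yields
\[
  d^\omega \;=\; d^! \;=\; \frac{\bisimd(s_\epsilon,t)}{\,1 - (1 - \bisimd(s_\epsilon,t))\,} \;=\; 1
\]
for every $\epsilon \in (0,1)$. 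Since Proposition~\ref{prop:optimality_distances_rec_proc_combinators} guarantees that these bounds are saturated by precisely the witnesses above, we obtain
\[
  \bisimd(s_\epsilon^\omega, t^\omega) \;=\; \bisimd(!\, s_\epsilon, !\, t) \;=\; 1 \qquad \text{for every } \epsilon \in (0,1).
\]

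For the two Kleene-star operators, augment the witnesses with $s_2 = t_2 = 0$, so that $\bisimd(s_2, t_2) = 0$. By Proposition~\ref{prop:distance_finite_recursion}.\ref{prop:distance_infinite_recursion:kleene_star} together with Proposition~\ref{prop:optimality_distances_rec_proc_combinators} this yields $\bisimd(s_\epsilon^{*}\, 0, \, t^{*}\, 0) = \max(\bisimd(s_\epsilon^\omega, t^\omega), 0) = 1$. For the probabilistic variant, because $0$ has no outgoing transitions, neither the first, the third, nor the fourth rule of $\_^{*_p}\_$ can fire; only the second rule applies, so $s_\epsilon^{*_p}\, 0$ and $t^{*_p}\, 0$ have exactly the same transitions as $s_\epsilon^{*}\, 0$ and $t^{*}\, 0$, hence $\bisimd(s_\epsilon^{*_p}\, 0, \, t^{*_p}\, 0) = 1$ as well.

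With these computations in hand the conclusion is immediate: fix $\epsilon^* = \tfrac{1}{2}$ and let $\delta > 0$ be any candidate modulus. Pick $\epsilon \in (0, \min(\delta, 1))$; then $\bisimd(s_\epsilon, t) = \epsilon < \delta$ and $\bisimd(s_2, t_2) = 0 < \delta$, while each of the four composed distances equals $1 > \epsilon^*$. Thus no positive $\delta$ can witness uniform continuity for $\epsilon^*$, and none of the four operators is uniformly continuous w.r.t.\ $\bisimd$ at $\lambda = 1$. The only delicate point is bookkeeping: checking that the tightness witnesses of Proposition~\ref{prop:optimality_distances_rec_proc_combinators} remain valid at the boundary $\lambda = 1$, and that the choice $s_2 = t_2 = 0$ genuinely disables the synchronizing rule of the probabilistic Kleene star so that its distance coincides with that of the nondeterministic Kleene star. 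Both are direct inspections of the operational rules.
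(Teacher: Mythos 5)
Your proof is correct and takes essentially the same route as the paper's: the paper also uses the witnesses $s = a.([1-\delta/2]\varepsilon \oplus [\delta/2]0)$ and $t = a.\varepsilon$, observes that $\bisimd(s,t)$ can be made arbitrarily small while $\bisimd(s^\omega,t^\omega)=1$ (appealing to Propositions~\ref{prop:distance_finite_recursion} and~\ref{prop:optimality_distances_rec_proc_combinators} for the degeneration of the bounds at $\lambda=1$ and their tightness), and then derives the contradiction with Definition~\ref{def:continuous_operator}. Your explicit treatment of the two Kleene-star cases (taking the second argument to be $0$ and checking which rules of $\_^{*_p}\_$ can fire) merely spells out what the paper dismisses as ``similar reasoning applies to the other process combinators.''
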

\begin{proof}
Follows directly from Propositions~\ref{prop:distance_finite_recursion} and~\ref{prop:optimality_distances_rec_proc_combinators}. We will reason in detail for the first case of infinite iteration operator. Let $\epsilon$ be any fixed real with $0 < \epsilon < 1$. We will show that there is no $\delta>0$ s.t.\ for all $s,t \in \closedTerms$ with $\bisimd(s,t)<\delta$ we have $\bisimd(s^\omega,t^\omega)<\epsilon$. We will show this by contradiction. Assume there is some $\delta>0$. Consider $s = a.([1-\delta/2]\varepsilon \oplus [\delta/2]0)$ and $t = a.\varepsilon$. We have $\bisimd(s,t)=\delta/2 < \delta$ and $\bisimd(s^\omega,t^\omega) = 1 >\epsilon$. Contradiction. Similar reasoning applies also to the other process combinators. 
\end{proof}
Note that the processes used in the proof of Proposition~\ref{prop:infinite_recursion_not_continuous_w.r.t_d1} are witnesses that these combinators are not continuous at all.

Given any discount factor $\lambda$, all process combinators discussed so far that are uniformly continuous wrt.\ $\lambda$-bisimilarity metric $\bisimd$ are also Lipschitz continuous wrt.\ $\bisimd$.
We conclude this section by discussing the copy operator $\copyOp$ of~\cite{BIM95,FvGdW12} as an example of an operator being uniformly continuous but not Lipschitz continuous wrt.\ discounted $\lambda$-bisimilarity metric $\bisimd$ with any $\lambda \in (0,1)$.

The copy operator $\copyOp$  is defined by the rules
\begin{gather*}
	\SOSrule{x \trans[a] \mu}
			{\copyOp(x) \trans[a] \mu} (a \not\in \{l,r\})
\qquad\qquad
	\SOSrule{x \trans[l] \mu \quad x \trans[r] \nu}
			{\copyOp(x) \trans[s] \copyOp(\mu) \mid \copyOp(\nu)}
\end{gather*}
The copy operator $\copyOp$ specifies the fork operation of operating systems. 
Actions $l$ and $r$ are the left and right \emph{forking actions}, and $s$ is the resulting \emph{split action}.
The fork of $t$ is the process $\copyOp(t)$ evolving by $t$ to the parallel composition of the left fork ($l$-derivative of $t$) and the right fork ($r$-derivative of $t$). For all other actions $a \not\in \{l,r\}$ the process $\copyOp(t)$ mimics the behavior of $t$.

\begin{prop}
The copy operator $\copyOp$ is not Lipschitz continuous wrt.\  $\lambda$-bisimilarity metric $\bisimddisc$ for any $\lambda \in (0,1]$.
\end{prop}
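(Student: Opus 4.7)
The plan is to exhibit, for every constant $K \in \Rgez$, processes $s, t \in \closedTerms$ with $\bisimd(\copyOp(s), \copyOp(t)) > K \cdot \bisimd(s, t)$. The guiding intuition is that iterating $s$-transitions of $\copyOp$ deterministically doubles the number of parallel copies of its operand, so after $n$ steps we reach a synchronous parallel composition of $2^n$ copies; a tiny probabilistic perturbation at the leaves is then amplified because the probability that \emph{some} of the $2^n$ independent copies enters a ``dead'' state can be a fixed constant even when the per-copy deviation has order $1/2^n$.

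Concretely, for each $n \in \N$ I would pick $\epsilon := \lambda/2^n$, set $T_0 := a.a'.\varepsilon$ and $T'_0 := a.([1{-}\epsilon/\lambda]\,a'.\varepsilon \oplus [\epsilon/\lambda]\,0)$, and inductively $T_{k+1} := l.T_k + r.T_k$, $T'_{k+1} := l.T'_k + r.T'_k$. A direct Kantorovich computation on the unique $a$-transitions gives $\bisimd(T_0, T'_0) = \epsilon$, and iterated application of Proposition~\ref{prop:distance_nonextensively_composed_procs}.\ref{prop:distance_nonextensively_composed_procs:nondet_alt_comp} together with the $\lambda$-discount from the $l,r$ prefixes yields $\bisimd(T_n, T'_n) = \lambda^n\epsilon$. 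On the copy side, since $T_k$ has (for $k\ge 1$) identical deterministic $l$- and $r$-derivatives $\delta(T_{k-1})$, the copy rule produces $\copyOp(T_k) \trans[s] \delta(\copyOp(T_{k-1}) \mid \copyOp(T_{k-1}))$, and this is the only transition of $\copyOp(T_k)$; iterating the bisimulation transfer condition along this unique $s$-trace therefore yields $\bisimd(\copyOp(T_n), \copyOp(T'_n)) \ge \lambda^n \bisimd(Q_n, Q'_n)$, where $Q_n$ (resp.\ $Q'_n$) denotes the $2^n$-fold synchronous parallel composition of $\copyOp(T_0)$ (resp.\ $\copyOp(T'_0)$).

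The heart of the argument is lower-bounding $\bisimd(Q_n, Q'_n)$. The only transition of $Q_n$ is $Q_n \trans[a] \delta(Y_n)$ where $Y_n$ is the $2^n$-fold synchronous parallel of $a'.\varepsilon$, which can still perform $a'$ synchronously; the analogous $a$-transition of $Q'_n$ yields the product distribution in which each of the $2^n$ components is independently $a'.\varepsilon$ with probability $1{-}\epsilon/\lambda$ or $0$ with probability $\epsilon/\lambda$. As soon as any single component equals $0$, the synchronous product is bisimilar to $0$ (since synchronous parallel with a $0$-summand blocks every subsequent action) and hence at bisimulation distance $1$ from $Y_n$, so the forced Kantorovich match (Dirac on one side) has cost exactly $1-(1-\epsilon/\lambda)^{2^n}$; thus $\bisimd(Q_n, Q'_n) \ge \lambda\bigl(1 - (1-\epsilon/\lambda)^{2^n}\bigr)$. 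With $\epsilon = \lambda/2^n$, $(1-1/2^n)^{2^n}\to e^{-1}$, so the ratio $\bisimd(\copyOp(T_n), \copyOp(T'_n))/\bisimd(T_n, T'_n)$ grows at least like $\Omega(2^n)$ and exceeds any given $K$ for $n$ large enough. The delicate step I expect to require the most care is verifying that every partially dead product state truly is at bisimulation distance $1$ from $Y_n$, so that no alternative matching in the Kantorovich minimisation can lower the cost; this relies crucially on the synchronous (rather than asynchronous) reading of $\mid$ in the copy rule.
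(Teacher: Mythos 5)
Your proposal is correct and follows essentially the same route as the paper: an $l/r$-prefix tower of depth $n$ over a base process carrying a small Bernoulli perturbation, so that $\copyOp$ unfolds into $2^n$ synchronous copies and the probability that at least one copy dies, $1-(1-\epsilon/\lambda)^{2^n}$, yields a distance ratio of order $2^n$ that defeats any Lipschitz constant. Your write-up is somewhat more explicit than the paper's (which simply asserts $\bisimddisc(\copyOp(s_k),\copyOp(t_k)) = \lambda^k(1-(1-\epsilon)^{2^k})$) in justifying the lower bound via the forced Kantorovich matching against a Dirac distribution and the observation that a single dead component blocks the synchronous product, but the witnesses and the amplification mechanism are the same.
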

\begin{proof}
Assume any discount factor $\lambda \in (0,1]$.
For any constant $L \in \Rgez$, we provide suitable CCS processes $s$ and $t$ s.t.\ $\bisimddisc(\copyOp(s),\copyOp(t)) > L \bisimddisc(s,t)$.
Let 
$s_{1} = l.([1-\epsilon]a \oplus [\epsilon] 0) + r. ([1-\epsilon]a \oplus [\epsilon] 0)$ and $t_1 = l.a + r.a$, and 
$s_{k+1} = l. s_k  +  r . s_k$ and $t_{k+1} = l.t_k + r. t_k$.
Clearly $\bisimddisc(s_k,t_k) = \lambda^k \epsilon$. 
Then $\bisimddisc(\copyOp(s_k),\copyOp(t_k)) = \lambda^{k}(1-(1-\epsilon)^{2^k})$. 
Hence, for any $k$ with $2^k > L$, $\bisimddisc(\copyOp(s),\copyOp(t))/\bisimddisc(s,t) = (1-(1-\epsilon)^{2^k}) / \epsilon > L$ holds for $s = s_k$, $t = t_k$ and all $0 < \epsilon < (2^k - L) / (2^{k-1}(2^k-1))$. Thus, the copy operator is not Lipschitz continuous  wrt.\  $\lambda$-bisimilarity metric $\bisimddisc$.
\end{proof}

To prove that the copy operator $\copyOp$ is uniformly continuous wrt.\ discounted  $\lambda$-bisimilarity metric $\bisimddisc$ with any $\lambda \in (0,1)$, we need some preliminary results.
First we show that 
the behavioral distance between two arbitrary terms $s$ and $t$ can be divided in the distance observable by the first $k$ steps and the distance observable after step $k$. 
The step discount $\lambda$ allows us to give the upper bound $\lambda^k$ on the distance observable after step $k$.

\begin{prop} \label{prop:bisim_distance_vs_k_projection_bisim_distance}
Let $P=(\Sigma,\Act,R)$ be a PTSS and $s,t \in \closedSTerms$ arbitrary closed terms. Then 
\[
\bisimddisc(s,t) \le \bisimddisc_k(s,t) + \lambda^{k}
\] 
for all $k \in \N$.
\end{prop}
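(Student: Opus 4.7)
The plan is to proceed by induction on $k$, exploiting that $\bisimddisc$ is a fixed point of $\Bisimulation$ (Proposition~\ref{prop:bisim_metric_lfp_D}) while $\bisimddisc_{k+1} = \Bisimulation(\bisimddisc_k)$ by definition. The base case $k=0$ is immediate: $\bisimddisc_0 = \botd$, so the right-hand side equals $1$ and the inequality follows from the $1$-boundedness of $\bisimddisc$.

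For the inductive step, the key technical lemma I would first establish is that $\Bisimulation$ transforms additive shifts in a controlled way, namely that for every $1$-bounded $d \colon \closedSTerms \times \closedSTerms \to [0,1]$ and every $c \in [0,1]$,
\[
	\Bisimulation(d')(s,t) \;\le\; \Bisimulation(d)(s,t) + \lambda\cdot c
	\qquad \text{whenever } d'(s,t) \le \min(d(s,t)+c,1).
\]
This rests on two easy sub-observations. First, the Kantorovich lifting is affine under additive shifts: since every matching $\omega \in \Omega(\pi,\pi')$ satisfies $\sum_{t,t'}\omega(t,t')=1$, we get $\Kantorovich(d')(\pi,\pi') \le \Kantorovich(d)(\pi,\pi') + c$. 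Second, the Hausdorff lifting preserves additive shifts, because $\sup$ and $\inf$ commute with translation by a constant (the empty-set cases $\inf\emptyset=1$, $\sup\emptyset=0$ do not depend on the argument metric, so they cause no trouble). The discount factor $\lambda$ multiplying $\Kantorovich$ inside $\Bisimulation$ then turns the shift $c$ into $\lambda c$, which is exactly what drives the induction.

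With the lemma in hand, the inductive step is a three-line chain. Assuming $\bisimddisc(s,t) \le \bisimddisc_k(s,t) + \lambda^k$ for all $s,t$, set $d = \bisimddisc_k$ and $c = \lambda^k$; applying the fixed-point property, monotonicity of $\Bisimulation$ (Proposition~\ref{prop:kantorovich_lifting}.\ref{prop:kantorovich_lifting:monotonicity} lifts through Hausdorff), and the lemma yields
\[
	\bisimddisc(s,t) \;=\; \Bisimulation(\bisimddisc)(s,t) \;\le\; \Bisimulation(\bisimddisc_k + \lambda^k)(s,t) \;\le\; \Bisimulation(\bisimddisc_k)(s,t) + \lambda\cdot\lambda^k \;=\; \bisimddisc_{k+1}(s,t) + \lambda^{k+1},
\]
completing the induction.

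The only real subtlety, rather than an obstacle, is checking that the additive-shift lemma is robust under the $1$-cap (it is, because if $\Bisimulation(d)(s,t)+\lambda c \ge 1$ the inequality is trivial from $1$-boundedness of $\Bisimulation(d')$) and under the empty-derivative conventions in the Hausdorff lifting. Conceptually, the proof makes explicit the role of the discount factor $\lambda$: each additional transition step contracts the residual error by $\lambda$, so the tail beyond $k$ steps contributes at most $\lambda^k$ to the overall bisimulation distance.
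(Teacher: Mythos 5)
Your proof is correct and follows essentially the same route as the paper's: induction on $k$, with the inductive step driven by the fixed-point property of $\bisimddisc$, monotonicity of the Kantorovich and Hausdorff liftings, and their compatibility with additive shifts (the paper states these as $\Kantorovich(d-\epsilon) \ge \Kantorovich(d)-\epsilon$ and $\Hausdorff(d-\epsilon) \ge \Hausdorff(d)-\epsilon$, which are just the dual formulations of your shift lemma), the discount factor turning the shift $\lambda^k$ into $\lambda^{k+1}$. The only difference is presentational: the paper proves $\bisimddisc_{k+1} \sqsupseteq \bisimddisc - \lambda^{k+1}$ while you prove $\bisimddisc \sqsubseteq \bisimddisc_k + \lambda^k$ directly.
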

\begin{proof}
By induction.
Case $k=0$ is trivial since $\lambda^{0}=1$. 
Let $(\bisimddisc-\epsilon) \colon \closedTerms \times \closedTerms \to [0,\epsilon]$ with $\epsilon\in[0,1]$ be the function defined by $(\bisimddisc-\epsilon)(s,t)=\max(\bisimddisc(s,t)-\epsilon,0)$. 
For the induction step, assume $\bisimddisc_k \sqsupseteq \bisimddisc-\lambda^k$.
It remains to show $\bisimddisc_{k+1} \sqsupseteq \bisimddisc-\lambda^{k+1}$. We reason as follows:
\begin{align*}
& \bisimddisc_{k+1}(s,t) \\
= & \sup_{a\in A} \left\{ \Hausdorff(\lambda \cdot \Kantorovich(\bisimddisc_{k}))(\mathit{der}(s,a), \mathit{der}(t,a)) \right\} \\
\ge & \sup_{a\in A} \left\{ \Hausdorff(\lambda \cdot \Kantorovich(\bisimddisc-\lambda^k))(\mathit{der}(s,a), \mathit{der}(t,a)) \right\} \\
\ge & \sup_{a\in A} \left\{ \Hausdorff(\lambda \cdot \Kantorovich(\bisimddisc))(\mathit{der}(s,a), \mathit{der}(t,a)) \right\} - \lambda^{k+1}\\
= & \bisimddisc(s,t) - \lambda^{k+1}
\end{align*}
by using the properties

\begin{equation} \label{eq:dist_props_Kantorovich_Hausdroff}
\begin{aligned}
	\Kantorovich(d) &\sqsupseteq \Kantorovich(d') \qquad\text{if } d \sqsupseteq d' \\
	\Hausdorff(d) &\sqsupseteq \Hausdorff(d') \qquad\text{if } d \sqsupseteq d' \\
	\Kantorovich(d - \epsilon)(\pi,\pi') &\ge \Kantorovich(d)(\pi,\pi') - \epsilon \\
	\Hausdorff(d - \epsilon)(\pi,\pi') &\ge \Hausdorff(d)(\pi,\pi') - \epsilon
\end{aligned}
\end{equation}
for any pseudometrics $d,d'$ and any $\epsilon \in [0,1]$,
definition of $\bisimddisc_{k+1}$ applied in step $1$,
induction hypothesis applied in step $2$, 
the fixpoint property of bisimulation metric $\bisimddisc(s,t) = \sup_{a\in A} \{ \Hausdorff(\lambda \cdot \Kantorovich(\bisimddisc))(\mathit{der}(s,a), \mathit{der}(t,a)) \}$ applied in step $4$,
and properties of Equation~\ref{eq:dist_props_Kantorovich_Hausdroff} applied in steps $2$ and $3$.
\end{proof}

Now we show that an operator is uniformly continuous w.r.t.\ the discounted $\lambda$-bisimilarity metric $\bisimddisc$ if this operator is Lipschitz continuous wrt.\ all up-to-$k$ $\lambda$-bisimilarity metrics $\bisimddisc_k$. 

\begin{thm} \label{thm:finite_projection_Lipschitz_implies_uniform_continuity}
Let $P=(\Sigma,\Act,R)$ be a PTSS and $\lambda < 1$.
If an operator $f \in \Sigma$ is Lipschitz continuous wrt.\ $\bisimddisc_k$ \ for each $k \in N$, then $f$ is uniformly continuous wrt.\ $\bisimddisc$.
\end{thm}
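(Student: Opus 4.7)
The plan is to combine the two preparatory results: the tail bound of Proposition~\ref{prop:bisim_distance_vs_k_projection_bisim_distance}, which says $\bisimddisc \sqsubseteq \bisimddisc_k + \lambda^k$, and the hypothesis that for every $k$ there is a Lipschitz constant $K_k$ governing the behaviour of $f$ on the truncated metric $\bisimddisc_k$. The assumption $\lambda < 1$ is the crucial ingredient, because it lets us split the target tolerance $\epsilon$ into a ``tail'' contribution (the error beyond depth $k$) and a ``head'' contribution (the error within the first $k$ transition steps), and then make each of them smaller than $\epsilon/2$ independently.

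Concretely, given $\epsilon > 0$, I would first choose $k \in \N$ large enough so that $\lambda^k < \epsilon/2$; such a $k$ exists because $\lambda \in (0,1)$. For this fixed $k$, by hypothesis there is a Lipschitz factor $K_k \in \Rgez$ such that
\[
 \bisimddisc_k(f(s_1,\dots,s_n), f(t_1,\dots,t_n)) \le K_k \sum_{i=1}^n \bisimddisc_k(s_i,t_i)
\]
for all $s_i,t_i \in \closedSTerms$. Then I would set $\delta_i = \epsilon / (2n(K_k + 1))$ for each $i = 1,\dots,n$ (the $+1$ just to avoid a case split on whether $K_k = 0$).

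To verify the continuity condition with these choices, I first note the auxiliary fact $\bisimddisc_k \sqsubseteq \bisimddisc$ for all $k$: since $\botd \sqsubseteq \bisimddisc$ and $\bisimddisc$ is a fixed point of the monotone functional $\Bisimulation$ (Propositions~\ref{prop:coinductive_vs_fixpoint_def_bisim_metric} and~\ref{prop:bisim_metric_lfp_D}), iterating monotonicity yields $\bisimddisc_k = \Bisimulation^k(\botd) \sqsubseteq \Bisimulation^k(\bisimddisc) = \bisimddisc$. So, if $\bisimddisc(s_i,t_i) < \delta_i$ for all $i$, then also $\bisimddisc_k(s_i,t_i) < \delta_i$ for all $i$, whence
\[
 \bisimddisc_k(f(s_1,\dots,s_n), f(t_1,\dots,t_n)) \le K_k \sum_{i=1}^n \delta_i = \frac{K_k}{K_k + 1} \cdot \frac{\epsilon}{2} < \frac{\epsilon}{2}.
\]

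Finally, applying Proposition~\ref{prop:bisim_distance_vs_k_projection_bisim_distance} to the composed terms gives
\[
 \bisimddisc(f(s_1,\dots,s_n), f(t_1,\dots,t_n)) \le \bisimddisc_k(f(s_1,\dots,s_n), f(t_1,\dots,t_n)) + \lambda^k < \frac{\epsilon}{2} + \frac{\epsilon}{2} = \epsilon,
\]
as required by Definition~\ref{def:continuous_operator}. There is no serious obstacle here; the argument is essentially an $\epsilon/2$ splitting strategy, and the only place where one must be careful is in ensuring that the chosen $\delta_i$ are independent of the particular processes $s_i,t_i$ (so that \emph{uniform} continuity is obtained, not merely pointwise continuity), which is automatic because $k$ and $K_k$ depend only on $\epsilon$ and $f$.
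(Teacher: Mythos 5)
Your proof is correct and follows essentially the same route as the paper's: both combine the tail bound $\bisimddisc \sqsubseteq \bisimddisc_k + \lambda^k$ with the depth-$k$ Lipschitz hypothesis and the monotonicity fact $\bisimddisc_k \sqsubseteq \bisimddisc$, differing only in cosmetic details (your $\epsilon/2$ split and the $K_k+1$ safeguard versus the paper's choice $\delta_i < (\epsilon-\lambda^m)/(n\cdot L_m)$).
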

\begin{proof}
Assume that $f \in \Sigma$ is any $n$-ary operator. 
We prove that for any $\epsilon > 0$ there exist $\delta_1,\ldots,\delta_n >0$ such that  $\bisimddisc(f(s_1,\ldots,s_n),f(t_1,\ldots,t_n)) < \epsilon$ whenever $\bisimddisc(s_i,t_i) < \delta_i$ for all $i=1,\ldots,n$.
Let $L_k \in \Rgez$ be the Lipschitz factor for $f$ wrt.\ $\bisimddisc_k$, i.e.\ 
\[
\bisimddisc_k(f(s_1,\ldots,s_n),f(t_1,\ldots,t_n)) \le L_k \sum_{i=1}^n \bisimddisc_k(s_i,t_i).
\]
Together with Proposition~\ref{prop:bisim_distance_vs_k_projection_bisim_distance} and property $\bisimddisc_k \sqsubseteq \bisimddisc$ we get
\begin{equation}\label{eq:rel_d_dk}
	\bisimddisc(f(s_1,\ldots,s_n),f(t_1,\ldots,t_n)) \le L_k \sum_{i=1}^n \bisimddisc(s_i,t_i) + \lambda^{k}
\end{equation}
for all $k \in \N$.
Since $\lambda < 1$, there is some $m \in \N$ s.t. $\lambda^m < \epsilon$. Let $\delta_i \in (0,1]$ be such that 
\[
	\delta_i < \frac{\epsilon - \lambda^m}{n \cdot L_m}
\] 
If we take $\bisimddisc(s_i,t_i) < \delta_i$ for all $i=1,\ldots,n$ then we get
\begin{align*}
& \bisimddisc(f(s_1,\ldots,s_n),f(t_1,\ldots,t_n))  \\
\le  & L_m \sum_{i=1}^n \bisimddisc(s_i,t_i) + \lambda^{m} & \text{(Equation~\ref{eq:rel_d_dk})} \\
<  & L_m \sum_{i=1}^n \delta_i + \lambda^{m} \\
\le & L_m \sum_{i=1}^n \frac{\epsilon - \lambda^m}{n \cdot L_m} + \lambda^{m} \\
= & \epsilon
\end{align*}
thus concluding that that $f$ is uniformly continuous w.r.t.\ $\bisimddisc$.
\end{proof}

Now we show that the copy operator $\copyOp$ is Lipschitz-continuous wrt.\ the (not necessarily discounted) up-to-k $\lambda$-bisimilarity metric $\bisimddisc_k$ \ for any $k \ge 0$ and $\lambda \in (0,1]$.
Together with Theorem~\ref{thm:finite_projection_Lipschitz_implies_uniform_continuity} this allows us to derive that $\copyOp$ is uniformly continuous wrt. the discounted $\lambda$-bisimilarity metric $\bisimddisc$\ for any $\lambda \in (0,1)$.

\begin{prop}\label{prop:copy-Lipschitz-continuous}
The copy operator $\copyOp$ is Lipschitz continuous wrt.\ the up-to-k $\lambda$-bisimilarity metric $\bisimddisc_k$ \ for any $k \ge 0$ and $\lambda \in (0,1]$.
\end{prop}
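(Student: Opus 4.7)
The plan is to prove by induction on $k$ the sharper claim
\[
\bisimddisc_k(\copyOp(s),\copyOp(t)) \le 2^k \,\bisimddisc_k(s,t) \qquad \text{for all } s,t \in \closedSTerms,
\]
which exhibits the Lipschitz factor $K_k = 2^k$ at level $k$. The exponential growth is morally necessary because $k$ applications of the split rule can spawn up to $2^k$ parallel copies of the argument, each pair of which contributes to the distance via the non-expansiveness of $\mid$.

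The base case $k=0$ is immediate since $\bisimddisc_0 = \botd$ by Definition~\ref{def:bisim_metric_uptok}. For the inductive step I would use the fixed-point identity $\bisimddisc_{k+1} = \Bisimulation(\bisimddisc_k)$ and inspect the derivative sets of $\copyOp(s)$ and $\copyOp(t)$ action by action. Because no rule generates $l$- or $r$-transitions for $\copyOp$, both derivative sets are empty for $a \in \{l,r\}$, giving a zero contribution. For $a \notin \{l,r,s\}$ only the first rule fires, so $\der(\copyOp(u),a) = \der(u,a)$ literally, and the contribution coincides with the corresponding Hausdorff term in $\bisimddisc_{k+1}(s,t)$, i.e.\ it is bounded by $\bisimddisc_{k+1}(s,t)$. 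If $\bisimddisc_{k+1}(s,t) \ge 1$ the inequality is trivial, so I may assume $\bisimddisc_{k+1}(s,t) < 1$; Proposition~\ref{prop:immidiate_reactive_behavior_vars_distance_less_one}.\ref{prop:immidiate_reactive_behavior_vars_distance_less_one:bisim_functional} then ensures that the initial-action sets of $s$ and $t$ coincide, so first-rule and second-rule transitions at $a=s$ can be matched consistently on both sides.

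The critical case is $a = s$. Transitions arising from the first rule are matched with first-rule transitions on the other side, contributing at most $\bisimddisc_{k+1}(s,t)$. For a second-rule transition $\copyOp(s) \trans[s] \copyOp(\mu_l) \mid \copyOp(\mu_r)$ arising from $s \trans[l] \mu_l$ and $s \trans[r] \mu_r$, image-finiteness lets me select realising $t \trans[l] \nu_l$ and $t \trans[r] \nu_r$ with $\lambda \Kantorovich(\bisimddisc_k)(\mu_l,\nu_l) \le \bisimddisc_{k+1}(s,t)$ and similarly for the right fork, giving the candidate $\copyOp(t) \trans[s] \copyOp(\nu_l) \mid \copyOp(\nu_r)$. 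The Kantorovich distance between the targets is then bounded by two invocations of Theorem~\ref{thm:Kantorovich_lifting}: first on the concave modulus $z(x,y) = x+y$ lifting the non-expansiveness of $\mid$ to distributions, and then on the concave modulus $z(x) = 2^k x$ lifting the inductive hypothesis for $\copyOp$. Chaining these yields
\[
\lambda \Kantorovich(\bisimddisc_k)(\copyOp(\mu_l) \mid \copyOp(\mu_r), \copyOp(\nu_l) \mid \copyOp(\nu_r))
\le 2^k \bigl[\lambda \Kantorovich(\bisimddisc_k)(\mu_l,\nu_l) + \lambda \Kantorovich(\bisimddisc_k)(\mu_r,\nu_r)\bigr] \le 2^{k+1} \bisimddisc_{k+1}(s,t),
\]
which closes the induction at factor $K_{k+1} = 2^{k+1}$; the opposite Hausdorff direction is symmetric.

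The main obstacle is the non-expansiveness of synchronous parallel composition at each \emph{up-to-$k$} level, which I invoked when lifting the bound on $\mid$ to the Kantorovich pseudometric for $\bisimddisc_k$: Proposition~\ref{prop:distance_nonexpansively_composed_procs}.\ref{prop:distance_nonexpansively_composed_procs:sync_parallel_comp} only establishes this for the fixed-point metric $\bisimddisc$ via a congruence-closure argument. I would discharge the required approximant version as a preparatory lemma, proved by a separate induction on $k$ showing
\[
\bisimddisc_k(u_1 \mid u_2, v_1 \mid v_2) \le \bisimddisc_k(u_1,v_1) + \bisimddisc_k(u_2,v_2),
\]
where the inductive step exploits the transfer inequality with respect to $\Bisimulation(\bisimddisc_{k-1})$ together with the previous-level non-expansiveness lifted via Theorem~\ref{thm:Kantorovich_lifting}, mirroring the structure of the main induction but without any $\copyOp$ contribution.
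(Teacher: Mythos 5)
Your proof is correct and follows essentially the same route as the paper's: induction on $k$ with Lipschitz factor $2^k$, a case split on the two rules for $\copyOp$, and, in the split-action case, two applications of Theorem~\ref{thm:Kantorovich_lifting} (once for the modulus of $\mid$, once for the inductive hypothesis on $\copyOp$), with the transfer condition for $\bisimddisc_{k+1}=\Bisimulation(\bisimddisc_k)$ supplying the matching $l$- and $r$-derivatives. The one place you are more careful than the paper is the preparatory lemma $\bisimddisc_k(u_1 \mid u_2, v_1 \mid v_2) \le \bisimddisc_k(u_1,v_1) + \bisimddisc_k(u_2,v_2)$: the paper applies the concave modulus for $\mid$ at the level of the approximants $\bisimddisc_k$ although only the fixed-point version (Proposition~\ref{prop:distance_nonexpansively_composed_procs}.\ref{prop:distance_nonexpansively_composed_procs:sync_parallel_comp}, for $\bisimddisc$) is actually proved, so your explicit auxiliary induction on $k$ discharges a step the paper leaves implicit.
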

\begin{proof}
For all $k \ge 0$, we show that the operator $\copyOp$ is $2^k$-Lipschitz continuous wrt.\ the up-to-$k$ $\lambda$-bisimilarity metric $\bisimddisc_k$, namely 
\[ 
\bisimddisc_{k}(\copyOp(s),\copyOp(t)) \le 2^{k} \bisimddisc_{k}(s,t)
\] 
holds for arbitrary terms $s,t \in \closedTerms$.
We proceed by induction over $k$.
The base case $k=0$ is immediate.
Consider the inductive step $k+1$.
The subcase $\bisimd_{k+1}(s,t) = 1$ is immediate. Consider the subcase $\bisimd_{k+1}(s,t) < 1$.
We consider now the two different rules specifying the copy operator and show that in each case whenever $\copyOp(s) \trans[a] \pi$ is derivable by some of the rules then there is a transition $\copyOp(t) \trans[a] \pi'$ derivable by the same rule s.t. $\lambda \cdot \Kantorovich(\bisimddisc_{k})(\pi,\pi') \le  2^{k+1} \bisimddisc_{k+1}(s,t)$, thus confirming the thesis.
\begin{enumerate}
\item
Assume that $\copyOp(s) \trans[a] \pi$  is derived by $s \trans[a] \pi$ with $a \in \Act \setminus\{l,r\}$.
Since $\bisimddisc_{k+1}(s,t)<1$ and $\bisimd_{k+1}$ satisfies the transfer condition of the bisimulation metrics, there exists a transition $t \trans[a] \pi'$  for a distributions $\pi'$ with $\lambda \cdot \Kantorovich(\bisimddisc_k)(\pi,\pi') \le \bisimddisc_{k+1}(s,t)$. Finally, from $t \trans[a] \pi'$ we derive $\copyOp(t) \trans[a] \pi'$.
\item
Assume that $\copyOp(s) \trans[a] \pi$  is derived by $s \trans[l] \pi_1$ and $s \trans[r] \pi_2$ with $a = s$ and $\pi = \copyOp(\pi_1) \mid \copyOp(\pi_2)$.
Since $\bisimddisc_{k+1}(s,t)<1$ and $\bisimd_{k+1}$ satisfies the transfer condition of the bisimulation metrics, there exist transitions $t \trans[l] \pi'_1$ and $t \trans[l] \pi'_2$ for distributions $\pi'_1,\pi_2'$ with $\lambda \cdot \Kantorovich(\bisimddisc_k)(\pi_1,\pi'_1) \le \bisimddisc_{k+1}(s,t)$ and  $\lambda \cdot \Kantorovich(\bisimddisc_k)(\pi_2,\pi'_2) \le \bisimddisc_{k+1}(s,t)$.
From $t \trans[l] \pi'_1$ and $t \trans[r] \pi'_2$ we derive $\copyOp(t) \trans[s] \copyOp(\pi'_1) \mid \copyOp(\pi'_2)$. 
Finally we have
\begin{align*}
& \lambda \Kantorovich(\bisimd_k)(\copyOp(\pi_1) \mid \copyOp(\pi_2),\copyOp(\pi_1') \mid \copyOp(\pi_2')) \\
\le & \lambda (  1 - (1 - \Kantorovich(\bisimd_k)(\copyOp(\pi_1) ,\copyOp(\pi_1'))) (1 - \Kantorovich(\bisimd_k)(\copyOp(\pi_2) ,\copyOp(\pi_2')))) \\
\le & \lambda (\Kantorovich(\bisimd_k)(\copyOp(\pi_1) ,\copyOp(\pi_1')) + \Kantorovich(\bisimd_k)(\copyOp(\pi_2) ,\copyOp(\pi_2'))) \\
\le & \lambda (2^k \Kantorovich(\bisimd_k)(\pi_1 ,\pi_1') + 2^k \Kantorovich(\bisimd_k)(\pi_2 ,\pi_2')) \\
\le & \lambda (2^k \bisimd_{k+1}(s,t)/\lambda + 2^k \bisimd_{k+1}(s,t)/\lambda) \\
= & 2^{k+1} \bisimd_{k+1}(s,t)
\end{align*}
with the first step by the inductive hypothesis and Theorem~\ref{thm:Kantorovich_lifting} (using the fact that the candidate modulus of continuity of operator $\mid$ given by $z(\epsilon_1,\epsilon_2)=\lambda[1 - (1-\epsilon_1/\lambda)(1-\epsilon_2/\lambda)]$ is concave), the third step again by the inductive hypothesis and  by Theorem~\ref{thm:Kantorovich_lifting} (using the fact that the candidate modulus of continuity of operator $\copyOp$ given by $z(\epsilon)=2^k \epsilon$ is concave).
\end{enumerate}
\end{proof}

\begin{thm}
The copy operator $\copyOp$ is uniformly continuous wrt.\ the discounted $\lambda$-bisimilarity metric $\bisimddisc$ \ for any $\lambda \in (0,1)$.
\end{thm}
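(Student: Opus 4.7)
The plan is to derive this theorem as an immediate combination of the two preceding results: Proposition~\ref{prop:copy-Lipschitz-continuous} tells us that $\copyOp$ is $2^k$-Lipschitz continuous wrt.\ the up-to-$k$ $\lambda$-bisimilarity metric $\bisimddisc_k$ for every $k \ge 0$ and every $\lambda \in (0,1]$, and Theorem~\ref{thm:finite_projection_Lipschitz_implies_uniform_continuity} lifts Lipschitz continuity wrt.\ all $\bisimddisc_k$ to uniform continuity wrt.\ $\bisimddisc$ whenever $\lambda < 1$. So strictly speaking there is nothing new to prove; the proof is one sentence citing these two facts.

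If I wanted to be a bit more explicit about why this combination is sound for the copy operator specifically, I would fix any $\epsilon > 0$ and, following the construction in the proof of Theorem~\ref{thm:finite_projection_Lipschitz_implies_uniform_continuity}, pick $m \in \N$ large enough that $\lambda^m < \epsilon$ (which is possible since $\lambda < 1$), and then choose any $\delta$ with $0 < \delta < (\epsilon - \lambda^m)/2^m$. For any $s,t \in \closedSTerms$ with $\bisimddisc(s,t) < \delta$, combining Proposition~\ref{prop:bisim_distance_vs_k_projection_bisim_distance}, the inequality $\bisimddisc_m \sqsubseteq \bisimddisc$, and the $2^m$-Lipschitz bound from Proposition~\ref{prop:copy-Lipschitz-continuous} gives
\[
\bisimddisc(\copyOp(s),\copyOp(t)) \le 2^m \bisimddisc(s,t) + \lambda^m < 2^m \delta + \lambda^m < \epsilon,
\]
which establishes uniform continuity.

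The only conceptual obstacle was already dealt with in Proposition~\ref{prop:copy-Lipschitz-continuous}, namely controlling the blow-up of the Lipschitz factor with $k$ (the factor $2^k$ arises because a single split action doubles the number of copies of the argument that contribute to the distance), and in Theorem~\ref{thm:finite_projection_Lipschitz_implies_uniform_continuity}, where the discount $\lambda < 1$ is used to tame the diverging Lipschitz constants by trading them off against the geometric bound $\lambda^k$ on the tail distance. Here both ingredients conspire perfectly: although $L_k = 2^k$ grows without bound, for any target tolerance $\epsilon$ we can first pick $m$ so that $\lambda^m$ absorbs half of $\epsilon$ and then use a sufficiently small $\delta$ to absorb the remaining half via the finite Lipschitz factor $2^m$. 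This is precisely the situation captured by uniform but not Lipschitz continuity, so no further argument beyond invoking the two cited results is needed.
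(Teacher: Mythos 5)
Your proposal is correct and matches the paper exactly: the paper's proof is the one-line citation of Proposition~\ref{prop:copy-Lipschitz-continuous} and Theorem~\ref{thm:finite_projection_Lipschitz_implies_uniform_continuity}, and your explicit $\epsilon$--$\delta$ unfolding is just the specialization of the proof of that theorem to $n=1$ and $L_m = 2^m$. Nothing further is needed.
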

\begin{proof}
Directly by Proposition~\ref{prop:copy-Lipschitz-continuous} and Theorem~\ref{thm:finite_projection_Lipschitz_implies_uniform_continuity}. 
\end{proof}

\section{Application} \label{sec:application}

\begin{figure*}[t!]
\hfill 
\begin{align*}
\procBRP(N,T,p,q) &= \procRC(N,T,p,q) \parallel_{B} \procTV, \text{ where } B = \{c(d,b) \mid d \in D, b \in \{0,1\}\} \cup \{\actAck,\actLost \} \\[1.0 ex]
\procRC(N,T,p,q) &= \Bigg[ \sum_{1 \le n \le N, n = 2k} i(n) . 
                                 \bigg(\procCH(0,T,p,q) \, ; \, \procCH(1,T,p,q) \bigg)^{\frac{n}{2}}  \\
                            &  \qquad  \qquad + \\
                            &  \quad \; \sum_{1 \le n \le N, n = 2k+1} i(n) .
                               \bigg(\bigg( \procCH(0,T,p,q) \, ; \, \procCH(1,T,p,q) \bigg)^{\frac{n-1}{2}} \! ; 
                                 \procCH(0,T,p,q) \bigg) \Bigg] ; \\[1.0 ex]
                           & \quad \quad \actResOk . \varepsilon \\[2.0 ex]
\procCH(b,t,p,q) &= \sum_{d \in D} i(d).\procCHprime(d,b,t,p,q) \\[2.0 ex]
\procCHprime(d,b,t,p,q) &= \begin{cases}
	(\bot. \, \procCHprime(d,b,t-1,p,q)) \; \oplus_p \; (c(d,b). \procCHH(d,b,t,p,q)) & \text{if } t>0 \\
	\actResNok & \text{if } t=0 
\end{cases} \\[2.0 ex]
\procCHH(d,b,t,p,q) &= \begin{cases}
(\actLost. \procCH'(d,b,t-1,p,q))
\; \oplus_q \; (\actAck. \varepsilon) & \text{if } t>0 \\
\actResNok & \text{if } t=0 
\end{cases} \\[2.0 ex]
TV &= \bigg[ \bigg( \bigg(\sum_{d \in D} c(d,1). (\actAck. \varepsilon+ \actLost. \varepsilon) \bigg)^*
  \bigg(\sum_{d \in D} c(d,0). o(d). (\actAck. \varepsilon + \actLost. \varepsilon)\bigg) \bigg) \, ;\\
     & \; \; \quad \bigg( \bigg(\sum_{d \in D} c(d,0). (\actAck. \varepsilon + \actLost. \varepsilon) \bigg)^*
   \bigg(\sum_{d \in D} c(d,1). o(d). (\actAck. \varepsilon + \actLost. \varepsilon)  \bigg) \bigg) \bigg]^\omega 
\end{align*}
\hfill 
\caption{Specification of the Bounded Retransmission Protocol}
\label{fig:brp_spec}
\end{figure*}

To advocate both uniform continuity as adequate property for compositional reasoning as well as bisimulation metric semantics as a suitable distance measure for performance validation of communication protocols, we exemplify the discussed compositional reasoning method by analyzing  the bounded retransmission protocol (BRP) as a case study.

The BRP allows us to transfer streams of data from a sender (e.g.\ a remote control RC) to a receiver (e.g.\ a TV).
The RC tries to send to the TV a stream of $n$ data, $d_0,\dots, d_{n-1}$, with each $d_i$ a member of the finite data domain $D$. The length $n$ of the stream is bounded by a given $N$. Each 
datum $d_i$ is sent separately and has probability $p$ to get lost. When the TV receives 
a datum $d_i$, it sends back 
an acknowledgment message to the RC, which may also get lost, with probability $q$. 
If the RC does not receive the acknowledgment for datum $d_i$ within a given time bound, it assumes that $d_i$ got lost and retries to transmit it. However, the maximal number of attempts for $d_i$ is a given $T$, meaning that $T$ failures for any datum $d_i$ imply the failure of the whole transmission. Since also the acknowledgment message may get lost, it may happen that the RC sends more than once the same datum $d_i$ notwithstanding that it was correctly received by the TV. Therefore, the RC attaches a control bit $b$ to each datum $d_i$ that it sends to the TV, s.t. the TV can recognize if this datum is original or already received. Data items at even positions, i.e.\ $d_{2k}$ for some $k\in \N$, get control bit $0$ attached, and data items at odd positions, i.e\ $d_{2k+1}$ for some $k\in \N$, get control bit $1$ attached.

The BRP 
is specified in Figure~\ref{fig:brp_spec}. Our specification adapts the nondeterministic process algebra specification of~\cite{Fok07} by refining the configuration of lossy channels. While in the nondeterministic setting a lossy channel (nondeterministically) either successfully transmits a datum $d_i$ or loses it, we attached a success and failure probability to this choice. The protocol specification $\procBRP(N,T,p,q)$ is parametrized by the quadruple $(N,T,p,q)$, with $N$ denoting the maximum length of the data stream, $T$ denoting how often a single datum may be retransmitted, $p$ the probability that a single attempt to transmit a datum may fail, and $q$ the probability that the acknowledgment may fail. 
The term $\procBRP(N,T,p,q)$ represents a system consisting of the RC interface to the TV modeled as process $\procRC(N,T,p,q)$, the TV interface to the RC modeled as process $\procTV$, and the channels $\procCH(b, t, p, q)$ for data transmission and $\procCHH(d,b,t,p,q)$ for acknowledgment.

The processes $\procRC(N,T,p,q)$ and $\procTV$ synchronize over the actions: 
\begin{enumerate}[label=(\roman*)]
\item $c(d,b)$, with $d \in D$ and $b \in \{0,1\}$, 
modeling the correct transmission of datum $d \in D$ and control bit $b \in \{0,1\}$ 
 from the RC to the TV;
\item $\actAck$, modeling the correct transmission of the acknowledgment message from the TV to the RC, and 
\item $\actLost$, used to model the timeout due to loss of the acknowledgment message.
\end{enumerate}
Timeout due to the loss of pair $(d,b)$ is modeled by action $\bot$ by the RC.

The process $\procRC(N,T,p,q)$ starts by receiving the size $n \le N$ of the data stream by some other RC component, by means of action $i(n)$. Then, for $n$ times it reads the datum $d_i$ from some other RC components by means of action $i(d)$ and tries to send it to the $\procTV$. If all $n$ data are sent successfully, then the other RC components are notified by means of action $\actResOk$. 
In case of $T$ failures for one datum, the whole
transmission fails and 
the other RC components are notified by means of action $\actResNok$. 
If the process $\procTV$ receives a pair $(d,b)$ from $\procRC(N,T,p,q)$ by action $c(d,b)$, then, if the datum $d$ is original, namely $b$ is the expected control bit, then $d$ is sent to the other TV components by means of action $o(d)$, otherwise $(d,b)$ is ignored.

To
advocate bisimulation metric semantics as a suitable distance measure for performance validation of communication protocols we 
translate performance properties of a BRP implementation with lossy channels $\procBRP(N,T,p,q)$ to the bisimulation distance between such an implementation and the specification  with perfect channels $\procBRP(N,T,0,0)$. 
In the following we assume that $\lambda=1$, namely no discount.

\begin{prop}\label{prop:relation_performance_prop_bisim_dist}
Let $N,T \in \N$ and $p,q \in [0,1]$.
\begin{enumerate}
\item \label{prop:relation_performance_prop_bisim_dist:BRP}
The bisimulation distance $\bisimd(\procBRP(N,T,0,0),\procBRP(N,T,p,q))=\epsilon$ relates 
as follows to the protocol performance properties:
\begin{enumerate}[label=\({\alph*}] 
	\item \label{prop:relation_performance_prop_bisim_dist:BRP_no_retry}
The likelihood that $N$ data items are sent and acknowledged without
any retry (this means $\procBRP(N,T,p,q)$ behaves as $\procBRP(N,T,0,0)$) is $1-\epsilon$.
	\item \label{prop:relation_performance_prop_bisim_dist:BRP_k_retry} The likelihood that $N$ data items are sent and acknowledged with exactly $k$ retries, for some $0 \le k \le N\cdot T$, is $(1-\epsilon) (1 - (1 - \epsilon)^{1/N})^k$.
	\item \label{prop:relation_performance_prop_bisim_dist:BRP_atmostk_retry}  The likelihood that $N$ data items are sent and acknowledged with at most $k \le N\cdot T$ retries is $(1- \epsilon) \frac{1 - (1 - (1 - \epsilon)^{1/N})^{k+1}}{(1-\epsilon)^{1/N}}$.
	\item \label{prop:relation_performance_prop_bisim_dist:BRP_n_items} 
The likelihood that at least $n \le N$ of the $N$ data items are sent and acknowledged is $(1- \epsilon) \frac{1 - (1 - (1 - \epsilon)^{1/n})^{nT+1}}{(1-\epsilon)^{1/n}}$.
	\item \label{prop:relation_performance_prop_bisim_dist:BRP_N_items}  The likelihood that all $N$ items are sent and acknowledged is $(1- \epsilon) \frac{1 - (1 - (1 - \epsilon)^{1/N})^{N\cdot T + 1}}{(1-\epsilon)^{1/N}}$.
\end{enumerate}

\item \label{prop:relation_performance_prop_bisim_dist:CH} 
The bisimulation distance $\bisimd(\procCH(b,T,0,0),\procCH(b,T,p,q))=\delta$ relates as follows to the channel performance properties:
\begin{enumerate}[label=\({\alph*}]
	\item The likelihood that one datum is sent and acknowledged without any retry is $1 - \delta$.
	\item \label{prop:relation_performance_prop_bisim_dist:CH_k_retries} The likelihood that one datum is sent and acknowledged with exactly $k$ retries, for some $k \le T$, is $(1 - \delta) \cdot \delta^{k}$.
	\item \label{prop:relation_performance_prop_bisim_dist:CH_<=k_retries}  The likelihood that one datum is sent and acknowledged with at most $k$ retries, for some $k \le T$, is $1 - \delta^{k+1}$.
%
                   \item The likelihood that one datum is sent and acknowledged is $1 - \delta^{T+1}$.
\end{enumerate}
\end{enumerate}
\end{prop}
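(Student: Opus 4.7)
The plan is to first compute the bisimulation distance for a single channel in part~(2), and then lift to the full BRP in part~(1) by combining the compositionality bounds from Sections~\ref{sec:distance_nonrec_procs} and~\ref{sec:distance_rec_procs}. Throughout I work in the non-discounting setting $\lambda=1$ as stipulated in the statement.

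For part~(2), I would observe that the perfect channel $\procCH(b,T,0,0)$ behaves deterministically as $i(d).c(d,b).\actAck.\varepsilon$, whereas the lossy channel $\procCH(b,T,p,q)$ additionally exposes $\bot$-labelled and $\actLost$-labelled transitions after the input action $i(d)$. On each attempt the lossy channel places mass $(1-p)(1-q)$ on a branch that matches the perfect channel exactly, while the complementary mass $r := 1-(1-p)(1-q)$ flows through transitions that the perfect channel cannot mimic at all. By Proposition~\ref{prop:immidiate_reactive_behavior_vars_distance_less_one} these unmatched branches carry maximal distance $1$, and writing the bisimulation transfer condition of Definition~\ref{def:bisim_metric} as a fixed-point equation on $\delta_T := \bisimd(\procCH(b,T,0,0),\procCH(b,T,p,q))$ immediately yields $\delta_T = r$ for every $T \ge 1$, so $\delta = 1-(1-p)(1-q)$. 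The four claims (2a)--(2d) then follow by elementary probability on a geometric variable truncated at $T+1$ trials: no retry has probability $1-\delta$, exactly $k$ retries has $(1-\delta)\delta^k$, at most $k$ retries telescopes to $1-\delta^{k+1}$, and eventual success within $T+1$ attempts has $1-\delta^{T+1}$.

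For part~(1), I would argue that, modulo the interface actions $i(n)$ and $\actResOk$, after input $n$ the synchronized system $\procRC(N,T,p,q) \parallel_B \procTV$ is bisimilar to a sequential composition of $n$ independent channel interactions of the form $\procCH(b,T,p,q)$. Combining the sequential composition bound of Proposition~\ref{prop:distance_nonexpansively_composed_procs}.\ref{prop:distance_nonexpansively_composed_procs:sequential_comp}, the finite iteration bound of Proposition~\ref{prop:distance_finite_recursion}.\ref{prop:distance_finite_recursion:finite_iteration} at $\lambda = 1$ (which simplifies to $\bisimd(s^N,t^N) \le 1-(1-\bisimd(s,t))^N$), and non-expansiveness of the outer prefix and alternative contexts provided by Theorem~\ref{thm:non_expansiveness}, I obtain $\epsilon \le 1-(1-\delta)^N$. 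Tightness is promoted to equality by exhibiting worst-case schedulers analogous to the witness processes of Propositions~\ref{prop:optimality_distances_nonext_nonexp_composed_procs} and~\ref{prop:optimality_distances_rec_proc_combinators}. Inverting gives $1-\delta = (1-\epsilon)^{1/N}$, and substituting this identity into the channel-level formulas from part~(2) produces (1a)--(1e) by direct algebraic manipulation: (1a) is immediate, (1b)--(1c) aggregate the $N$ independent geometric retry counts along a single execution path, (1d) restricts the aggregation to any $n \le N$ of the items, and (1e) specializes (1c) at $k = N \cdot T$.

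The main obstacle will be justifying that the reduction of the full BRP to an $N$-fold sequential product of channels yields an equality rather than a mere upper bound. Two points require attention: (i) the CSP-style synchronization $\parallel_B$ with $\procTV$ must be shown not to absorb or amplify the channel distance, which holds because $\procTV$ reacts identically to perfect and lossy channels on every shared action and so contributes zero distance on its side of the composition; and (ii) the nondeterministic choice over the stream length $n$ in $\procRC$ must be shown not to introduce additional distance, which holds because both sides select the same $i(n)$-labelled branch. Once these two facts are in place the compositional upper bounds are attained on the nose, and the remaining performance claims reduce to the arithmetic identities established in part~(2).
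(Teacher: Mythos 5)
Your proposal is correct and follows essentially the same route as the paper: identify $\delta = 1-(1-p)(1-q)$ and $\epsilon = 1-(1-\delta)^N$, and then read off every performance figure as arithmetic on a truncated geometric distribution (exactly $k$ retries $=(1-\delta)\delta^{k}$, at most $k$ retries by telescoping the sum, the remaining items by instantiating $k$ and $N$). The only difference is one of emphasis: the paper simply asserts the probabilistic reading of $\epsilon$ and $\delta$ (deferring the corresponding distance computations to Proposition~\ref{prop:brp_distance_spec_impl}), whereas you additionally derive these identities from the bisimulation transfer condition and the compositionality bounds, including the tightness concern the paper leaves implicit --- a welcome elaboration, but not a different argument.
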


\begin{proof}

\begin{enumerate}
\item First we note that $((1-p)(1-q))^N$ is the likelihood that $N$ data items are sent and acknowledged without any retry.
\begin{enumerate}[label=\({\alph*}] 
	\item \label{proof_prop:relation_performance_prop_bisim_dist:BRP_no_retry}
The result can be understood by observing that $\epsilon = 1-((1-p)(1-q))^N$ is the likelihood that at least one retry is needed to transmit the stream of $N$ data.
	\item \label{proof_prop:relation_performance_prop_bisim_dist:BRP_k_retry} 
The result can be understood by observing that  $(1-\epsilon) (1 - (1 - \epsilon)^{1/N})^k$ is the conjunct probability to have exactly $k$ failures in sending or acknowledging a datum (probability $(1 - (1 - \epsilon)^{1/N})^k$), and to have $N$ successes (probability $(1-\epsilon)$).
	\item \label{proof_prop:relation_performance_prop_bisim_dist:BRP_atmostk_retry}  
The result can be understood by observing that $(1- \epsilon) \frac{1 - (1 - (1 - \epsilon)^{1/N})^{k+1}}{(1-\epsilon)^{1/N}} = \sum_{i=0}^k (1-\epsilon) (1 - (1 - \epsilon)^{1/N})^i$, where $(1-\epsilon) (1 - (1 - \epsilon)^{1/N})^i$ is the likelihood to send the $N$ data with exactly $i$ retries (see item \ref{proof_prop:relation_performance_prop_bisim_dist:BRP_k_retry}).
	\item \label{proof_prop:relation_performance_prop_bisim_dist:BRP_n_items} 
This is item~\ref{prop:relation_performance_prop_bisim_dist:BRP_atmostk_retry} with $N$ instantiated with $n$ and $k$ instantiated with $n\cdot T$.
	\item \label{proof_prop:relation_performance_prop_bisim_dist:BRP_N_items} 
This is item~\ref{prop:relation_performance_prop_bisim_dist:BRP_atmostk_retry} with $k$ instantiated with $N\cdot T$.
\end{enumerate}

\item 
First we note that the likelihood that a single datum requires no retry is $(1-p)(1-q)$.
\begin{enumerate}[label=\({\alph*}]
	\item The result can be understood by observing that $\delta = 1-(1-p)(1-q)$ is the likelihood that a single datum requires at least one retry to be successfully transmitted and acknowledged.
	\item The result can be understood by observing that $(1-\delta) \cdot \delta^k = (1-p)(1-q) \cdot (1-(1-p)(1-q))^k$ is the conjunct probability to have $k$ failures (probability $(1-(1-p)(1-q))^k$) followed by a successful transmission (probability $(1-p)(1-q)$). 
	\item The result can be understood by observing that  $1 - \delta^{k+1} = \sum_{i=0}^{k} (1-\delta) \cdot \delta^i$, where $(1-\delta) \cdot \delta^i$ is the likelihood that one datum is sent and acknowledged with exactly $i$ retries (see item \ref{prop:relation_performance_prop_bisim_dist:CH_k_retries}).
               \item This is item \ref{prop:relation_performance_prop_bisim_dist:CH_<=k_retries} istantiated with $k = T$.
\end{enumerate}
\end{enumerate}
\end{proof}

Now we show that by applying the compositionality results given in the previous sections  (Propositions~\ref{prop:distance_nonextensively_composed_procs},~\ref{prop:distance_nonexpansively_composed_procs},~\ref{prop:distance_finite_recursion})
we can relate the bisimulation distance between the specification 
with perfect channels $\procBRP(N,T,0,0)$  and some implementation with lossy channel $\procBRP(N,T,p,q)$ of the entire protocol 
with the distances between the specification and some implementation of its respective components.
On the one hand, this allows us to derive from specified performance properties of the entire protocol individual performance requirements of its components (compositional verification). 
On the other hand, this allows us to infer from performance properties of the protocol components suitable performance guarantees on the entire protocol (compositional specification).
We show also that the same compositionality results allow us to relate the distance between the specification and some implementation with lossy channel of the entire protocol or some components to the parameters of the system.

\begin{prop} \label{prop:brp_distance_spec_impl}
Let $N,T \in \N$ and $p,q \in [0,1]$. For all 
$b \in \{0,1\}$ it holds:
\begin{enumerate}[label=\({alph*}]
	\item \label{prop:brp_distance_spec_impl:brp}
	  $\displaystyle\bisimd(\procBRP(N,T,0,0),\procBRP(N,T,p,q)) \le 1-(1-\bisimd(\procCH(b,T,0,0),\procCH(b,T,p,q)))^N$; 
	\item \label{prop:brp_distance_spec_impl:ch} 
                                $\bisimd(\procCH(b,T,0,0),\procCH(b,T,p,q)) \le 1-(1-p)(1-q)$.
                 \item \label{prop:brp_distance_spec_impl:brp2} $\displaystyle\bisimd(\procBRP(N,T,p,q), \procBRP(N,T,0,0)) \le 1-((1-p)(1-q))^N$
\end{enumerate}
\end{prop}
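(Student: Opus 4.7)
The plan is to prove~(b) first, propagate the resulting channel distance through the compositional structure of the BRP to obtain~(a), and derive~(c) from (a) and (b) by symmetry of $\bisimd$. Throughout, since the proposition fixes $\lambda = 1$, the bounds from Propositions~\ref{prop:distance_nonexpansively_composed_procs} and~\ref{prop:distance_finite_recursion} simplify to the multiplicative form $1 - \prod_i (1 - \bisimd_i)$ that matches the targeted estimates.

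For part~(b), I would first use non-extensiveness of non-deterministic alternative composition over $d \in D$ and of the action prefix $i(d)$ (Theorem~\ref{thm:non_extensiveness}) to reduce to
\[
\bisimd(\procCH(b,T,0,0), \procCH(b,T,p,q)) \le \max_{d \in D} \bisimd(\procCHprime(d,b,T,0,0), \procCHprime(d,b,T,p,q)).
\]
In the perfect case the probabilistic guards degenerate, so $\procCHprime(d,b,T,0,0)$ is bisimilar to the deterministic successful transmission $c(d,b).\actAck.\varepsilon$. The lossy term $\procCHprime(d,b,T,p,q)$ splits, via the probabilistic choice of weight $p$ and the probabilistic choice of weight $q$ inside $\procCHH$, into a successful branch of total weight $(1-p)(1-q)$ that behaves identically to the perfect channel and so contributes distance $0$, and a failure branch of complementary weight $1-(1-p)(1-q)$, trivially at distance at most $1$ from the perfect channel since the latter has no continuation past its single successful attempt. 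Composing these contributions through Proposition~\ref{prop:kantorovich_lifting}.\ref{prop:kantorovich_lifting:splitting} then yields the $T$-independent bound $1-(1-p)(1-q)$.

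For part~(a), set $\delta = \bisimd(\procCH(b,T,0,0), \procCH(b,T,p,q))$; by the symmetry of $\procCH$ in the control bit, the same $\delta$ bounds the distance for both $b=0$ and $b=1$. Since $\procTV$ is identical in both BRP versions, the CSP-like parallel composition bound (Proposition~\ref{prop:distance_nonexpansively_composed_procs}.\ref{prop:distance_nonexpansively_composed_procs:csp_parallel_comp}) combined with $\bisimd(\procTV,\procTV)=0$ reduces the problem to bounding $\bisimd(\procRC(N,T,0,0), \procRC(N,T,p,q))$. Inside $\procRC$, each summand is of the form $i(n).(\text{body}_n)$ where $\text{body}_n$ is $(\procCH(0); \procCH(1))^{\lfloor n/2 \rfloor}$ plus a trailing $\procCH(0)$ in the odd case, and the whole alternative is post-composed with the common suffix $\actResOk.\varepsilon$. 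Applying the sequential composition bound (Proposition~\ref{prop:distance_nonexpansively_composed_procs}.\ref{prop:distance_nonexpansively_composed_procs:sequential_comp}) and the finite iteration bound (Proposition~\ref{prop:distance_finite_recursion}.\ref{prop:distance_finite_recursion:finite_iteration}) in their multiplicative $\lambda=1$ form, the body distance for a given $n$ is bounded by $1-(1-\delta)^n$ in both parity cases; non-extensiveness of the outer $+$ then takes the maximum over $n \le N$, which since $1-(1-\delta)^n$ is monotone in $n$ equals $1-(1-\delta)^N$, yielding~(a). Finally, (c) follows by symmetry of $\bisimd$ and substitution of (b) into (a): $1 - (1-(1-(1-p)(1-q)))^N = 1 - ((1-p)(1-q))^N$.

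The main obstacle I expect is the $T$-independence in~(b): a direct recursive argument on $T$ would accumulate error with each retry and produce a bound growing in $T$, whereas the tight estimate requires recognizing that the entire failure sub-tree, regardless of retry depth, is already at distance at most $1$ from the perfect channel, so only the weight $1-(1-p)(1-q)$ of the failure event enters the estimate and $T$ drops out entirely.
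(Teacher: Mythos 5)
Your proposal is correct and follows essentially the same route as the paper's proof: part~(a) via the CSP-parallel bound with $\bisimd(\procTV,\procTV)=0$ and then the prefix, alternative-composition, sequential-composition and finite-iteration bounds in their multiplicative $\lambda=1$ form, part~(b) via the convexity of the Kantorovich lifting over the probabilistic branches, and part~(c) by combining the two. The only cosmetic difference is in~(b), where you aggregate the success branch into a single weight $(1-p)(1-q)$ while the paper applies the action-prefix bound twice to get $p+(1-p)q$; these are algebraically identical, and your closing observation that the failure subtree is simply bounded by $1$ (so $T$ drops out) is exactly what the paper's estimate does.
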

\begin{proof}
Consider case~(\ref{prop:brp_distance_spec_impl:brp}).
By Proposition~\ref{prop:distance_nonexpansively_composed_procs}.\ref{prop:distance_nonexpansively_composed_procs:csp_parallel_comp} we obtain $\displaystyle\bisimd(\procBRP(N,T,0,0),\procBRP(N,T,p,q)) \le  \bisimd(\procRC(N,T,0,0),\procRC(N,T,p,q)) +(1-\bisimd(\procRC(N,T,0,0),\procRC(N,T,p,q))) \bisimd(\procTV,\procTV)$.
By $\bisimd(\procTV,\procTV) =0$ we get
$\displaystyle\bisimd(\procBRP(N,T,0,0),\procBRP(N,T,p,q)) \le  \bisimd(\procRC(N,T,0,0),\procRC(N,T,p,q))$.
Then, by applying Propositions~\ref{prop:distance_nonextensively_composed_procs}.\ref{prop:distance_nonextensively_composed_procs:action_prefix},
~\ref{prop:distance_nonextensively_composed_procs}.\ref{prop:distance_nonextensively_composed_procs:nondet_alt_comp},
~\ref{prop:distance_nonexpansively_composed_procs}.\ref{prop:distance_nonexpansively_composed_procs:sequential_comp},
 and~\ref{prop:distance_finite_recursion}.\ref{prop:distance_finite_recursion:finite_iteration} we infer $\bisimd(\procRC(N,T,0,0),\procRC(N,T,p,q)) \le 1-(1-\bisimd(\procCH(b,T,0,0),\procCH(b,T,p,q)))^N$.

Case~(\ref{prop:brp_distance_spec_impl:ch}) follows directly from Proposition~\ref{prop:distance_nonextensively_composed_procs}.
More precisely, by Proposition~\ref{prop:distance_nonextensively_composed_procs}.\ref{prop:distance_nonextensively_composed_procs:action_prefix} we infer both inequalities $\bisimd(\procCH(b,t,p,q),\procCH(b,t,0,0)) \le p + (1-p) \bisimddisc(\procCHH(d,b,t,p,q),\procCHH(d,b,t,0,0))$ and $\bisimddisc(\procCHH(d,b,t,p,q),\procCHH(d,b,t,0,0)) \le q$, which give $\bisimd(\procCH(b,T,0,0),\procCH(b,T,p,q)) \le p + (1-p)q = 1-(1-p)(1-q)$.

Case~(\ref{prop:brp_distance_spec_impl:brp2}) follows directly from cases~(\ref{prop:brp_distance_spec_impl:brp}) and~(\ref{prop:brp_distance_spec_impl:ch}).
\end{proof}

To advocate uniform continuity as adequate property for compositional reasoning, we show that the uniform continuity of  
process combinators in $\procBRP(N,T,p,q)$ 
allows us to 
relate the distance between this implementation and the specification $\procBRP(N,T,0,0)$ (which relates by Proposition~\ref{prop:relation_performance_prop_bisim_dist} to performance properties of the entire protocol) to the concrete parameters $p,q$ and $N$ of the system. In detail, 
by Theorems~\ref{thm:non_extensiveness}, \ref{thm:non_expansiveness}, \ref{thm:continuous_ops_nondiscounting_bisim} 
we can derive that 
$\bisimd(\procBRP(N,T,p,q),\procBRP(N,T,0,0)) \le N/2 \cdot (\bisimd(\procCH(0,T,p,q), \procCH(0,T,0,0))  + \bisimd(\procCH(1,T,p,q), \procCH(1,T,0,0)))$
(see the proof of Proposition~\ref{prop:continuous_reasoning_brp_to_params} below).
Then, by Proposition~\ref{prop:brp_distance_spec_impl} we can derive 
 $N/2 \cdot (\bisimd(\procCH(0,T,p,q), \procCH(0,T,0,0))  + \bisimd(\procCH(1,T,p,q), \procCH(1,T,0,0))) \le N(1-(1-p)(1-q))$.
Summarizing, we can conclude that $\bisimd(\procBRP(N,T,p,q),\procBRP(N,T,0,0)) \le N(1-(1-p)(1-q))$, which allows us to infer an upper bound to $\bisimd(\procBRP(N,T,p,q),\procBRP(N,T,0,0))$ from suitable constraints for $p$ and $q$, as formalized in the following result. 

\
\begin{prop} \label{prop:continuous_reasoning_brp_to_params}
Let $N,T \in \N$ and $p,q \in [0,1]$. For all $\epsilon \ge 0$, $p+q-pq < \epsilon/N$ ensures
\[
	\bisimd(\procBRP(N,T,p,q),\procBRP(N,T,0,0)) < \epsilon
\]	
\end{prop}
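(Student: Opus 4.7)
My plan is to establish the chain
\[
\bisimd(\procBRP(N,T,p,q),\procBRP(N,T,0,0)) \;\le\; N\bigl(1-(1-p)(1-q)\bigr) \;=\; N(p+q-pq) \;<\; \epsilon,
\]
where the last strict inequality is precisely the hypothesis. The middle equality is pure algebra; the real content is the first inequality, which I would obtain by propagating the per-channel distances $\delta_b := \bisimd(\procCH(b,T,p,q),\procCH(b,T,0,0))$ through the syntactic structure of $\procBRP$ using the compositionality bounds of Sections~\ref{sec:distance_nonrec_procs}--\ref{sec:distance_rec_procs}, and then invoking Proposition~\ref{prop:brp_distance_spec_impl}(b) to bound each $\delta_b$ by $1-(1-p)(1-q)$.

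Working from the outside in, I would first apply Proposition~\ref{prop:distance_nonexpansively_composed_procs}(d) to the outer CSP-like parallel composition $\_ \parallel_B \_$, observing that the $\procTV$ factor is syntactically identical on both sides and therefore contributes distance $0$; this reduces the problem to bounding $\bisimd(\procRC(N,T,p,q),\procRC(N,T,0,0))$. Inside $\procRC$, non-expansiveness of sequential composition (Proposition~\ref{prop:distance_nonexpansively_composed_procs}(a)) discards the trailing $;\actResOk.\varepsilon$, while non-extensiveness of action prefix and nondeterministic alternative composition (Theorem~\ref{thm:non_extensiveness}) reduces the analysis to a single branch of fixed stream length $n\le N$. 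For each such branch, the Lipschitz bound $\bisimd(s^m,t^m)\le m\cdot\bisimd(s,t)$ on finite iteration from Theorem~\ref{thm:continuous_ops_nondiscounting_bisim}, together with repeated non-expansiveness of sequential composition, yields a per-branch bound of at most $n\cdot\max(\delta_0,\delta_1)$, and hence $N\cdot\max(\delta_0,\delta_1) \le N(p+q-pq)$ after invoking Proposition~\ref{prop:brp_distance_spec_impl}(b).

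The main obstacle is the bookkeeping for the odd-length branch, where the expression $(\procCH(0,\ldots);\procCH(1,\ldots))^{(n-1)/2};\procCH(0,\ldots)$ contains an unpaired final channel factor outside the iteration operator. I would handle this by counting the $n$ channel occurrences directly and noting that each contributes at most $\max(\delta_0,\delta_1)$ under non-expansive sequential composition, so the intended aggregate bound $n\max(\delta_0,\delta_1)\le N\max(\delta_0,\delta_1)$ still holds uniformly across the even and odd arms of the outer sum. Once that case is handled, the remaining steps are mechanical substitution of the already-established compositionality inequalities, and the final strict inequality follows immediately from the assumption on $p+q-pq$.
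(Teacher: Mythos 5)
Your proposal is correct and follows essentially the same route as the paper's proof: peel off the outer $\parallel_B$ with non-expansiveness (the $\procTV$ factor contributing $0$), use non-extensiveness of prefix and alternative composition to isolate a single branch, apply the Lipschitz factor $n$ for finite iteration together with non-expansiveness of sequential composition, and close with the channel bound $1-(1-p)(1-q)$ from Proposition~\ref{prop:brp_distance_spec_impl}. The only cosmetic difference is that you aggregate the per-branch cost as $n\max(\delta_0,\delta_1)$ and treat the odd-length arm by directly counting channel occurrences, where the paper computes $N/2\,(\delta_0+\delta_1)$ for even $N$ and dismisses the odd case as analogous; both yield the same bound $N(1-(1-p)(1-q))$.
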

\begin{proof}
Assume $N$ is even. Then:
\begin{align*}
	& \bisimd(\procBRP(N,T,p,q),\procBRP(N,T,0,0)) \\
\le & \bisimd(\procRC(N,T,p,q),\procRC(N,T,0,0)) + \bisimd(TV,TV) & \text{(Theorem~\ref{thm:non_expansiveness})} \\
= & \bisimd(\procRC(N,T,p,q),\procRC(N,T,0,0)) \\
\le & \bisimd((\procCH(0,T,p,q);\procCH(1,T,p,q))^{N/2},(\procCH(0,T,0,0);\procCH(1,T,0,0))^{N/2}) & \text{(Theorem~\ref{thm:non_extensiveness})} \\
\le & N/2 \cdot \bisimd(\procCH(0,T,p,q);\procCH(1,T,p,q),\procCH(0,T,0,0);\procCH(1,T,0,0)) & \text{(Theorem~\ref{thm:continuous_ops_nondiscounting_bisim})} \\
\le & N/2 \cdot (\bisimd(\procCH(0,T,p,q), \procCH(0,T,0,0)) + \bisimd(\procCH(1,T,p,q), \procCH(1,T,0,0))) & \text{ (Theorem~ \ref{thm:non_expansiveness})} \\
= & N(1-(1-p)(1-q))
\end{align*}
where in the third inequality 
we use the Lipschitz factor $n$ for the operator $\_^n$ that we obtained in the proof of Theorem~\ref{thm:continuous_ops_nondiscounting_bisim}. 
From $\bisimd(\procBRP(N,T,p,q),\procBRP(N,T,0,0)) \le N(1-(1-p)(1-q))$ the thesis follows.
The case that $N$ is odd is analogous.
\end{proof}

Combining Propositions~\ref{prop:relation_performance_prop_bisim_dist}~--~\ref{prop:continuous_reasoning_brp_to_params} allows us now to reason compositionally over a concrete scenario. We derive from a given performance requirement to transmit a stream of data 
the necessary performance properties of the channel components.

\begin{exa}
Consider the following scenario. We want to transmit a data stream of $N=20$ data items with at most $T=1$ retry per data item. We want to build an implementation that should satisfy the performance property `The likelihood that all 20 data items are successfully transmitted is at least $99\%$'. By applying Proposition~\ref{prop:relation_performance_prop_bisim_dist}.\ref{prop:relation_performance_prop_bisim_dist:BRP} we translate this performance property to the bisimulation distance 
$\bisimd(\procBRP(N,T,0,0),\procBRP(N,T,p,q)) \le 0.01052$ on the entire system. 
By applying Proposition~\ref{prop:brp_distance_spec_impl}.\ref{prop:brp_distance_spec_impl:brp} we derive the bisimulation distance for its channel component
$\bisimd(\procCH(b,T,0,0),\procCH(b,T,p,q) \le 0.00053$. 
By Proposition~\ref{prop:brp_distance_spec_impl}.\ref{prop:brp_distance_spec_impl:ch} this distance can be translated to appropriate parameters of the channel component, e.g.\ $p=0.0002$ and $q=0.00032$ or equivalently $p=0.020\%$ and $q=0.032\%$. Finally, Proposition~\ref{prop:relation_performance_prop_bisim_dist}.\ref{prop:relation_performance_prop_bisim_dist:CH} allows to translate the distance between the specification and implementation of the channel component back to an appropriate performance requirement, e.g.\ `The likelihood that one datum is successfully transmitted is at least 99.95\%'.
\end{exa}

\section{Conclusions} \label{sec:cmr:conclusion}

We argued that the notion of uniform continuity (Definition~\ref{def:continuous_operator}, generalizing the notions of non-expansiveness and non-extensiveness discussed by other researchers) is an appropriate property of process combinators to facilitate compositional reasoning w.r.t.\ bisimulation metric semantics. We showed that all standard (non-recursive and recursive) process algebra operators are uniformly continuous (Theorems~\ref{thm:non_extensiveness},~\ref{thm:non_expansiveness},~\ref{thm:continuous_ops_nondiscounting_bisim},~\ref{thm:continuous_ops_discounting_bisim}). In addition, we provided for all standard process algebra operators tight bounds on the distance between the composed processes (Propositions~\ref{prop:distance_nonextensively_composed_procs},~\ref{prop:distance_nonexpansively_composed_procs},~\ref{prop:distance_finite_recursion}). We exemplified how these results can be used to reason compositionally over protocols. In fact, they allow us to derive from performance requirements on the entire system appropriate performance properties of the respective components, and in reverse to induce from performance assumptions on the system components performance guarantees on the entire system.

We remark that the abstraction operator of probabilistic process algebras (that hides actions and makes them observable as non-distinguishable $\tau$-actions) is non-extensive. However, the power of abstraction and hiding can only be utilized by using also a behavioral semantics that treats the $\tau$-actions respectively as internal actions. 
We leave the development of weak and branching bisimulation metrics 
and the analysis of process algebra operators for those metrics as future work. A first analysis for weak bisimulation metric and observational congruence weak bisimulation metric (weak bisimulation metric with kernel equivalence being the largest congruence w.r.t.\ CSS operators contained in weak bisimulation equivalence) may be found in~\cite{DJGP02}.

The metric reasoning approach exemplified in Section~\ref{sec:application} is a sound method to reason compositionally over systems. 
However, 
the distance between composed systems might not be tight.
Let $C[x]$ be an open term describing a composed system with $x$ the placeholder for a subsystem. 
Given subsystems $s$ and $s'$,  
the distance $\bisimddisc(C[s],C[s'])$ might be below the composition of the compositionality properties of the operators in $C$ if some of the differences in the behaviors between $s$ and $s'$ do not induce different behaviors between $C[s]$ and $C[s']$. 
To exemplify this effect, consider the context $C[x]=x \mid b.0$ and subsystems $s=a.0$ and $s'=a.([1-\epsilon/\lambda]\varepsilon \oplus [\epsilon_i/\lambda]0)$ . Clearly $\bisimddisc(s,s')=\epsilon$. 
Then the compositional analysis gives $\bisimddisc(C[s],C[s']) \le \epsilon$.  However, $\bisimddisc(C[s],C[s'])=0$ because the behavioral distance between $s$ and $s'$ (observable only after executing action $a$) cannot be observed in the context $C[x]$ (which can only perform an action if the instances of $x$ perform action $b$).  Thus, $\bisimddisc(C[s],C[s'])=0$ since $s$ and $s'$ agree on the inability to perform action $b$.
One idea to tackle this problem is to develop the notion of context bisimulation. Given a context $C$, the $C$-bisimulation distance (bisimulation distance w.r.t.\ context $C$) between $s$ and $s'$ would measure only that degree of the bisimulation distance between $s$ and $s'$ that would induce different behavior between $x$ instantiated by $s$ and $x$ instantiated by $s'$. Using the notation $\bisimddisc_C$ for the $C$-bisimulation distance this would give the behavioral distance $\bisimddisc_C(s,s')=0$ (since $C$ derives only behavior from an initial $b$-move and $s$ and $s'$ agree on their inability to perform $b$-moves), while $\bisimddisc_C(b.0,b.([1-\epsilon/\lambda]\varepsilon \oplus [\epsilon_i/\lambda]0)=\epsilon$. It is clear that the context bisimulation distance is bounded by the bisimulation distance.
While it still allows for sound compositional metric reasoning it may lead to tighter bounds. We leave the detailed technical development and analysis as future work.

Another research direction is to generalize the analysis of concrete process algebra operators as discussed in this paper to general SOS rule and specification formats. 
The basic observation is that the compositionality results for the concrete probabilistic process algebra operators depend only on the specification rules of those operators, hence the question boils down
to develop SOS meta-theoretical results and appropriate rule and specification formats that guarantee that the specified operators are uniformly continuous.
In essence, we aim to develop the quantitative analogous of the well-established meta-theory for behavioral equivalence semantics~\cite{AFV01,MRG07}.
This approach has been already developed for notions of approximate probabilistic bisimulation \cite{Tin08,Tin10,GT13}.
Preliminary results show that in essence, a process combinator is uniformly continuous if the combined processes are copied only finitely many times along their evolution~\cite{GT14,GT15b,Geb15},
and more restrictive constraints guarantee the stronger compositional properties of Lipschitz continuity, non-expansiveness and non-extensiveness.
By following the \emph{divide and congruence} aproach \cite{FvGdW06,FvGdW12,GF12,FvG16,CGT16b}, formats for compositional properties can be obtained also through a suitable logical characterization of bisimilarity metric, like that in  \cite{CGT16a}.

Finally, we intend to explore further (as initiated in Section~\ref{sec:application}) the relation between various behavioral distance measures, e.g.\ 
convex bisimulation metric~\cite{DAMRS07}, 
trace metric~\cite{FL14}, and
total-variation distance based metrics~\cite{Mio14}
with performance properties of communication and security protocols. 
This will provide further practical means to apply process algebraic methods and compositional metric reasoning w.r.t.\ uniformly continuous process combinators.

\bibliographystyle{alpha}
\bibliography{lmcsGLT}

\end{document}